\newtheorem{observation}{Observation}
\spnewtheorem{clm}{Claim}{\itshape}{\rmfamily}
\newcommand{\cross}{\mathbin{\tikz [x=1.4ex,y=1.4ex,line width=.2ex] \draw (0,0) -- (1,1) (0,1) -- (1,0);}}%
\DeclarePairedDelimiter{\ceil}{\lceil}{\rceil}
\DeclarePairedDelimiter{\floor}{\lfloor}{\rfloor}
\title{An approximation algorithm for Uniform Capacitated $k$-Median problem with $1+\epsilon$ capacity violation}
\author{Jarosław Byrka\inst{1} \and
Bartosz Rybicki\thanks{Research supported by NCN 2012/07/N/ST6/03068 grant} \inst{1} \and Sumedha Uniyal\thanks{Partially supported by the ERC StG project NEWNET no.~279352.} \inst{2}} 
\authorrunning{Jaros?aw Byrka et al.} 
\institute{Institute of Computer Science, University of Wrocław, Poland,\\
\email{\{jby,bry\}@cs.uni.wroc.pl} 
\and
IDSIA, University of Lugano, Switzerland,\\
\email{sumedha@idsia.ch} 
}
\date{\today}
\begin{document}
\maketitle

\begin{abstract}
We study the Capacitated $k$-Median problem, for which all the known constant factor approximation algorithms violate either the number of facilities or the capacities.
While the standard LP-relaxation can only be used for algorithms violating one of the two by a factor of at least two, Shi Li [SODA'15, SODA'16] gave algorithms violating the number of facilities by a factor of $1+\epsilon$ exploring properties of extended relaxations. 

In this paper we develop a constant factor approximation algorithm for Uniform Capacitated $k$-Median violating only the capacities by a factor of $1+\epsilon$. The algorithm is based on a configuration LP. Unlike in the algorithms violating the number of facilities, we cannot simply open extra few facilities at selected locations. Instead, our algorithm decides about the facility openings in a carefully designed dependent rounding process.
\end{abstract}

\section{Introduction}

In capacitated $k$-median we are given a set of potential facilities $F$, capacity $u_i\in \mathbb{N}^+$ for each facility $i \in F$, a set of clients $C$, a metric distance function $d$ on $C \cup F$ and an integer $k$. The goal is to find a subset $F' \subseteq F$ of $k$ facilities to open and an assignment $\sigma: C \rightarrow F'$ of clients to the open facilities such that $|\sigma^{-1}(i)| \leq u_i$ for every $i \in F'$, so as to minimize the connection cost $\sum_{j \in C}d(j, \sigma(j))$. In the uniform capacity case, $u_i=u, \forall i \in F$.

The standard $k$-median problem, where there is no restriction on the number of clients served by a facility, can be approximated up to a constant factor~\cite{charikar1999constant, arya2004local}. The current best is the $(2.675+\epsilon)$-approximation algorithm of Byrka et al.~\cite{byrka2015improved}, which is a result of optimizing a part of the algorithm of Li and Svensson~\cite{li2013approximating}.

Capacitated $k$-median is among the few remaining fundamental optimization problems for which it is not clear if there exist constant factor approximation algorithms. All the known algorithms violate either the number of facilities or the capacities. In particular, already the algorithm of Charikar et al.~\cite{charikar1999constant} gave 16-approximate solution for uniform capacitated $k$-median violating the capacities by a factor of 3.
Then Chuzhoy and Rabani~\cite{chuzhoy2005approximating} considered general capacities and gave a 50-approximation algorithm violating capacities by a factor of 40.

Perhaps the difficulty is related to the unbounded integrality gap of the standard LP relaxation. To obtain integral solutions that are bounded w.r.t. a fractional solution to the standard LP, one has to either allow the integral solution to open twice more facilities or to violate the capacities by a factor of two. Recently, LP-rounding algorithms essentially matching these limits were obtained~\cite{aardal2015approximation, BFRS15}.

Next, Li broke this integrality gap barrier by giving a constant factor algorithm for uniform capacitated $k$-median by opening $(1+\epsilon)k$ facilities~\cite{Li15uniform}. The algorithm is based on rounding a fractional solution to an extended LP. Most recently, he gave an algorithm working with general capacities and still opening $(1+\epsilon)k$ facilities~\cite{Li16non_uniform}. This new algorithm is based on an even stronger configuration LP. Notably, each of the extended linear programs is not solved exactly by the algorithms, but rather a clever "round-or-separate" technique is applied. This technique was previously used in the context of capacitated facility location in~\cite{an2014lp}.
It essentially allows not to solve the strong LP upfront, but only to detect violated constraints during the rounding process. If a violated constraint is detected, it is returned back to the "feasibility-checking" ellipsoid algorithm. While it is not clear if the strong LP with all the constraints can be solved efficiently, it can be shown that the above described process terminates in polynomial time, see~\cite{an2014lp}.

\subsection{Our results and techniques}

We give an algorithm for uniform capacitated $k$-median rounding a fractional solution to the configuration LP, from~\cite{Li16non_uniform}, via the "round-or-separate" technique. We obtain a constant factor approximate integral solution violating capacities by a factor of $1+\epsilon$. We utilize the power of the configuration LP in effectively rounding small size facility sets, and combine it with a careful dependent rounding to coordinate the opening between these small sets. The main result of this paper is described in the following theorem.

\begin{theorem}
\label{thm:general}
There is a bi-factor randomized rounding algorithm for hard uniform capacitated $k$-median problem, with $O(1/\epsilon^2)$-approximation under $1+\epsilon$ capacity violation.
\end{theorem}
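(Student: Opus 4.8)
The plan is to round a fractional solution $(x,y)$ of the configuration LP of~\cite{Li16non_uniform} using the round-or-separate paradigm: the rounding procedure sketched below either outputs an integral solution of the claimed quality, or it exhibits a configuration constraint violated by $(x,y)$, which is returned to the feasibility-checking ellipsoid method as in~\cite{an2014lp}. Hence it suffices to design a rounding that works under the assumption that $(x,y)$ satisfies every configuration constraint it encounters. The first phase is the standard $k$-median preprocessing. For each client $j$ let $R_j$ be a constant times its fractional connection cost $\sum_i d(i,j)x_{ij}$, so that by Markov at least half of $j$'s fractional assignment lies within distance $R_j$. Greedily select a maximal set of representative clients that are pairwise far apart (at distance more than $\gamma(R_j+R_{j'})$ for a suitable constant $\gamma$), snap every client to its nearest representative, and move every facility to the nearest representative location; both operations cost only an $O(1)$ factor in connection cost by the triangle inequality. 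This reduces to an instance in which facilities sit at representative locations and the quantity of interest at a representative $c$ is its total fractional opening $Y_c=\sum_{i\text{ at }c}y_i$.

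The second phase is a two-level rounding. I would first partition the representatives into clusters of bounded fractional opening, and then group clusters into \emph{bundles} whose total fractional opening is $\Theta(1/\epsilon)$, arranged so that each bundle has small diameter relative to the $R_j$'s it contains and so that dropping one unit of opening inside a bundle reroutes demand only within that bundle. Inside one bundle the set of facilities is small, and here I would invoke the strength of the configuration LP: the projection of $(x,y)$ onto the variables of a single bundle should lie in a polytope all of whose vertices are integral, capacity-respecting bundle-solutions, so it is a convex combination of such integral solutions; if it is \emph{not}, this failure is precisely a violated configuration inequality to hand back to the separation oracle. Finally I would run a dependent rounding (pipage or Srinivasan-type) over the bundles that simultaneously (i) preserves the marginal opening probability of every facility, (ii) keeps the total number of opened facilities equal to $k$, and (iii) changes the number opened inside any bundle by at most one.

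Property (i) controls the expected connection cost (each client pays, in expectation, its fractional cost up to the $O(1)$ preprocessing blow-up plus an $O(1/\epsilon)$ detour when its own cluster happens to open nothing and it must reach the rest of its bundle); property (ii) gives a solution with exactly $k$ facilities; and property (iii), together with the bundle construction, gives the $1+\epsilon$ capacity bound, since within a bundle the demand is served by the locally opened facilities whose aggregate capacity is short by at most the capacity of one facility, i.e.\ by an $\epsilon$-fraction of the bundle, which is absorbed into a $(1+\epsilon)$ overflow. A careful accounting of the $O(1)$ preprocessing losses, the $O(1/\epsilon)$ bundle diameter, and the fraction of clients that must be rerouted across a bundle yields the $O(1/\epsilon^2)$ bound in expectation; standard conditioning (or derandomization of the dependent rounding) makes the guarantee deterministic.

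The main obstacle is engineering the bundle decomposition and the dependent rounding so that requirements (i)--(iii) hold at once: exact cardinality $k$ is a global constraint while the $\pm1$ per-bundle control is local, and marginal preservation must coexist with both — this is exactly the kind of coordination that pipage/swap rounding on a suitable forest structure can deliver, but it has to be set up against the bundle partition rather than against individual facilities. Pushing the capacity violation down to $1+\epsilon$ (rather than a constant) is what forces bundles of opening $\Omega(1/\epsilon)$ and forces all rerouting to stay inside a bundle, and it is also why the configuration LP — strong enough to certify that each bundle is integrally roundable with preserved marginals and capacities — is needed in place of the standard LP. A secondary but essential point is to verify that every failure of the per-bundle rounding is witnessed by a genuine configuration constraint, so that the round-or-separate loop provably terminates in polynomial time.
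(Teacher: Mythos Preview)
Your high-level architecture matches the paper's: configuration LP with round-or-separate, clustering into representatives, aggregation into units of fractional opening $\Theta(1/\epsilon)$, and a dependent rounding that keeps $\sum_i y_i=k$ globally while perturbing each unit by at most one, so that the one-facility deficit is an $\epsilon$-fraction of the unit's capacity.

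The genuine gap is your assertion that ``the projection of $(x,y)$ onto the variables of a single bundle should lie in a polytope all of whose vertices are integral, capacity-respecting bundle-solutions.'' The configuration constraints do \emph{not} give this. For a set $B$ they decompose $(x,y)|_B$ as a mixture over subsets $S\subseteq B$ with $|S|\le\ell_1$, each with its own feasible assignment, \emph{plus} a residual event $\bot$ (``more than $\ell_1$ open in $B$'') about which you know only Constraint~(\ref{clp:13}). The sizes $|S|$ range over $\{0,\dots,\ell_1\}$, so this mixture cannot be plugged directly into a cardinality-preserving dependent rounding. The paper has to do real work here: (i) a ``massage'' step (Lemma~\ref{lem:gen-pre-assign}) that discards all but an $O(1/\ell)$ fraction of the configurations, artificially pads small sets, and rescales so that every surviving $S$ has size exactly $\lfloor\dot y_p\rfloor$ or $\lceil\dot y_p\rceil$ --- this is lossy and is where one $O(\ell)$ factor enters; and (ii) the $\bot$ mass is split off as ``blue opening'' sent to a virtual component at the group root, and the choice $\ell_1=\ell^2$ is what makes the resulting capacity shortfall only $O(1/\ell)$ (Lemma~\ref{lem:gen-blue}). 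Neither of these is an integrality or separation fact; both are explicit, cost-incurring constructions that your sketch omits.

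A second gap is the rerouting cost. You say rerouting stays ``inside a bundle'' at cost $O(1/\epsilon)$ times the bundle diameter, but you never charge that diameter to $\mathbf{LP}$. The paper needs the full black/grey/white MST structure: the units on which dependent rounding acts are the \emph{black components} (smaller than your bundles), these are rounded in a specific order by distance to the parent-group root, unassigned demand is pushed to that root, and then greedily pulled back in the same order. The push cost is $O(\ell^2)\sum_p d(J_p,R\setminus J_p)\,\pi(J_p)$, which is $O(\ell^2)\mathbf{LP}$ only via Li's structural lemma, and the pull cost is dominated by the push cost precisely because the rounding order yields the prefix guarantee of Observations~\ref{obs:prefix_opening}--\ref{obs:prefix_opening_2} (Lemma~\ref{lem:gen-pull_cost}). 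A flat ``bundles + pipage with $\pm1$ per bundle'' picture has no analogue of this order-based charging, and without it the $O(1/\epsilon^2)$ bound does not follow.
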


Our algorithm utilizes the white-grey-black tree structure from~\cite{Li16non_uniform}, but the following rounding steps are quite different. In particular, the handling of the small "black components" differs. While aiming for a solution opening $(1+\epsilon)k$ facilities Li~\cite{Li16non_uniform} can treat each black component independently, we are forced to precisely determine the number of open facilities.
Hence we cannot allow a single black component to individually decide to open more facilities than in the fractional solution. Instead, we first use a preprocessing step which we call \emph{massage} that reduces the variance in the number of open facilities in the fractional solution within each "black component". Then we use a form of a pipeage rounding between the "black components", that precisely preserves the total number of open facilities. The tree structure is used to route the demand to the eventually open facilities.

\section{Linear Program}

The following is the basic LP relaxation for the problem:

\begin{multicols}{2}
\noindent
\begin{align}
\min \, & \textstyle \sum_{i\in F,j\in C}d(i,j) x_{i,j} &\hspace{-2em} \mbox{s.t.}\hspace{-2em} \nonumber \\
        & \textstyle \sum_{i\in F}y_i = k;\hspace{-4em}~\label{blp:k} \\ 
        & \textstyle \sum_{i\in F}x_{i,j} = 1 &\hspace{-4em} \forall j\in C;~\label{blp:connect}
\end{align}
\columnbreak
\begin{align}
\hfill \hspace{-1em}&&\hspace{-1em} & \hspace{-2em} \mbox{(Basic LP)} & \nonumber \\
\hspace{-1em}& \textstyle \sum_{j\in C}x_{i,j} \leq u_i y_i \hspace{-5em}&\hspace{-2em}\forall i\in F;~\label{blp:capacity} \\
\hspace{-1em}& 0 \leq x_{i,j} \leq y_{j} \leq 1 &\hspace{-1em} \forall i\in F, j\in C.\hspace{-1em}~\label{blp:range}
\end{align}
\noindent
\end{multicols}

In the above LP, $y_i$ indicates whether facility $i$ is open or not, and $x_{i,j}$ indicates whether client $j$ is connected to facility $i$ or not. Constraint (\ref{blp:k}) is the cardinality constraint that exactly $k$ facilities are open, Constraint (\ref{blp:connect}) requires every client $j$ to be connected, Constraint (\ref{blp:range}) says that a client can only be connected to an open facility and Constraint (\ref{blp:capacity}) is the capacity constraint. 

The basic LP has an unbounded integrality gap even if we allow to violate the cardinality or the capacity constraints by $2-\epsilon$. To overcome this gap, Li introduced in \cite{Li16non_uniform} a stronger LP called the Configuration LP and got a constant approximation algorithm by opening $(1+\epsilon)k$ facilities. 

To formulate the configuration LP constraints, let us fix a set $B \subseteq F$ of facilities. Let $\mathcal{S}=\{S \subseteq B: |S| \leq \ell_1\}$ and $\tilde{\mathcal{S}}=\mathcal{S} \cup \{\bot\}$, where $\ell_1$ is some constant we will define later and $\bot$ stands for "any subset of B with size greather than $\ell_1$''. We treat set $\bot$ as a set that contains all the facilities $i \in B$. For every $S\in \tilde{\mathcal{S}}$, let $z_S^B$ be an indicator variable corresponding to the event that the set of open facilities in $B$ is exactly $S$ and $z_{\bot}^B$ captures the event that the number of facilities open in $B$ is more than $\ell_1$. For every $S\in \tilde{\mathcal{S}}$ and $i \in S$, $z_{S,i}^B$ indicates the event that set $S$ is open and $i$ is open as well. Notice that when $i\in S \neq \bot$, we always have $z_{S,i}=z_S^B$. For every $S\in \tilde{\mathcal{S}}$, $i \in S$ and $j \in C$, $z_{S,i,j}^B$ indicates the event that $z_{S,i}^B=1$ and $j$ is connected to $i$. The following are valid constraints for any feasible integral solution.

\vspace{-0.7cm}
\begin{multicols}{2}
\begin{align}
\noindent
\sum_{S\in \tilde{\mathcal{S}}}  {z}_S^B = 1;&\label{clp:6} \\ 
\sum_{S\in \tilde{\mathcal{S}}: i\in S}z_{S, i}^B = y_i & \forall i\in B;~\label{clp:7} \\
\sum_{S \in \tilde{\mathcal{S}}: i\in S}z_{S, i, j}^B = x_{i,j} & \forall i\in B, j \in C;\hspace{-2em}~\label{clp:8} \\
0 \leq z_{S,i,j}^B \leq z_{S,i}^B \leq z_S^B \, & \, \forall S \in \tilde{\mathcal{S}}, i \in S, j\in C;~\label{clp:9}
\end{align}
\columnbreak
\begin{align}
&&\nonumber\\
z_{S,i}^B = z_S^B, & \forall S \in \mathcal{S}, i \in S;~\label{clp:10} \\
\sum_{i\in S}z_{S, i, j}^B &\leq z_S^B  \forall S \in \tilde{\mathcal{S}}, j \in C;~\label{clp:11} \\
\sum_{j \in C}z_{S, i, j}^B & \leq u_i z_{S,i}^B  \forall S \in \tilde{\mathcal{S}}, i \in S;\hspace{-1em}~\label{clp:12} \\
\sum_{i \in B} z_{\bot,i}^B & \geq {\ell}_1 z_{\bot}^B.~\label{clp:13} 
\end{align}
\end{multicols}
\vspace{-0.7cm}

Constraint (\ref{clp:6}) says that exactly one set $S\in \tilde{\mathcal{S}}$ is open. Constraint (\ref{clp:7}) says that if facility $i$ is open then $z_{S,i}^B=1$ for exactly one set $S\in \tilde{\mathcal{S}}$. Constraint (\ref{clp:8}) says that if $j$ is connected to $i$ then $z_{S,i,j}^B=1$ for exactly one set $S\in \tilde{\mathcal{S}}$. Constraint (\ref{clp:11}) says that if $z_S^B=1$, then $j$ can be connected to at most $1$ facility in $S$. Constraint (\ref{clp:12}) is the capacity constraint. Constraint (\ref{clp:13}) says that if $z_{\bot}^B=1$, then at least $\ell_1$ facilities in $B$ are open.

The configuration LP is obtained by adding the above set of constraints for all subsets $B \subseteq F$. As there are exponentially many sets $B$, we do not know how to solve this LP. But given a fractional solution $(x,y)$, for a fixed set $B$, we can construct the values of the set of variables $z$ (see \cite{Li16non_uniform}) and also check the constraints in polynomial time since the total number of variables and constraints is $n^{O(\ell_1)}$. We apply method that has been used in, e.g., \cite{Li15uniform, Li16non_uniform}. Given a fractional solution $(x,y)$ to the basic LP relaxation, our rounding algorithm either constructs an integral solution with the desired properties or outputs a set $B\subseteq F$ for which one of the Constraints (\ref{clp:6}) to (\ref{clp:13}) is infeasible. In the latter case, we can find a violating constraint and feedback it to the ellipsoid method.

\section{Rounding Algorithm}

Focus on an optimal fractional solution $(x,y)$ to the basic LP. Let $d_{av}(j) = \sum_{i \in F}d(i,j) x_{i,j}$ be the connection cost client $j \in C$. Note that the value of the LP solution $(x,y)$ is $\mathbf{LP} := \sum_{(i,j) \in F \cross C}d(i,j)x_{i,j} = \sum_{j\in C}d_{av}(j)$. For any set $F' \subseteq F$ of facilities, let $y_{F'} := y(F') := \sum_{i \in F'}y_i$ be the volume of the set $F'$. For any set $F'\subseteq F$ and $C' \subseteq C$ of clients, let $x_{F', C'}:=\sum_{(i,j) \in F'\cross C'}x_{i,j}$. Also let, $x_{i,C'}:=x_{\{i\}, C'}$ and $x_{F',j}:=x_{F',\{j\}}$.
\begin{definition}
Let $D_i:=\sum_{j \in C} x_{i,j}d(i, j)$ and $D'_i:=\sum_{j \in C} x_{i,j}d_{av}(j)$ for each $i \in F$. Let $D_S:=D(S):=\sum_{i \in S} D_i$ and $D'_S:=D'(S):=\sum_{i \in S} D'_i$ for every $S \subseteq F$, Obviously $D_F=D'_F=\mathbf{LP}$.
\end{definition}

First we will partition facilities into clusters (as done in \cite{BFRS15},\cite{Li15uniform}). Each cluster will have a client $v$ as its \emph{representative}. We denote the set of cluster representatives by $R$. Each cluster will contain the set of facilities nearest to a representative $v \in R$ and the fractional number of open facilities in each cluster will be bounded below by $1 - \frac{1}{\ell}$. Let $U_v$ be the set of facilities in the cluster corresponding to representative $v \in R$. For any set $J \subseteq R$ of representatives, we use $U_J := U(J) = \bigcup_{v\in J} U_v$. Constants $\ell:=O(1/\epsilon)$ and $\ell_1:=\ell^2$ are integers, which we will define later. Since the clustering procedure is the same as in \cite{Li15uniform, BFRS15}, we omit. The following Claim (see Claim 4.1 in \cite{Li15uniform}) captures the key properties of the clustering procedure.

\begin{clm}
\label{claim:clustering}
The following statements hold:

\begin{enumerate}

\item for all $v, v'\in R$, $v\neq v'$, we have $d(v,v') > 2\ell\max\{d_{av}(v), d_{av}(v')\}$;
\item for all $j \in C$,  $\exists v \in R$, such that $d_{av}(v)\leq d_{av}(j)$ and $d(v,j)\leq 2\ell d_{av}(j)$;
\item $y_{U_v} \geq 1-1/\ell$ for every $v \in R$;
\item for any $v \in R$, $i\in U_v$ and $j \in C$, we have $d(i,v) \leq d(i,j) + 2\ell d_{av}(j)$.
\end{enumerate}
\end{clm}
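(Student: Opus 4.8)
The plan is to recall the greedy clustering used in \cite{Li15uniform,BFRS15} and verify the four properties directly from it. Fix a total order on $C$ refining the order by non-decreasing $d_{av}(\cdot)$ (break ties by client index). Process the clients in this order, maintaining a set $R$ that is initially empty; when client $j$ is considered, add $j$ to $R$ if and only if $d(v,j) > 2\ell\, d_{av}(j)$ for every $v$ currently in $R$. Once $R$ is fixed, assign every facility $i\in F$ to a representative $\pi(i)\in R$ minimizing $d(i,\pi(i))$, breaking ties arbitrarily, and set $U_v := \pi^{-1}(v)$; by construction $\{U_v\}_{v\in R}$ partitions $F$.

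Properties~1 and~2 are immediate from the selection rule. For~1, if $v$ is considered before $v'$ then $d_{av}(v)\le d_{av}(v')$ and at the moment $v'$ is added $v$ already lies in $R$, so $d(v,v') > 2\ell\, d_{av}(v') = 2\ell\max\{d_{av}(v),d_{av}(v')\}$. For~2, if $j\in R$ take $v=j$ (then $d_{av}(v)=d_{av}(j)$ and $d(v,j)=0$); otherwise $j$ failed the test, so some $v\in R$ present when $j$ was processed has $d(v,j)\le 2\ell\, d_{av}(j)$, and that $v$ was added earlier, hence $d_{av}(v)\le d_{av}(j)$.

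For Property~3 I would apply an averaging bound to the assignment $x_{\cdot,v}$ of the representative itself: since $\sum_{i\in F} d(i,v)x_{i,v}=d_{av}(v)$ while $\sum_{i\in F} x_{i,v}=1$, the ball $B_v:=\{\, i\in F: d(i,v)\le \ell\, d_{av}(v)\,\}$ carries $x_{\cdot,v}$-mass at least $1-1/\ell$, and $x_{i,v}\le y_i$ then gives $y_{B_v}\ge 1-1/\ell$. It remains to show $B_v\subseteq U_v$, i.e.\ that $v$ is the strictly closest representative to each $i\in B_v$: if some $v'\ne v$ had $d(i,v')\le d(i,v)$, then both of these distances are at most $\ell\, d_{av}(v)$, so $d(v,v')\le d(v,i)+d(i,v')\le 2\ell\, d_{av}(v)\le 2\ell\max\{d_{av}(v),d_{av}(v')\}$, contradicting Property~1. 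Hence $y_{U_v}\ge y_{B_v}\ge 1-1/\ell$. For Property~4, take $i\in U_v$ and any $j\in C$, let $v_j\in R$ be the representative from Property~2 so that $d(v_j,j)\le 2\ell\, d_{av}(j)$; since $v$ is a closest representative to $i$, $d(i,v)\le d(i,v_j)\le d(i,j)+d(j,v_j)\le d(i,j)+2\ell\, d_{av}(j)$.

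The only delicate point is Property~3, specifically the nesting $B_v\subseteq U_v$: it is exactly the constant $2\ell$ in the selection rule that makes it work, since two balls of radius $\ell\, d_{av}$ centered at representatives more than $2\ell\max\{d_{av},d_{av}\}$ apart cannot meet. Degenerate cases (such as $d_{av}(v)=0$) are trivial, as then all of the $x_{\cdot,v}$-mass sits on facilities at distance $0$ from $v$. Everything else reduces to a single application of the triangle inequality together with the greedy rule, which is why the original sources (and this paper) omit the proof.
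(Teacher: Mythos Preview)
Your proof is correct and reproduces precisely the standard greedy clustering procedure from \cite{Li15uniform,BFRS15} that the paper cites; the paper itself omits the proof entirely, deferring to Claim~4.1 of \cite{Li15uniform}, so there is nothing further to compare.
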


We partition the set of representatives $R$ build a tree and color its edges in the same way as Li \cite{Li16non_uniform}. To partition $R$, we run the Kruskal's algorithm to find a minimum spanning tree of $R$. In the Kruskal's algorithm we maintain a partition $\mathcal{J}$ of $R$ and the set of selected edges $E_{MST}$. Initially $\mathcal{J} = \{\{v\} : v \in R\}$ and $E_{MST}$ is empty. The length of each edge $(u, v) \in {R \choose 2}$ is the distance between $u$ and $v$. We sort all edges in ${R \choose 2}$ by length, breaking ties in an arbitrary way. For each edge $(u ,v)$ in this order if $u$ and $v$ are not in the same group in $\mathcal{J}$, we merge the two groups and add edge $(u, v)$ to $E_{MST}$.

We now color edges of $E_{MST}$. For every $v \in R$, we know that $y(U_v) \geq 1 - 1/\ell$. For any subset of representatives $J \subseteq R$ we say that $S$ is big if $y(U_J) \geq \ell$ and small otherwise. For each edge $e \in E_{MST}$ we consider the step in which edge $e=(u,v)$ was added by the Kruskal's algorithm to MST. After the iteration we merge groups $J_u$ (containing $u$) and $J_v$ (containing $v$) to one group $J_u \cup J_v$. If both $J_u$ and $J_v$ are small, then we paint edge $e$ in black. If both are big, we paint the edge $e$ white. Otherwise if one is small and the other is big then we direct the edge $e$ towards the big group and paint it grey.

Consider only the black edges from $E_{MST}$. We define a black component of MST as a connected component in this graph. The following claim (see Claim 4.1 in \cite{Li16non_uniform}) is a consequence of the fact that $J \subseteq R$ appears as a group at some step of the Kruskal's algorithm.

\begin{clm}
\label{claim:black_edges}
	Let $J$ be a black component, then for every black edge $(u, v)$ in ${J \choose 2}$, we have $d(u, v) \leq d(J, R \setminus J)$.
\end{clm}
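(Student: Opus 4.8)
The statement to prove is Claim~\ref{claim:black_edges}: for a black component $J$, every black edge $(u,v)\in\binom{J}{2}$ satisfies $d(u,v)\le d(J,R\setminus J)$.

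\medskip

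\noindent\textbf{Proof proposal.}

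The plan is to exploit the fact that every group appearing during Kruskal's algorithm is a contiguous "snapshot" of the union-find partition, together with the order in which Kruskal processes edges by increasing length. First I would fix a black component $J$ and an arbitrary black edge $e=(u,v)$ inside $J$; I want to show $d(u,v)\le d(J,R\setminus J)$, i.e. $d(u,v)$ is at most the length of any edge leaving $J$. Let $e'=(a,b)$ with $a\in J$, $b\in R\setminus J$ be an arbitrary such edge; it suffices to show $d(u,v)\le d(e')=d(a,b)$.

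The key structural observation is that $J$ itself arises as one of the groups in $\mathcal{J}$ at some step of Kruskal's algorithm — this is exactly the fact referenced before the claim, namely that a black component, being a maximal set connected by black edges of $E_{MST}$, is formed by a sequence of merges along those black edges and hence is present as a single group at the moment the last black edge of $J$ is added. I would argue this more carefully: process the black edges of $\binom{J}{2}$ in Kruskal order; each one merges two subgroups that are themselves unions of black-connected representatives of $J$, so after the last such edge is added, all of $J$ lies in one group $G$. If $G$ were strictly larger than $J$, then some non-black (grey or white) edge of $E_{MST}$ would have merged a vertex of $R\setminus J$ into $G$ before this point — but any edge in $E_{MST}$ between $J$ and $R\setminus J$ that is processed while $G\subseteq J$ would either be the last black edge (contradiction, it leaves $J$) or a grey/white edge; in the latter case that edge, once added, would put a vertex outside $J$ into the black component's group, and since it is an $E_{MST}$ edge incident to $J$ it would have to be counted as a tree edge crossing the component, which does not change the set of black edges inside $J$ but does mean $G\supsetneq J$. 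The clean way to phrase it: let $t$ be the step at which the last black edge $e^\star$ of $\binom{J}{2}$ is added; at step $t$ the two endpoints of $e^\star$ lie in groups whose union is exactly $J$ (every element of $J$ is black-connected to an endpoint of some black edge processed no later than $e^\star$, and no element outside $J$ is, since that would create a black edge of $E_{MST}$ leaving $J$, contradicting maximality of the component). Hence $J\in\mathcal{J}$ right after step $t$.

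Now compare edge lengths. The black edge $e=(u,v)$ was added at some step $\le t$ (it belongs to $\binom{J}{2}$ and was processed before or at the last black edge $e^\star$), so $d(u,v)\le d(e^\star)$. Consider the crossing edge $e'=(a,b)$, $a\in J$, $b\notin J$. At step $t+1$, $a$ and $b$ lie in different groups of $\mathcal{J}$ (namely $a\in J$ and $b$ in some group disjoint from $J$, since $J$ is a full group at that moment). Kruskal processes edges in nondecreasing order of length and only ever merges; an edge whose endpoints are still in different groups when it is considered cannot have been processed yet. Therefore $e'$ is processed strictly after step $t$, so $d(e')\ge d(e^\star)\ge d(u,v)$. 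Since $e'$ was an arbitrary edge from $J$ to $R\setminus J$, we conclude $d(u,v)\le d(J,R\setminus J)$, as claimed. \qed

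\medskip

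\noindent The step I expect to be the main obstacle is the first structural claim — proving rigorously that the black component $J$ occurs verbatim as a group $\mathcal{J}$ at the step when its last black edge is added, with nothing extra merged in. The subtlety is ruling out that a grey or white $E_{MST}$ edge incident to $J$ gets processed (and hence merges in an outside vertex) before all of $J$'s internal black edges are done; this has to be handled by the Kruskal ordering (such a crossing edge is longer than every internal black edge of $J$, hence processed later) combined with the maximality of the black component. Everything after that is just monotonicity of Kruskal's edge order.
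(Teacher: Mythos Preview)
Your overall strategy---show that the black component $J$ occurs as a group in $\mathcal{J}$ at the step when its last internal black edge $e^\star$ is added, and then use monotonicity of the Kruskal order---is exactly the approach the paper cites from \cite{Li16non_uniform}. The second half of your argument (once $J$ is a group, every crossing edge $e'$ still has endpoints in different groups and hence is processed after $e^\star$) is clean and correct.

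The gap is precisely where you flagged it: the justification that the group formed at step $t$ equals $J$ and contains nothing extra. Your proposed fix---``such a crossing edge is longer than every internal black edge of $J$, hence processed later''---is circular: that crossing edges are longer than internal black edges is (a special case of) the conclusion you are proving. And your fallback (``that would create a black edge of $E_{MST}$ leaving $J$'') fails because an outside vertex could have been merged in via a \emph{grey or white} MST edge, not a black one.

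The non-circular argument you are missing uses the \emph{coloring criterion} rather than just the Kruskal order. Let $e^\star=(u,v)$ and let $J_u,J_v$ be the two groups it merges. Since $e^\star$ is black, both $J_u$ and $J_v$ are small. Now consider any MST edge $g$ with both endpoints in $J_u$; when $g$ was processed it merged two groups that are both subsets of $J_u$, hence both small, hence $g$ is black. Thus $J_u$ is spanned entirely by black MST edges, so $J_u$ lies inside the black component of $u$, which is $J$. The same holds for $J_v$, giving $J_u\cup J_v\subseteq J$; combined with the easy inclusion $J\subseteq J_u\cup J_v$ (all black edges of $J$ are processed by step $t$) you get $J=J_u\cup J_v$, and the rest of your proof goes through.
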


We contract all the black components and remove all the white edges from MST. The obtained graph $\Upsilon$ is a forest. Each node $p$ (vertex in the contracted graph) in $\Upsilon$ corresponds to a black component and each grey edge is directed. Let $J_p \subseteq R$ be the set of representatives corresponding to node $p$. Abusing the notation slightly, we define $U_p := U(J_p) = \bigcup_{v \in J_p} U_v$. Lets define $y_p:=y(U_p)$. The following lemma follows from the way in which we create our forest. Proof can be found in \cite{Li16non_uniform}.

\begin{lemma}
\label{lem:tau_properties}
 For any tree $\tau \in \Upsilon$, the following statements are true:
    \begin{enumerate}
    	\item $\tau$ has a root node $r_{\tau}$ such that all the grey edges in $\tau$ are directed towards $r_{\tau}$;
        \item $J_{r_\tau}$ is big and $J_p$ is small for all other nodes in $\tau$;
        \item in any leaf-to-root path of $\tau$, the lengths of grey edges form a non-increasing sequence;
        \item for any non-root node $p \in \tau$, the length of the grey edge in $\tau$ connecting $p$ to its parent is exactly $d(J_p, R \setminus J_p)$;
    \end{enumerate}
\end{lemma}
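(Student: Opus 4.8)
The plan is to follow the run of Kruskal's algorithm and, for every black component $J_p$, to let $t_p$ be the step at which $J_p$ first appears as a group of the partition $\mathcal{J}$ (for a singleton component, the start). Two facts do all the work. \emph{(i) Bigness is monotone}: $y(U_J)$ is nondecreasing in $J$, so once a group is big it stays big as the algorithm proceeds. \emph{(ii) Stability of $J_p$}: before $t_p$ no group of $\mathcal{J}$ contains a representative of $J_p$ together with a representative outside $J_p$, and no black edge inside $J_p$ is processed after $t_p$; hence after $t_p$ the only $E_{MST}$-edges incident to representatives of $J_p$ are grey or white. I would prove (ii) first, as it is used everywhere: if the claim failed, the first offending merge would join a group $A\subseteq J_p$ to a group $B$ disjoint from $J_p$ along an edge $(a,b)$ with $a\in J_p$, $b\notin J_p$; if $A,B$ are both small this edge is black, putting $b$ into the black component of $a$ — impossible; otherwise the merge makes the group of $a$ big before $J_p$ is finished being built by black edges, contradicting (i) and the maximality of $J_p$. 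A corollary of (ii) together with ``a white edge needs both sides big'' is that the first grey edge incident to a \emph{big} $J_p$ meets it on its big side, hence is directed \emph{into} $p$, and every later merge of that now-big group is again incoming-grey or white; so a big black component has out-degree $0$ in $\Upsilon$.

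For parts (1)--(2) I would compute out-degrees inside a tree $\tau\in\Upsilon$. By (ii), a \emph{small} $J_p$ keeps its representatives exactly in the group $J_p$ until the first incident grey edge; at that step $J_p$ is the small side, so the edge leaves $p$, and afterwards $J_p$ sits inside a big group, so $p$ has out-degree exactly $1$. Thus every node of $\tau$ has out-degree $0$ or $1$; since $\tau$ is a tree with one fewer edge than node, exactly one node has out-degree $0$, it is the unique big component of $\tau$ (using also that a small $J_p\ne R$ has an MST-edge leaving it, whose first occurrence is grey and outgoing, so an isolated node must be big), all other edges point towards it, and this node is $r_\tau$.

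For part (4), fix a non-root $p$ with outgoing grey edge $e_p$, processed at step $s_p$. One direction is trivial since $e_p$ crosses the cut $(J_p, R\setminus J_p)$. For the other, if some $a\in J_p$, $b\notin J_p$ had $d(a,b)<\mathrm{len}(e_p)$, then Kruskal processes $(a,b)$ before $e_p$; by (ii) the groups of $a$ and $b$ are then disjoint (one inside $J_p$, one disjoint from $J_p$), so $(a,b)$ enters $E_{MST}$ — and it is either black (so $b$ lies in the black component of $a$, contradiction) or grey and incident to $J_p$ before $s_p$ (contradicting minimality of $s_p$); hence $\mathrm{len}(e_p)=d(J_p,R\setminus J_p)$. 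For part (3), on a leaf-to-root path $p_0\to p_1\to\cdots\to r_\tau$ I would show $s_{p_0}>s_{p_1}>\cdots$: at step $s_{p_i}$ the far end of $e_{p_i}$ is a big group containing a representative of the small component $J_{p_{i+1}}$, so by (ii) that representative has already left the group $J_{p_{i+1}}$, which forces $s_{p_{i+1}}<s_{p_i}$; since Kruskal scans edges by nondecreasing length, the edge lengths are nonincreasing from leaf to root.

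The step I expect to be the real obstacle is making (ii) airtight — specifically, excluding the possibility that a half-built piece of $J_p$ is swallowed by an already-big group before all the black edges of $J_p$ are in. The resolution is exactly (i) together with the rule that a black edge requires \emph{both} sides small, but it has to be phrased so that it uniformly handles singleton components, the root component, and the interleaving of edge colours along Kruskal's order; once (ii) is secured, the remaining four items are bookkeeping about the order in which Kruskal adds edges.
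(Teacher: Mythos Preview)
Your proposal is correct. Note that the paper does not actually prove this lemma; it simply states that the ``proof can be found in \cite{Li16non_uniform}''. Your reconstruction from the Kruskal/colouring rules is precisely the intended argument, so there is nothing substantively different to compare.

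One small sharpening for fact (ii), the step you yourself flag as the obstacle: the phrase ``contradicting (i) and the maximality of $J_p$'' can be made explicit. If the first straddling merge joins $A\subseteq J_p$ to $B$ disjoint from $J_p$ at step $s$, then $A\subsetneq J_p$ (otherwise $J_p$ is already a group and $t_p\le s$). By black-connectivity of $J_p$ there is a black MST edge $(x,y)$ with $x\in A$, $y\in J_p\setminus A$. It cannot have been added before $s$ (else $y$ would already lie in $A$), and it cannot be added at or after $s$ (then $J_x\supseteq A\cup B$ is big by (i), so the edge would not be black). This closes the case uniformly for singletons and for big or small $J_p$. Everything else --- the out-degree count giving (1)--(2), the cut argument for (4), and the strict ordering $s_{p_0}>s_{p_1}>\cdots$ yielding (3) --- is correct as written. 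The only residual edge case is a single-node tree with $J_{r_\tau}=R$, where bigness of the root needs $k\ge\ell$; this is implicit in the paper's parameter regime.
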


Consider a tree $\tau \in \Upsilon$. We group the black components of $\tau$ top down into sub-trees choosing grey edges in increasing order of their lengths until the volume of the group just exceeds $\ell$. 

\begin{definition}
A black component is called a singleton component if it contains only a single node corresponding to some $v \in R$. A singleton component which is the very root of some tree $\tau \in \Upsilon$, is called a singleton root component.
\end{definition}

\begin{observation}
\label{obs:group_vol}
Consider tree $\tau \in \Upsilon$. The root-group $\mathcal{G}$ has volume at least $\ell$. If the root-group is not a singleton root component, then it has volume at most $2\ell$. The leaf-groups might have volume smaller that $\ell$. All the other internal-groups have volume in the range $[\ell, 2\ell)$.
\end{observation}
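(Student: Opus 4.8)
The plan is to analyze the grouping procedure step by step, exploiting the fact that we process black components top down along $\tau$ and cut off a group precisely when its accumulated volume first exceeds $\ell$. Recall from Lemma~\ref{lem:tau_properties} that $J_{r_\tau}$ is big (so $y_{r_\tau} \ge \ell$) while every other node $p$ of $\tau$ is small (so $y_p < \ell$), and from Claim~\ref{claim:clustering}(3) that $y_{U_v} \ge 1 - 1/\ell$ for each $v \in R$, hence every black component has volume at least $1 - 1/\ell > 0$. I would first dispose of the root-group. If the root $r_\tau$ is a singleton root component, then it is one node of volume $y_{r_\tau}\ge \ell$, and the grouping adds siblings/descendants until the volume exceeds $\ell$; but here the root already has volume $\ge \ell$, so the root-group may just be $\{r_\tau\}$ and there is no a priori upper bound (this is exactly the case the statement excludes). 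If $r_\tau$ is not a singleton root component, then $J_{r_\tau}$ results from merging two small groups in Kruskal, so $y_{r_\tau} < 2\ell$ — or more simply, the root-group is built by adding nodes until the volume first exceeds $\ell$, and I claim it suffices to add nodes whose individual volumes are each $< \ell$; the last node added brings the total from something $< \ell$ to something $< 2\ell$. That gives volume in $[\ell, 2\ell)$ for a non-singleton root-group.

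Next I would handle the internal-groups. By construction, a group other than the root-group and the leaf-groups is formed by starting from some black component and greedily adding further black components (following grey edges in increasing length order, staying within $\tau$) until the cumulative volume first strictly exceeds $\ell$; the group is then closed off and the next group starts. So each internal-group has volume $\ge \ell$ by the stopping rule. For the upper bound: just before the last component was added, the cumulative volume was $< \ell$ (otherwise the group would already have been closed), and the last component added is not the root $r_\tau$ (the root sits in the root-group), hence by Lemma~\ref{lem:tau_properties}(2) it is small, contributing $< \ell$. Therefore the final volume is $< \ell + \ell = 2\ell$, giving the range $[\ell, 2\ell)$.

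Finally, the leaf-groups: these are the groups that get closed not because their volume exceeded $\ell$ but because the greedy process ran out of descendants to add (it reached the leaves of $\tau$). Such a group can therefore have volume smaller than $\ell$, and no nontrivial lower bound is claimed. Conversely, if a leaf-group did accumulate volume $\ge \ell$ during the process it would have been closed and would fall under the internal-group analysis; so the only leaf-groups with volume $< \ell$ are those truncated by reaching the bottom of the tree, which is precisely what the statement allows.

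The main obstacle, and the only point requiring genuine care, is making the grouping procedure's semantics fully precise — in particular, pinning down what "top down into sub-trees choosing grey edges in increasing order of their lengths until the volume of the group just exceeds $\ell$" means when a node has several children, and confirming that in every case the single component that pushes a group over the threshold $\ell$ is itself a \emph{non-root} (hence small) component of volume $< \ell$. Once that is nailed down, the two-term bound volume $< \ell + \ell = 2\ell$ is immediate, and the root/leaf exceptions fall out of the two degenerate ways a group can be closed (root already big; process exhausts the subtree).
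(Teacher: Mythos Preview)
The paper states this observation without proof, so there is no argument to compare against; your proposal is essentially the natural justification and is correct.

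One small comment: your first reason for the $2\ell$ bound on a non-singleton root-group --- that $J_{r_\tau}$ arises as the black-edge merge of two small Kruskal groups, hence $y_{r_\tau}<2\ell$ --- is the right one. The alternative ``or more simply, the root-group is built by adding nodes until the volume first exceeds $\ell$'' is misleading: since $J_{r_\tau}$ is big by Lemma~\ref{lem:tau_properties}(2), the top-down grouping stops immediately at the root, so the root-group always consists of the single black component $r_\tau$ and no further nodes are ever added to it. The $2\ell$ bound therefore genuinely comes from the Kruskal/black-edge argument, not from the grouping stopping rule. Your treatment of internal-groups (stop upon first exceeding $\ell$; last added component is non-root hence small, so total $<\ell+\ell$) and leaf-groups (truncated by exhausting the subtree) is exactly right.
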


From now on we will slightly abuse the notation and instead of $z^{U_p}$ and $x_{U_p}$ we will write $z^p$ and $x_p$, respectively. Also, we will assume that any black component $U_p$ corresponding to a node $p \in \Upsilon$ satisfies the Configuration LP Constraints (\ref{clp:6}) to (\ref{clp:13}). If not, then we find the violating constraint and we recompute the LP by applying the ellipsoid method.

In the next lemma we consider edges related to a group $\mathcal{G}$.
\begin{lemma}
\label{lem:group_edges}
For any tree $\tau$, group $\mathcal{G}$ and black component $p \in \mathcal{G}$, the following properties hold:

\begin{enumerate}

\item \label{group_edges1} the total number of grey or black edges within $\mathcal{G}$ is at most $O(\ell)$;
\item \label{group_edges2} any grey edge entering $\mathcal{G}$ is longer (or equal) to any grey or black edge in $\mathcal{G}$;
\item \label{dist_to_root1} the total length of the path (including both grey and black edges), from any node $v \in J_p$ to the root $r$ of the group $\mathcal{G}$, is at most $O(\ell)d(J_p, R \setminus J_p)$; and 
\item \label{dist_to_root2} the length of the path from $v\in J_p$ to the root $r'$ of its parent group $\mathcal{G'}$ (if it exists) is $O(\ell) d(J_p, R \setminus J_p)$.
\end{enumerate}
\end{lemma}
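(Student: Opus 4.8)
The plan is to prove the four items one at a time, drawing on what is already available: the volume bound of Observation~\ref{obs:group_vol}, the inequality $y(U_v)\ge 1-1/\ell$ of Claim~\ref{claim:clustering}, the facts from Lemma~\ref{lem:tau_properties} that grey-edge lengths are non-increasing along any leaf-to-root path of $\tau$ and that the grey edge joining a non-root node $p$ of $\tau$ to its parent has length exactly $d(J_p,R\setminus J_p)$, the black-edge bound of Claim~\ref{claim:black_edges}, and the defining rule of the grouping: each group is grown by attaching adjacent black components through progressively longer grey edges, and the growth stops only once the group's volume has passed $\ell$. For item~\ref{group_edges1} I would count representatives instead of edges. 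A black component $p$ is a connected subgraph of the spanning tree $E_{MST}$, hence itself a tree, so it carries exactly $|J_p|-1$ black edges; and the black components of $\mathcal{G}$ form a connected subtree of $\tau$, so they are linked by exactly one fewer grey edge than their number. Writing $J_{\mathcal{G}}:=\bigcup_{p\in\mathcal{G}}J_p$, the total number of grey or black edges inside $\mathcal{G}$ is thus $|J_{\mathcal{G}}|-1$. Now $\sum_{p\in\mathcal{G}}y_p=\sum_{v\in J_{\mathcal{G}}}y(U_v)\ge|J_{\mathcal{G}}|(1-1/\ell)$ by Claim~\ref{claim:clustering}, whereas by Observation~\ref{obs:group_vol} this volume is below $2\ell$ unless $\mathcal{G}$ is a singleton root component, in which case $|J_{\mathcal{G}}|=1$ and there are no edges at all; so $|J_{\mathcal{G}}|<2\ell/(1-1/\ell)=O(\ell)$, giving item~\ref{group_edges1}.

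For item~\ref{group_edges2}, recall that since grey edges are directed toward big groups, a grey edge \emph{entering} $\mathcal{G}$ joins the root $c$ of some child group to its parent $q\in\mathcal{G}$, and by Lemma~\ref{lem:tau_properties} has length $d(J_c,R\setminus J_c)$. Let $t$ be the length of the last grey edge the grouping procedure attached to $\mathcal{G}$ (when $\mathcal{G}$ is a single black component no such edge exists and its only internal edges are black, treated below). Because the procedure always extends $\mathcal{G}$ through the shortest currently available grey edge and halts only after the volume passes $\ell$, every grey edge inside $\mathcal{G}$ has length at most $t$, while the entering edge $(c,q)$ was available at the moment the procedure stopped, or becomes available an instant later in the corner case that $q$ itself was attached at the last step, which is then settled by the non-increasing property of grey-edge lengths on the path from $c$ through $q$ to the parent of $q$; either way its length is at least $t$. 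A black edge inside a non-root component $q$ of $\mathcal{G}$ has length at most $d(J_q,R\setminus J_q)$ by Claim~\ref{claim:black_edges}, which equals the length of the internal grey edge from $q$ to its parent and hence is at most $t$; a black edge inside the root component $r$ of $\mathcal{G}$ has length at most $d(J_r,R\setminus J_r)$, which is at most the length of any grey edge incident to $J_r$ (such an edge joins $J_r$ to a point outside $J_r$), so at most $t$ when $\mathcal{G}$ has an internal grey edge and at most every entering edge otherwise. Consequently every grey or black edge of $\mathcal{G}$ has length at most $t$ and every entering grey edge has length at least $t$.

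Items~\ref{dist_to_root1} and~\ref{dist_to_root2} then follow by combining the edge count from item~\ref{group_edges1} with a uniform per-edge length bound. Fix $v\in J_p$ with $p\in\mathcal{G}$ and let $r$ be the root of $\mathcal{G}$. The path from $v$ to $r$ stays inside $\mathcal{G}$, hence uses $O(\ell)$ edges by item~\ref{group_edges1}. Every grey edge on it lies, along the leaf-to-root path out of $p$, at or above the grey edge from $p$ to its parent, so by Lemma~\ref{lem:tau_properties} its length is at most $d(J_p,R\setminus J_p)$; every black edge on it sits in a component $q$ that is $p$ or an ancestor of $p$ inside $\mathcal{G}$, so by Claim~\ref{claim:black_edges} its length is at most $d(J_q,R\setminus J_q)$, which for $q\neq r$ equals the length of the grey edge from $q$ to its parent and so is at most $d(J_p,R\setminus J_p)$, and for $q=r$ (which arises only if $p\neq r$) is at most the length of the grey edge joining $r$ to the child of $r$ on the path, again at most $d(J_p,R\setminus J_p)$. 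Summing over these $O(\ell)$ edges yields item~\ref{dist_to_root1}. For item~\ref{dist_to_root2} one appends to this path the grey edge leaving $\mathcal{G}$, of length $d(J_r,R\setminus J_r)\le d(J_p,R\setminus J_p)$ since $r$ lies at or above $p$, and then a further path, guaranteed by item~\ref{dist_to_root1} applied to the parent group $\mathcal{G}'$ and the component $s\in\mathcal{G}'$ that this edge enters, of length $O(\ell)\,d(J_s,R\setminus J_s)$; since $s$ is an ancestor of $p$ we have $d(J_s,R\setminus J_s)\le d(J_p,R\setminus J_p)$, so the total is $O(\ell)\,d(J_p,R\setminus J_p)$.

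I expect item~\ref{group_edges2} to be the real obstacle, and within it the comparison of an entering grey edge with an internal grey edge of $\mathcal{G}$ lying in a \emph{different} branch: Lemma~\ref{lem:tau_properties} only compares grey edges along a common leaf-to-root path, so this step genuinely needs the greedy rule of the grouping (attach the shortest available grey edge, stop once the volume passes $\ell$), together with a careful check of the corner cases, namely a one-component group and the last attached edge creating a new available edge. Everything in items~\ref{dist_to_root1} and~\ref{dist_to_root2} is then routine, resting on item~\ref{group_edges1} and the monotonicity already provided by Lemma~\ref{lem:tau_properties}.
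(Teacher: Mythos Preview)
Your proof is correct and follows essentially the same route as the paper's: both use the volume bounds of Observation~\ref{obs:group_vol} and Claim~\ref{claim:clustering} for item~(\ref{group_edges1}), the greedy grouping rule together with Claim~\ref{claim:black_edges} for item~(\ref{group_edges2}), and the monotonicity from Lemma~\ref{lem:tau_properties} plus the $O(\ell)$ edge count for items~(\ref{dist_to_root1}) and~(\ref{dist_to_root2}). Your handling of item~(\ref{group_edges2}) is more explicit about the corner case where the last-attached component exposes a new entering edge (the paper just says ``by the grouping procedure''), and for item~(\ref{dist_to_root2}) you recurse on item~(\ref{dist_to_root1}) applied to $\mathcal{G}'$ whereas the paper bounds the $O(\ell)$ edges of $\mathcal{G}\cup\mathcal{G}'$ directly via item~(\ref{group_edges2}); both variants are fine.
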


\begin{proof}
From the Observation \ref{obs:group_vol} any non singleton $\mathcal{G}$ has volume at most $2\ell$. From Claim~\ref{claim:clustering} each representative have volume at least $1-1/\ell$. Hence the total number of edges in the sub-tree corresponding to any group $\mathcal{G}$ is at most $O(\ell)$, which is property (\ref{group_edges1}).

Using Claim~\ref{claim:black_edges}, we can say that the length of any grey edge $e$ entering or leaving a black component $p$ is at least the length of any black edge belonging to $p$. And by the grouping procedure, we know that any grey edge entering a group will be at least as long as any grey edge within the group. These two observations together imply property (\ref{group_edges2}).

Lemma~\ref{lem:tau_properties} implies that the length of each edge within the black component $p$ is at most $d(J_p, R\setminus J_p)$, the length of the grey edge leaving $p$ is exactly $d(J_p, R \setminus J_p)$ and the length of the grey edges on the path from $p$ to the root are non-increasing, hence their total length is at most $d(J_p, R \setminus J_p)$. Also, by Claim~\ref{claim:black_edges}, we know that length of black edges corresponding to the root component of group $\mathcal{G}$ is at most the length of the grey edges entering it (in particular, the one on the path from $v$ to the root $r$) which is at most $d(J_p, R \setminus J_p)$ long. Combining it with property (\ref{group_edges1}) implies property (\ref{dist_to_root1}) for the path length from node $v$ to the root $r$. 

Now using a similar argument along with property (\ref{group_edges2}) for the grey edge entering $\mathcal{G'}$ on the path from $v$ to $r'$ and the fact that the total number of edges in the groups $\mathcal{G}$ and $\mathcal{G'}$ is at most $O(\ell)$, we can show property (\ref{dist_to_root2}) for the path length from node $v$ to the root $r'$.
\qed
\end{proof}

\begin{lemma}
\label{lem:big_movement}
Consider any representative $v \in R$. We can construct a new solution $\{x' , y', z'\}$ such that all the facilities from set $U_v$ are collocated with $v$. The cost of the new solution is at most $O(\ell)\mathbf{LP}$.
\end{lemma}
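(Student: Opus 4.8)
The plan is to leave the fractional variables untouched and only modify the underlying metric, relocating every facility of $U_v$ onto the point $v$; the resulting cost increase is then controlled entirely by the fourth property of Claim~\ref{claim:clustering}.

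First I would define the modified instance: for every $i \in U_v$ move $i$ to the location of $v$, so that $d(i,i')=0$ for $i,i'\in U_v$, the distance $d(i,j)$ is replaced by $d(v,j)$ for every client $j\in C$, and all remaining distances are unchanged; this is still a metric. Then set $x':=x$, $y':=y$ and $z':=z$. Feasibility is immediate, since none of the basic LP Constraints (\ref{blp:k})–(\ref{blp:range}) nor the configuration LP Constraints (\ref{clp:6})–(\ref{clp:13}) refer to distances at all, so $(x',y',z')$ is still a feasible solution of the same programs (in particular the $z^{U_p}$ for every node $p$ remain valid).

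Next I would bound the cost. Only connections that involve a facility of $U_v$ change. For $i\in U_v$ and a client $j$ with $x_{i,j}>0$, the triangle inequality together with Claim~\ref{claim:clustering}(4) gives
\[
d(v,j)\le d(i,v)+d(i,j)\le \bigl(d(i,j)+2\ell\, d_{av}(j)\bigr)+d(i,j)=2\,d(i,j)+2\ell\, d_{av}(j).
\]
Summing weighted by $x_{i,j}$ over $i\in U_v$, $j\in C$ bounds the new cost of these connections by $2D(U_v)+2\ell D'(U_v)$, while the connections not touching $U_v$ cost at most $\mathbf{LP}$. Since $U_v\subseteq F$ we have $D(U_v)\le D_F=\mathbf{LP}$, and using Constraint (\ref{blp:connect}), $\sum_{i\in U_v}x_{i,j}\le\sum_{i\in F}x_{i,j}=1$, so also $D'(U_v)\le\sum_{j\in C}d_{av}(j)=\mathbf{LP}$. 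Hence the cost of $(x',y',z')$ is at most $(3+2\ell)\mathbf{LP}=O(\ell)\mathbf{LP}$.

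I do not expect a real obstacle here: the lemma is a bookkeeping/normalization step for the later rounding on the forest $\Upsilon$, and the only point requiring care is that collocation must not disturb the configuration LP, which holds precisely because Constraints (\ref{clp:6})–(\ref{clp:13}) are distance-free, so keeping $z'=z$ is legitimate. I would also note that the very same computation applied simultaneously to all $v\in R$ loses only an $O(\ell)$ factor, because the sets $U_v$ are pairwise disjoint and hence $\sum_{v\in R}D(U_v)\le\mathbf{LP}$ and $\sum_{v\in R}D'(U_v)\le\mathbf{LP}$.
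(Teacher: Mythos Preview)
Your proposal is correct and follows essentially the same approach as the paper: relocate every $i\in U_v$ to $v$, keep the LP variables unchanged (the paper phrases this as creating a copy $i_v$ with ${y'}_{i_v}=y_i$, ${x'}_{j,i_v}=x_{j,i}$ and the induced ${z'}$, which is the same thing), and bound the cost via Claim~\ref{claim:clustering}(4). The paper delegates the cost bound to Lemma~4.1 of \cite{Li15uniform}, which is exactly your computation $d(v,j)\le 2d(i,j)+2\ell\,d_{av}(j)$ summed against $x_{i,j}$; your observation that feasibility is automatic because Constraints (\ref{clp:6})--(\ref{clp:13}) are distance-free is precisely why the paper can set the new configuration variables by renaming.
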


\begin{proof}
Consider a black component $p \in \Upsilon$, the set of representatives $J_p$ and the set of facilities $U_p = \cup_{v\in J_p} U_v$. Consider any representative $v \in J_p$ and the facilities $U_v$. We replace each $i \in U_v$ by $i_v$ which is collocated with $v$ and has the same opening as $i$: ${y'}_{i_v} := y_i$. Moreover, each client $j$ which sends some fraction of demand to $i$ now sends the same fraction to $i_v$, ${x'}_{j, i_v} : = x_{ji}$. For each $S$, we create the new set $S' = \{i_v~|~v \in J_p \wedge i \in S \cap U_v \}$, ${z'}_{S'}^{p}:={z}_S^{p}$. Moreover, for each facility $i$ and corresponding facility $i_v$ we have ${z'}_{\bot, i_v}^{p}:={z}_{\bot, i}^{p}$. The other assignment variables we define in a similar way. We modify the solution for each black component $p\in \Upsilon$ in the same way. Note that, since the old solution $\{x, y ,z\}$ was feasible w.r.t. the Standard LP and the Constraints (\ref{clp:6}) - (\ref{clp:13}) were satisfied for set $U_p$ for each black component $p\in \Upsilon$, hence the same holds for the new solution $\{x', y', z'\}$. Moreover, the cost of the new solution is at most $O(\ell)\mathbf{LP}$ by Lemma 4.1 in \cite{Li15uniform}.
\qed
\end{proof}

For any black component $p \in \Upsilon$, let $\dot{y}_p = \frac{\sum_{S \in \mathcal{S}}|S|{z'}_S^{p}}{1 - {z'}_{\bot}^{p}} = \frac{\sum_{i \in S, S \in \mathcal{S}}{z'}_{S, i}^{p}}{1 - {z'}_{\bot}^{p}}$ and ${y'}_p = \sum_{i \in U_p} {y'}_i =  \sum_{i \in S, S \in \tilde{\mathcal{S}}}{z'}_{S, i}^{p} = \sum_{S\in \mathcal{S}}|S| {z'}_{S,i}^{p} + \sum_{i \in {U_p}}{z'}_{\bot, i}^{p}$. Moreover, we define $\pi(J_p) :=  \sum_{j \in C} x_{p,j} (1-x_{p,j})$ for any $p \in \Upsilon$.

Next we show that, we can pre-process each black component $p$ by opening a set randomly from $\mathcal{S}$ and pre-assigning some clients to the open set. We send the demand of the rest of the clients that was served by $p$, to the root of the parent group. To do that, we first reduce the variance in the size of sets in $\mathcal{S}$. 

For the $|C|=ku$ case, we perform a massage process in which we move facilities from bigger sets to smaller ones, until the size of each set is either $\floor{\dot{y}_p}$ or $\ceil{\dot{y}_p}$. Using the saturation property, we can reroute the demand of clients assigned to these facilities, so that the final solution remains feasible. We scale up the opening values of these sets, so that the expected size of sets in $\mathcal{S}$ is $\dot{y}_p$. 

For the general case, instead, we use a brutal massage process in which we pick a prefix of the smallest sets in set $\mathcal{S}$, such that their total opening value is at least a constant. Then we add some extra facilities to the selected sets and scale up the opening values of these sets, so that the sets have size either $\floor{\dot{y}_p}$ or $\ceil{\dot{y}_p}$ and the total opening is exactly $\dot{y}_p$.

In both cases, we pick a set randomly and pre-assign some clients based on their connection values. The intuition is that the clients which are served by more than $1-\epsilon'$ by the black component $p$, get assigned to the selected set with high probability and the demand of the other clients can travel to the root of the parent group by paying the total cost of $O(\ell^2) \mathbf{LP}$.

\begin{lemma}
\label{lem:gen-pre-assign}
Let $p \in \Upsilon$ be a black component and $U_p$ satisfies $y_p \leq 2\ell$ and let $Z_p \in \{0 ,1\}$ be a random variable, such that $E[Z_p] = \dot{y}_p - \floor{\dot{y}_p}$. Moreover, constraints (\ref{clp:6}) to (\ref{clp:13}) are satisfied for the solution $\{x', y', z'\}$ and $U_p$. Then, we can pre-open a set $S \subseteq U_p$ of expected cardinality $\dot{y}_p$, where $|S| = \floor{\dot{y}_p} + Z_p$, and pre-assign a set $C' \subseteq C$ of clients to $S$ such that
	\begin{enumerate}
    	\item \label{gen-at_most_u} each facility $i \in S$ is pre-assigned at most $u$ clients
        \item \label{gen-out_demand} expected cost of sending not assigned demand $x_{p, C \setminus C'}$ to the root of the parent-group is at most $O(\ell^2) d(J_p, R \setminus J_p) \pi(J_p)$
        \item \label{gen-exp_y'_p} $\Pr[|S| = \floor{\dot{y}_p}]=\dot{y}_p - \floor{\dot{y}_p}$ and $\Pr[|S| = \ceil{\dot{y}_p}]=1 - (\dot{y}_p - \floor{\dot{y}_p})$
        \item \label{gen-cost_of_pre} expected cost of pre-assignment and local moving of $x_{p, C \setminus C'}$ is at most $O(\ell)\sum_{j \in C, i \in U_p} d(i, j){x'}_{i, j}$
	\end{enumerate}
\end{lemma}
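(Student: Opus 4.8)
The plan is to produce $S$ in two stages: a \emph{massage} stage that replaces the configuration distribution $\{{z'}_{S}^{p}\}_{S\in\tilde{\mathcal{S}}}$ on $U_p$ by one supported only on subsets of size $\floor{\dot{y}_p}$ and $\ceil{\dot{y}_p}$, followed by a \emph{sample-and-match} stage that draws one such subset and rounds the fractional client assignment the configuration LP attaches to it. First I would record two consequences of the hypotheses that make conditioning on $S\neq\bot$ harmless. By Constraint~(\ref{clp:13}), $\sum_{i\in U_p}{z'}_{\bot,i}^{p}\ge\ell_1{z'}_{\bot}^{p}$; since $\sum_{i\in U_p}{z'}_{\bot,i}^{p}\le{y'}_p=y_p\le 2\ell$ and $\ell_1=\ell^2$, this gives ${z'}_{\bot}^{p}\le 2/\ell$, hence $1-{z'}_{\bot}^{p}=\Omega(1)$, and also $\dot{y}_p\le 2\ell<\ell_1$, so every set the massage produces still lies in $\mathcal{S}$. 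After conditioning, the expected size of the open set is $\dot{y}_p$ by the definition of $\dot{y}_p$.

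The massage itself splits into the two promised cases. When $|C|=ku$ every open facility is saturated, i.e.\ $\sum_{j}{z'}_{S,i,j}^{p}=u\,{z'}_{S}^{p}$; I would repeatedly move a facility from a configuration of size $>\ceil{\dot{y}_p}$ into one of size $<\floor{\dot{y}_p}$, carrying with it exactly enough demand that the capacity lost in the shrunk configuration is matched by the capacity created in the grown one, the residual being demand that will leave $p$ later, and I would pick these transports so as to evict preferentially the demand of clients $j$ with small $x_{p,j}$. When $|C|$ is arbitrary I would instead keep only a shortest prefix (ordered by size) of the configurations whose total weight exceeds an absolute constant, pad each with facilities of $U_p$ up to size $\floor{\dot{y}_p}$ or $\ceil{\dot{y}_p}$, and rescale. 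In both cases the resulting size takes only the values $\floor{\dot{y}_p},\ceil{\dot{y}_p}$ with $\Pr[\,|S|=\ceil{\dot{y}_p}\,]=\dot{y}_p-\floor{\dot{y}_p}$, which is item~\ref{gen-exp_y'_p}; and every facility that is moved or added stays inside $U_p$, i.e.\ moves over distance $O(\ell)\,d(J_p,R\setminus J_p)$ by Claim~\ref{claim:black_edges} together with $|J_p|=O(\ell)$ (which follows from $y_p\le 2\ell$ and Claim~\ref{claim:clustering}), so these moves contribute at most $O(\ell)$ times the right-hand sides of items~\ref{gen-out_demand} and~\ref{gen-cost_of_pre}.

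For the second stage, I would sample $S$ from the massaged distribution, coupled with the given $Z_p$ so that $|S|=\floor{\dot{y}_p}+Z_p$ (consistent since $E[Z_p]=\dot{y}_p-\floor{\dot{y}_p}$). Conditioned on $S$, put $w_{i,j}:={z'}_{S,i,j}^{p}/{z'}_{S}^{p}$ for $i\in S$: by Constraints~(\ref{clp:11}) and~(\ref{clp:12}) (using ${z'}_{S,i}^{p}={z'}_{S}^{p}$ for $S\in\mathcal{S}$) this $w$ lies in the integral bipartite $b$-matching polytope with client degrees at most $1$ and facility degrees at most $u$, so I can sample an integral matching from a convex decomposition of $w$, take $C'$ to be the set of matched clients, and pre-assign each of them to its matched facility; item~\ref{gen-at_most_u} is then immediate. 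Because this rounding preserves edge marginals, the expected cost of connecting $C'$ equals (modulo the massage charge above) $\sum_S\frac{{z'}_{S}^{p}}{1-{z'}_{\bot}^{p}}\sum_{i\in S,\,j}w_{i,j}d(i,j)=\frac{1}{1-{z'}_{\bot}^{p}}\sum_{i\in U_p,\,j}d(i,j)\sum_{S\in\mathcal{S}:\,i\in S}{z'}_{S,i,j}^{p}\le\frac{1}{1-{z'}_{\bot}^{p}}\sum_{i\in U_p,\,j}d(i,j){x'}_{i,j}=O(1)\sum_{i\in U_p,\,j}d(i,j){x'}_{i,j}$ by Constraint~(\ref{clp:8}), and the remaining "local" re-routing of unassigned demand (back towards the clients, together with the massage transports) costs another $O(\ell)\sum_{i\in U_p,\,j}d(i,j){x'}_{i,j}$ by the triangle inequality, giving item~\ref{gen-cost_of_pre}. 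Finally $\Pr[j\in C']=E_S[\sum_{i\in S}w_{i,j}]$; for $x_{p,j}>1-\epsilon'$ I would check, using $\sum_{S:\,i\in S}{z'}_{S,i,j}^{p}={x'}_{i,j}$, Constraint~(\ref{clp:11}) at $S=\bot$, and the massage eviction bound, that this probability is at least $1-O(1-x_{p,j})$, whence $\Pr[j\notin C']=O(1-x_{p,j})$, while for $x_{p,j}\le 1-\epsilon'$ trivially $\Pr[j\notin C']\le 1\le(1-x_{p,j})/\epsilon'=O(\ell)(1-x_{p,j})$; thus $E[\sum_{j\notin C'}x_{p,j}]=O(\ell)\sum_{j}x_{p,j}(1-x_{p,j})=O(\ell)\pi(J_p)$, and since the demand of such a client reaches the root $r'$ of the parent group over distance $O(\ell)\,d(J_p,R\setminus J_p)$ by Lemma~\ref{lem:group_edges}(\ref{dist_to_root2}) (its $U_p$-servers being collocated with representatives in $J_p$), this gives item~\ref{gen-out_demand}.

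I expect the massage to be the main obstacle. It has to flatten the size distribution onto the two values $\floor{\dot{y}_p},\ceil{\dot{y}_p}$ — this is exactly what will let the later pipage between black components preserve $\sum_p\dot{y}_p$ precisely — while simultaneously (a) maintaining a feasible uniform-capacity assignment, which in the tight regime $|C|=ku$ forces the careful balancing of evicted demand against newly created capacity via the saturation identity, and (b) evicting essentially only demand of clients with small $x_{p,j}$, so that the per-client estimate $\Pr[j\notin C']=O(1-x_{p,j})$ on which the $\pi(J_p)$ factor of item~\ref{gen-out_demand} depends is not destroyed. Everything after the massage — the coupling with $Z_p$, the $b$-matching rounding, and the cost bookkeeping — should then be routine.
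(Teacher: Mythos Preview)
Your two-stage plan (massage, then sample-and-match coupled to $Z_p$) is exactly the paper's approach, and your endgame --- the $b$-matching rounding of $w_{i,j}={z'}^{p}_{S,i,j}/{z'}^{p}_{S}$, the case split on decided vs.\ undecided clients, and the use of Lemma~\ref{lem:group_edges}(\ref{dist_to_root2}) for the $O(\ell)\,d(J_p,R\setminus J_p)$ distance --- matches as well. (A cosmetic difference: the paper first forces the solution to be \emph{complete}, so $w_{i,j}\in\{0,1\}$ automatically; your appeal to integrality of the $b$-matching polytope avoids that preprocessing and is equally valid.)

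There is, however, one genuine parameter error in your brutal-massage step that makes the ``pad only'' claim fail as stated. You say you keep a prefix (by size) of total weight ``an absolute constant'' and then pad every configuration in it up to $\floor{\dot{y}_p}$ or $\ceil{\dot{y}_p}$. But if the prefix weight is $c$, the largest set in the prefix can only be bounded by $\dot{y}_p/(1-c)$; to force this below $\dot{y}_p+1$ you need $c\cdot\dot{y}_p/(1-c)<1$, and since $\dot{y}_p$ can be as large as $2\ell$ this forces $c=\Theta(1/\ell)$, not a universal constant. The paper takes exactly $c=\tfrac{1}{3\ell}$ (after conditioning on $S\neq\bot$) and then rescales by $3\ell$. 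With an absolute-constant $c$ you would be stuck with prefix configurations of size up to $\Theta(\ell)$, and padding alone no longer suffices: you would have to \emph{remove} facilities, which in the non-saturated case is precisely what the brutal massage is designed to avoid.

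This correction propagates: with $c=\Theta(1/\ell)$ the rescaling is by $\Theta(\ell)$, so your ``$\Pr[j\notin C']=O(1-x_{p,j})$'' for decided $j$ becomes $\Pr[j\notin C']\le (1-x'_{p,j})/\bigl(c(1-{z'}^{p}_{\bot})\bigr)=O(\ell)(1-x'_{p,j})$, and the expected pre-assignment cost picks up the same $O(\ell)$ factor. Neither change hurts the final bounds --- items~\ref{gen-out_demand} and~\ref{gen-cost_of_pre} already allow $O(\ell^2)$ and $O(\ell)$ respectively --- but the internal bookkeeping you wrote ($O(1)$ in both places) is off by one factor of $\ell$.
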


The below proof of the Lemma \ref{lem:gen-pre-assign} is quite technical, it becomes somewhat easier in the saturated case, i.e., when $|C| = ku$ (see Appendix~\ref{sat_case}).

\begin{proof}
By splitting the sets and making copies (see Lemma 1 in \cite{spliting_svir}), we can assume that the solution $\{x', y', z'\}$ is complete w.r.t. any black component $p \in \Upsilon$ (see Definition \ref{complete_solution}). First we will describe a process of modifying a solution for black component $p \in \Upsilon$. Our goal is to decrease the variance in the size of sets with positive $z^p$ values and in parallel do not increase the cardinality of expected open set. A client $j$ is "decided" if ${x'}_{p,j} \geq 1 - \frac{1}{6\ell} = t$, and not decided otherwise. Let $C'' = \{j \in C~|~{x'}_{p,j} \geq t \}$ be the set of decided clients. Let us select $\mathcal{S}^1 \subseteq \mathcal{S}$, such that $\sum_{S \in \mathcal{S}^1} \frac{{z'}_S^p}{1-{z'}^{p}_{\bot}} = \frac{1}{3\ell}$. Also, we define ${z''}^p_{S}:=3\ell\frac{{z'}^p_{S}}{1-{z'}^{p}_{\bot}}$ and ${z''}^p_{S,i}:=3\ell\frac{{z'}^p_{S,i}}{1-{z'}^{p}_{\bot}}$ for all $S\in \mathcal{S}^1$ and $i\in S$. Moreover, ${z''}^p_{S, i, j}:=3\ell\frac{{z'}^p_{S,i,j}}{1-{z'}^{p}_{\bot}}$ for all $S\in \mathcal{S}^1, (i,j) \in S \cross C''$ and ${z''}_{\bot}^p = {z'}_{\bot}^p, {z''}_{\bot, i}^p = {z'}_{\bot,i}^p~\forall i \in U_p$. All the other variables ${z''}^p$ are equal zero. The variables $x'', y''$ we define such that constraints (\ref{clp:7}) and (\ref{clp:8}) are satisfied. All the demand of clients from $C \setminus C''$ which is served by $p$ we send to the root of the parent group of $G \ni p$.

Let us consider the set $S\in \mathcal{S}^1$ of maximum cardinality. Let $n_{\mathcal{S}^1}=|S|$. We know that $\dot{y}_p = \sum_{S\in \mathcal{S}} \frac{z^p_S|S|}{1-{z'}^{p}_{\bot}}$. The above equality gives 
$$
 \sum_{S \in \mathcal{S}^1} \frac{|S|{z'}^p_S}{1-z_{\bot}^p}  + \sum_{S \in \mathcal{S} \setminus \mathcal{S}^1} \frac{|S|{z'}^p_S}{1-z_{\bot}^p} = \dot{y}_p 
$$
Replacing $|S|=0$ for sets in $\mathcal{S}^1$ and $|S|=n_{\mathcal{S}^1}$ for sets in $\mathcal{S}\setminus \mathcal{S}^1$ we get, $\left(1- \frac{1}{3\ell} \right) n_{\mathcal{S}^1} \leq \dot{y}_p $. This implies 
$$n_{\mathcal{S}^1} \leq \frac{\dot{y}_p}{(1-\frac{1}{3\ell})} \leq \dot{y}_p + \frac{\dot{y}_p}{(3\ell-1)}  \leq \dot{y}_p + \frac{2\ell}{(3\ell-1)} < \dot{y}_p + 1$$
The last inequality assumes that $\ell$ is big enough, such that $\frac{2\ell}{(3\ell-1)}<1$. The above calculations imply that $ n_{\mathcal{S}^1} \leq \ceil{\dot{y}_p}$.

We have to increase the cardinality of each set from $\mathcal{S}^1$, such that it is either $\floor{\dot{y}_p}$ or $\ceil{\dot{y}_p}$ and $\sum_{S \in \mathcal{S}} {z''}_S^p |S|$ is exactly equal to $\dot{y}_p$. When we increase cardinality of some set $S$ by some number $d$ of facilities then we "create" $d$ facilities in the root of black component $p$ each with an opening ${z''}^p_S$ and we add this newly created facilities to the set $S$. For each $S \in \mathcal{S}^1$ which has cardinality smaller then $\floor{\dot{y}_p}$ we increase its cardinality up to $\floor{\dot{y}_p}$. Now, until $\sum_{S \in \mathcal{S}} {z''}_S^p |S|$ is smaller than $\dot{y}_p$ we pick any set $S \in \mathcal{S}^1 : |S| = \floor{\dot{y}_p}$ and increase by one its cardinality (It might happen that we need to replace $S$ with its two copies $S' = S''$, so that we do not increase $\sum_{S \in \mathcal{S}} {z''}_S^p |S|$ to much. They have positive openings ${z''}_{S'}^p, {z''}_{S''}^p$ which sum up to ${z''}_{S}^p$. We increase the cardinality of only one of these two sets). Finally, we have $\sum_{S \in \mathcal{S}} {z''}_S^p |S| = \dot{y}_p$ 
 
Observe that $\sum_{S \in \mathcal{S}} {z''}_{S}^{p} = 1$. Recall that the sets with positive ${z''}_S^p$ value have cardinality either $\floor{\dot{y}_p}$ or $\ceil{\dot{y}_p}$. Define $\mathcal{S}_1 = \{S \in \mathcal{S}~|~{z''}_S^p > 0 \wedge |S| = \floor{\dot{y}_p}\}$ and $\mathcal{S}_2 = \{S \in \mathcal{S}~|~ {z''}_S^p > 0 \wedge |S| = \ceil{\dot{y}_p}\}$. If variable $Z_p$ has value 0, then we select a set from $\mathcal{S}_1$, otherwise if $Z_p = 1$ then we should select it from $\mathcal{S}_2$. To do this we proceed in the following way. If $Z_p = 0$ then we set $\bar{z}_S^p = \frac{{z''}_S^p}{1 - (\dot{y}_p - \floor{\dot{y}_p})}$ and $\bar{z}_{S, i, j}^p = \frac{{z''}_{S, i, j}^p}{1 - (\dot{y}_p - \floor{\dot{y}_p})}$ for all $S \in \mathcal{S}_1, i\in S, j \in C''$ and all the other variable we set to zero. Otherwise, if $Z_p = 1$ then $\bar{z}_S^p = \frac{{z''}_S^p}{\dot{y}_p - \floor{\dot{y}_p}}$ and $\bar{z}_{S, i, j}^p = \frac{{z''}_{S, i, j}^p}{\dot{y}_p - \floor{\dot{y}_p}}$ for all $S \in \mathcal{S}_2, i\in S, j \in C''$ and all the other variables we set to zero. Note that the expected value of $\bar{z}_S^p (\bar{z}_{S, i ,j}^p)$ is equal to ${z''}_S^p ({z''}_{S, i, j}^p)$ for each $S \in \mathcal{S}$. Moreover, $\bar{z}^p$ is a probability distribution over sets from $\mathcal{S}_1 \cup \mathcal{S}_2$, because $\dot{y}_p - \floor{\dot{y}_p} = \sum_{S \in \mathcal{S}_2} {z''}_S^p$ and $1 - (\dot{y}_p - \floor{\dot{y}_p}) = \sum_{S \in \mathcal{S}_1} {z''}_S^p$. This implies that property (\ref{gen-exp_y'_p}) holds.

Now assume that we have selected a set $S \in \mathcal{S}_1 \cup \mathcal{S}_2$ according to this probability distribution. For each client $j \in C''$ and facility $i \in S$ let $w_{i,j} := \frac{{z''}^p_{S, i ,j}}{{z''}^p_{S, i}} = \frac{{z''}^p_{S, i ,j}}{{z''}_S^p} = \frac{z^p_{S, i ,j}}{z_S^p}$. Notice that $w_{i, j} \in [0,1]$, by constraint (\ref{clp:9}), $\sum_{i \in S} w_{i, j} \leq 1$ for every client $j \in C''$ by constraint (\ref{clp:11}), and $\sum_{j \in C''} w_{i,j} \leq u$ for every $i \in S$ by constraint (\ref{clp:12}). Since, the solution $\{x'', y'', z''\}$ is complete, hence  $w_{i, j} \in \{0,1\}$, for each client $j \in C''$ and facility $i \in S$. Thus, $w$ defines an integral matching between $C''$ and $S$ in which each client is matched at most once, and each facility from $S$ is matched at most $u$ times.

As we said before all the demand of clients from $C \setminus C''$ which is served by $p$ in LP solution we send to the root of the parent group of $\mathcal{G} \ni p$. The same we do for demand of clients from $C''$ which were not selected in the above procedure. For all the rest of the demand we assign it according to the output of the above procedure. From Lemma~\ref{lem:group_edges} we know that the distance to the root of the parent group is at most $O(\ell)d(J_p, R \setminus J_p)$.

For each $(i, j) \in S \cross C''$ by $j \rightarrow i$ we denote the event that client $j$ is pre-assigned to facility $i$.

$$\mathrm{Pr}[j \rightarrow i] = \mathrm{Pr}[Z_p = 0]  \sum_{S \in \mathcal{S}_1, i \in S} \bar{z}_S^p \frac{{z''}_{S, i, j}^p}{{z''}_S^p} + \mathrm{Pr}[Z_p = 1]  \sum_{S \in \mathcal{S}_2, i \in S} \bar{z}_S^p \frac{{z''}_{S, i, j}^p}{{z''}_S^p} =$$ $$\sum_{S \in \mathcal{S}_1 \cup \mathcal{S}_2, i \in S} {z''}_{S, i, j}^p \leq {x''}_{i,j}$$ 
Using the upper bound on $P[j \rightarrow i]$ we can bound the total pre-assignment cost in the following way.

$$\mathrm{E}[\text{total pre-assignment cost}] \leq \sum_{j \in C'', i \in U_p} d(i, j) 
\mathrm{Pr}[j \rightarrow i] \leq \sum_{j \in C'', i \in U_p} d(i, j) {x''}_{i, j}$$

For all the rest of the clients which were not assigned, we will send their demand to the root of the parent group via their respective representatives (for each representative $v$, facilities in $U_v$ are co-located with $v$). The cost of sending $x_{i,j}$ units of not pre-assigned demand to their representatives we can bound by $x_{i,j}d(i,j)$. Notice that decided client can be either pre-assigned or sent to the root, so in the worst case we can bound a total cost of pre-assignment and moving to representatives by

$$\sum_{j \in C'', i \in U_p} d(i, j){x''}_{i, j} + \sum_{j \in C, i \in U_p} d(i, j){x'}_{i, j} \leq O(\ell) \sum_{j \in C, i \in U_p} d(i, j){x'}_{i, j}$$

From the definition of decided clients, for each $j \in C''$, it follows ${x'}_{p, j}  \geq 1 - \frac{1}{6\ell}$. Moreover, $\sum_{S \in \mathcal{S}^1} {z'}_S^p = \frac{1 - {z'}_{\bot}^p}{3\ell}$, so we can lower bound ${x'}_{\mathcal{S}^1, j} = \sum_{S \in \mathcal{S}^1}\sum_{i \in S} {z'}^p_{S,i,j}$ in the following way.

$${x'}_{\mathcal{S}^1, j} \geq {x'}_{p, j} - (\sum_{S \in \tilde{\mathcal{S}}} {z'}_S^p - \sum_{S \in \mathcal{S}^1} {z'}_S^p) = {x'}_{p, j} - 1 + \frac{1 - {z'}_{\bot}^p}{3\ell}$$

For each $j \in C''$ we have, ${x''}_{p, j} = {x''}_{\mathcal{S}^1, j} = \frac{3\ell}{1 - {z'}_{\bot}^p} {x'}_{\mathcal{S}^1, j}$.

By $j \rightarrow$ we denote the event that client $j$ was assigned and $\Pr[j \rightarrow]$ the probability of this event. We will apply the following bound only for the decided clients $C''$.

$$\Pr[j\rightarrow] \geq {x''}_{p, j} = {x''}_{\mathcal{S}^1, j} \geq \frac{3\ell}{1-{z'}_{\bot}^p}({x'}_{p, j} - 1 + \frac{1 - {z'}_{\bot}^p}{3\ell}) = 1 + \frac{3\ell}{1-{z'}_{\bot}^p}({x'}_{p, j} - 1)$$
It might happen that a decided client $j$ is not assigned to his black component $p$. Then we send demand ${x'}_{p, j}$ to the root of his parent group. We upper bound the cost of this step in the following way.

$$\sum_{j \in C''} (1 - \mathrm{P}[j \rightarrow]) {x'}_{p, j} O(\ell) d(J_p, R \setminus J_p) 
\leq \sum_{j \in C''} (1 - {x''}_{p, j}) {x'}_{p, j} O(\ell) d(J_p, R \setminus J_p) \leq$$

\begin{eqnarray}
\label{gen-ineq_cost_out_1}
O(\ell^2) d(J_p, R \setminus J_p) \sum_{j \in C''} {x'}_{p, j}(1 - {x'}_{p, j}).&
\end{eqnarray}
where the inequality follows by the bound on $\Pr[j\rightarrow]$. Notice that decided client can be assigned in at most one black component. As we mentioned before, we will send all demand of not decided clients to the root. Hence, the cost for sending this demand is at most

\begin{eqnarray}
\label{gen-ineq_cost_out_2}
O(\ell) d(J_p, R \setminus J_p) \sum_{j \in C\setminus C''} {x'}_{p, j} &  \leq & O(\ell^2) d(J_p, R \setminus J_p) \sum_{j \in C\setminus C''} {x'}_{p, j} (1 - {x'}_{p, j})
\end{eqnarray}
where the inequality holds because ${x'}_{p, j} <1 - \frac{1}{6\ell}$.
By summing up inequalities (\ref{gen-ineq_cost_out_1}) and (\ref{gen-ineq_cost_out_2}) and by the fact that ${x'}_{p, j} = x_{p, j}$ we can bound the cost of sending not assigned demand from the representatives to the root by $O(\ell^2) d(J_p, R \setminus J_p) \pi(J_p)$.
\qed
\end{proof}

\subsection{Dependent Rounding}
\label{dp_sec}

We will use a dependent rounding (DR) procedure, described in \cite{srin_dep_round}, to decide if a particular variable should be rounded up or down. It transforms fractional vector $\{\bar{v}_i\}_{i = 1}^n$ to a random integral vector $\{\hat{v}_i\}_{i=1}^n$. DR procedure has the following properties:
\begin{enumerate}
\item \label{marginal} Marginal distribution: $Pr[\hat{v_i} = 1] = \bar{v}_i$
\item \label{sum_prev} Sum-preservation: $\sum_{i=1}^n \hat{v}_i \in \{\floor{\sum_{i=1}^n \bar{v}_i}, \ceil{\sum_{i=1}^n \bar{v}_i}\}$
\end{enumerate}
In our procedure we first fix a tree $\tau \in \Upsilon$. Then we choose a pair of fractional black components according to a predefined order. After that we increase the opening of one and decrease the opening of the other in a randomized way. After each such iteration, at least one black component has an integral opening. Based on the value of the integral opening $y{'''}_p \in \{\floor{\dot{y}_p}, \ceil{\dot{y}_p}\}$ decided for a black component $p \in \Upsilon$, we will select a set of facilities $S\in \mathcal{S}: |S|=y'''_p$ in a random way (for details see Lemma~\ref{lem:gen-pre-assign}). First we do dependent rounding among the black components of the children groups of each parent group. After this step each group $\mathcal{G}$ will have at most one fractional black component among all black components in its children groups, and the total opening (capacity) within these black components will be preserved. Finally, once we complete this rounding phase for all the trees in $\Upsilon$, then we will do dependent rounding among all the remaining fractional black components across all the trees in $\Upsilon$ in an arbitrary order. The procedure will preserve the sum of facility openings, hence in the end we will open exactly $k$ facilities.

In the "rounding among children groups" step, we will do the rounding among the black components within the children-groups of a group $\mathcal{G}$ in an order defined by non-decreasing distance of these black components to the root $r$ of the parent group $\mathcal{G}$ (breaking ties arbitrarily). This way, we would have an extra property on the number of open facilities for every prefix in this order of the black components belonging to the children-groups of group $\mathcal{G}$. 


Before we start the rounding procedure, we will send exactly $\sum_{i\in U_p} {z'}_{\bot,i}^p - \dot{y}_p {z'}_{\bot}^p$ opening, from each black component $p$, to a virtual black component $v_\mathcal{G}$ co-located with the root of the group. Note that since ${z'}_{\bot}^p = {z}_{\bot}^p$ and ${z'}_{\bot,i}^p = {z}_{\bot,i}^p$, we can use $z$ instead of $z'$. Let us define $\dot{y}_{v_{\mathcal{G}}} = \sum_{p \in \mathcal{G}} \sum_{i\in U_p} z_{\bot,i}^p - \dot{y}_p z_{\bot}^p$. We will call this the \emph{blue} opening. We will treat this blue opening, co-located with the root, as a virtual black component. Since we are in the uniform capacity case, by loosing a constant factor we can assume that $\mathcal{F}=\mathcal{C}$ \cite{Li15uniform}. Moreover we can work with soft capacitated version of the problem due to Theorem 1.2 in \cite{Li15uniform}. Hence, for the blue opening, we can simply open the decided number of co-located facilities at the virtual black component. By $BCG(\mathcal{G})$ we denote a set of all, virtual or not, black components in the children groups of group $\mathcal{G}$. Note that, from now on the group $\mathcal{G}$ also contains the virtual black component $v_{\mathcal{G}}$.

Consider the root group of the tree $\tau$. If it is a singleton root component then we classify it as a virtual black component, otherwise we treat it as a standard black component. Note that the sum $\dot{y}_p + \sum_{i \in U_p} {z'}_{\bot, i}^{p} - \dot{y}_p {z'}_{\bot}^{p} = \sum_{i \in U_p} z_{\bot, i}^{p} + (1 - z_{\bot}^p) \dot{y}_p = y_p = {y'}_p$. Hence, the total opening across all the black components is exactly equal to $k$.

\begin{lemma}
\label{lem:gen-blue}
For any group $\mathcal{G}$, the total demand is at most $(1+O(1/\ell)) \sum_{p \in \mathcal{G}} u \dot{y}_p$. 
\end{lemma}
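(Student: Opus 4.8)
The first step is to rewrite the right‑hand side in purely LP terms. Recall that after the blue‑opening move the group $\mathcal{G}$ contains the virtual component $v_{\mathcal{G}}$ with $\dot y_{v_{\mathcal{G}}}=\sum_{q\in\mathcal{G},\,q\neq v_{\mathcal{G}}}\bigl(\sum_{i\in U_q}z^q_{\bot,i}-\dot y_q z^q_{\bot}\bigr)$, and that every real component satisfies $y'_q=\dot y_q(1-z^q_{\bot})+\sum_{i\in U_q}z^q_{\bot,i}$ (this is exactly the identity displayed just before the lemma). Substituting the first expression into $\sum_{p\in\mathcal{G}}\dot y_p$ and simplifying gives the clean equality $\sum_{p\in\mathcal{G}}\dot y_p=\sum_{q\in\mathcal{G}\text{ real}}y'_q=:y(\mathcal{G})$, the total LP opening volume carried by the real black components of $\mathcal{G}$. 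So it suffices to prove that the demand the facilities opened inside $\mathcal{G}$ must serve is at most $(1+O(1/\ell))\,u\,y(\mathcal{G})$.

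Next I would describe precisely which demand $\mathcal{G}$ is charged with. By Lemma~\ref{lem:gen-pre-assign}, for each real $p\in\mathcal{G}$ the clients served by $U_p$ are handled in three ways: a pre‑assigned client is charged to $\mathcal{G}$ as a full unit; a decided client that fails pre‑assignment and every undecided client has its fractional $p$‑load $x_{p,j}$ routed toward the root of the group; and the $\bot$‑part of the load is absorbed by the blue opening collected at that root. Two observations make the accounting tight. First, every pre‑assigned client is \emph{decided}, i.e. $x_{p,j}\ge 1-\tfrac{1}{6\ell}$, so when we charge it as one unit instead of $x_{p,j}$ we over‑count by at most $\tfrac{1}{6\ell}$; since a decided client has $x_{p,j}>\tfrac12$ and $\sum_i x_{i,j}=1$, it is decided with respect to at most one black component, so these over‑counts sum (over all $p\in\mathcal{G}$) to at most $\tfrac{1}{6\ell}$ times the LP load of the pre‑assigned clients. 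Second, the load routed to the root plus the part charged to the blue opening is, for each $p$, bounded by the full LP load $x_{p,C}$ of $U_p$ — nothing leaves $\mathcal{G}$ other than what is already counted.

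Putting these together, the total demand charged to $\mathcal{G}$ is at most $\bigl(1+\tfrac{1}{6\ell-1}\bigr)\sum_{p\in\mathcal{G}\text{ real}}x_{p,C}$. Finally, summing the capacity constraint~(\ref{blp:capacity}) over $i\in U_p$ gives $x_{p,C}\le u\,y'_p$ (the $y'_p$ already counts the $\bot$‑facilities, which is why the blue share is not double counted), hence $\sum_{p}x_{p,C}\le u\,y(\mathcal{G})$. Combining, the total demand is at most $(1+O(1/\ell))\,u\,y(\mathcal{G})=(1+O(1/\ell))\sum_{p\in\mathcal{G}}u\dot y_p$, as claimed.

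The step I expect to be the main obstacle is making the second paragraph rigorous in the presence of the massage/rescaling of Lemma~\ref{lem:gen-pre-assign}: one has to verify that after the massage each real component $p$ indeed serves at most $u\dot y_p$ of demand locally while its blue contribution $u\bigl(\sum_i z^p_{\bot,i}-\dot y_p z^p_{\bot}\bigr)$ is served exactly once at the virtual component, and that the threshold $1-\tfrac{1}{6\ell}$ (and the scaling $3\ell$ used in the massage) is precisely what forces the rounding excess to be $O(1/\ell)$. I would also need a short separate check of the degenerate cases — a singleton root component reclassified as a virtual component, and leaf groups of small volume — to confirm the identity $\sum_{p\in\mathcal{G}}\dot y_p=y(\mathcal{G})$ and the demand bound still hold there verbatim.
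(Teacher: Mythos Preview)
Your argument is correct and in fact cleaner than the paper's for the lemma as stated. The identity $\sum_{p\in\mathcal{G}}\dot y_p=\sum_{q\in\mathcal{G}\text{ real}}y'_q$ (which the paper records just before Section~\ref{dp_sec}) lets you collapse the whole question to the basic LP capacity constraint~(\ref{blp:capacity}): total demand $\le \tfrac{1}{1-1/(6\ell)}\sum_{q}x_{q,C}\le \tfrac{1}{1-1/(6\ell)}\,u\sum_q y'_q=(1+O(1/\ell))\,u\sum_{p\in\mathcal{G}}\dot y_p$. Your worry in the last paragraph is unnecessary: you never need the ``local'' bound that each real $p$ serves at most $u\dot y_p$, only the aggregate, and that follows directly from~(\ref{blp:capacity}) without touching the massage or the configuration variables at all.

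The paper takes a longer route: it splits $x_{p,C}$ via Constraint~(\ref{clp:8}) into the $\mathcal{S}$-part $\sum_{S\in\mathcal{S},i\in S,j}z^p_{S,i,j}$ and the $\bot$-part $\sum_{i,j}z^p_{\bot,i,j}$, bounds the first by $(1+O(1/\ell))u\dot y_p$ using Constraints~(\ref{clp:10}),(\ref{clp:12}), and bounds the second by $(1+O(1/\ell))u(\sum_i z^p_{\bot,i}-\dot y_p z^p_\bot)$ using Constraints~(\ref{clp:12}),(\ref{clp:13}) together with $\dot y_p\le 2\ell$. The reason the paper does this is that the finer statement --- $\bot$-demand alone fits the blue opening, $\mathcal{S}$-demand alone fits $u\dot y_p$ --- is invoked again inside the proof of Lemma~\ref{lem:gen-pull_cost} (``an argument similar to the proof for Lemma~\ref{lem:gen-blue}''), where prefix sums over subsets of $BCG(\mathcal{G})$ are needed and the virtual component has to be accounted for separately. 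Your aggregate identity does not immediately give those per-component prefix bounds, so if you use your proof here you should expect to redo the $\mathcal{S}/\bot$ split when you reach Lemma~\ref{lem:gen-pull_cost}.
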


\begin{proof}
Let us look at a black component $p \in \mathcal{G}$. Notice that the demand for any clients $j \in C \setminus C''$ remains $x'_{p, j}= x_{p, j}$(see Lemma~\ref{lem:gen-pre-assign}). For any clients $j' \in C''$, either they are pre-assigned, in which case their demand is $1$, otherwise their demand remains $x'_{p, j'} = x_{p, j'}$. Since, $x_{p, j'} \geq 1-\frac{1}{6\ell}$, for all $j' \in C''$, in the worst case the total demand associated with $p$ is at most
$$\sum_{j \in C'', i \in U_p} \frac{{x}^p_{i, j}}{1- 1/(6 \ell)} + \sum_{j \in C \setminus C'', i \in U_p} {x}^p_{i, j} \leq$$ $$\sum_{j \in C, i \in U_p} \frac{{x}^p_{i, j}}{1- 1/(6 \ell)} =  \frac{1}{1-1/(6\ell)} \left(\sum_{S \in \mathcal{S}, i \in S, j \in C} z_{S, i, j}^p +  \sum_{i \in U_p, j \in C} z_{\bot, i, j}^p\right)$$

First let us compare the opening value $\dot{y}_p z_{\bot}^{p}$ (which is sent to $v_{\mathcal{G}}$ by component $p$) to the original opening $\sum_{i \in U_p} z_{\bot, i}^{p}$, for some $p \in \mathcal{G} \setminus \{v_{\mathcal{G}}\}$.
\begin{eqnarray}
\label{ineq:gen-blue_opening}
\frac{\dot{y}_p z_{\bot}^{p}}{\sum_{i \in U_p} z_{\bot, i}^{p}} \leq \frac{2 \ell z_{\bot}^{p}}{\ell^2 z_{\bot}^{p}} = \frac{2}{l}
\end{eqnarray}
The inequality follows from the fact that $\dot{y}_p \leq 2\ell$ and from Constraint~(\ref{clp:13}).
By adding Constraint~(\ref{clp:12}) for set $\bot$ over all $i \in U_p$, we get $\sum_{i \in U_p}z_{\bot, i}^{p}u \geq \sum_{i \in U_p, j \in C}z_{\bot, i, j}^{p}$. Consider the demand (scaled up by a factor $\frac{1}{1-1/(6\ell)}$) served by the $\bot$ set in the initial LP solution and the blue opening sent from $p$ to $v_{\mathcal{G}}$ which is equal to $\sum_{i\in U_p} z_{\bot, i}^{p} - \dot{y}_p z_{\bot}^{p}$. Now we show that the ratio between this demand and the capacity corresponding to the blue opening is at most $(1+O(1/\ell))$. Note that the demand served by $\bot$ remains the same in all the modified solutions.
$$ \frac{1}{1-1/(6\ell)} \frac{\sum_{i \in U_p, j \in C}z_{\bot, i, j}^{p}}{u \left( \sum_{i \in U_p} z_{\bot, i}^{p} - \dot{y}_{p} z_{\bot}^p \right)} \stackrel{Const. (\ref{clp:12})} \leq \frac{1}{1-1/(6\ell)}\ \frac{u \sum_{i \in U_p} z_{\bot, i}}{u \left(\sum_{i \in U_p} z_{\bot, i}^{p} - \dot{y}_{p}z_{\bot}^p\right)} 
$$
$$
= \frac{1}{1-1/(6\ell)}\ \frac{1}{1-\frac{ \dot{y}_{p}z_{\bot}^p}{\sum_{i \in U_J} z_{\bot, i}^{p}}} 
\stackrel{Ineq. (\ref{ineq:gen-blue_opening})} \leq \frac{1}{(1-1/(6\ell))(1-2/\ell)} = (1+O(1/\ell))
$$

The following calculations show that the capacity violation on $v_{\mathcal{G}}$ w.r.t. to the total scaled up demand served by $\bot$ set of each $\{U_p : p \in \mathcal{G} \setminus \{v_{\mathcal{G}}\}\}$ and the blue opening sent from all the black components of that group is at most $(1+O(1/\ell))$.
\begin{eqnarray}
\label{ineq:gen-blue_opening_2}
\frac{1}{1-1/(6\ell)} \frac{\sum_{p \in \mathcal{G} \setminus \{v_{\mathcal{G}}\}} \sum_{i \in U_p, j \in C}z_{\bot, i, j}^{p}}{\sum_{p \in \mathcal{G} \setminus \{v_{\mathcal{G}}\}} u \left( \sum_{i \in U_p} z_{\bot, i}^{p} - \dot{y}_{p}z_{\bot}^p \right)} \leq
 \max_{p \in \mathcal{G} \setminus \{v_{\mathcal{G}}\}} \frac{6\ell}{6\ell-1} \frac{\sum_{i \in U_p, j \in C}z_{\bot, i, j}^{p}}{\sum_{i \in U_p} z_{\bot, i}^{p} - \dot{y}_{p}z_{\bot}^p} \leq 1+O(1/\ell)
\end{eqnarray}
 
Now we compare the rest of the demand $\frac{1}{1-1/(6\ell)} \sum_{S \in \mathcal{S}, i \in S, j \in C} z_{S, i, j}^p$ to the rest of the opening $\dot{y}_p$ for each black component $p \in \mathcal{G} \setminus \{v_{\mathcal{G}}\}$ and show that this demand is at most $u \dot{y}_p$. 
$$
 \sum_{S \in \mathcal{S}, i \in S, j \in C} \frac{z_{S, i, j}^p}{1-1/(6\ell)}  \stackrel{Cons. (\ref{clp:12})} \leq \sum_{S \in \mathcal{S}, i \in S} \frac{u {z}_{S, i}^p}{1-1/(6\ell)} \stackrel{Cons. (\ref{clp:10})}= \frac{1}{1-1/(6\ell)} \sum_{S \in \mathcal{S}} u |S| {z}_{S}^p
 $$
 $$
 = (1 + O(1/\ell))u \dot{y}_p (1-z_{\bot}^p) \leq  (1 + O(1/\ell))u \dot{y}_p
$$

Adding the above inequalities for each $p \in \mathcal{G} \setminus \{v_\mathcal{G}\}$ and the inequality (\ref{ineq:gen-blue_opening_2}) proves the lemma.
\qed
\end{proof}

For simplicity of exposition, we will say that \emph{a black component $p$ is closed}, if the procedure decides to round down the opening of that component to $\floor{\dot{y}_p}$, otherwise we say it is \emph{opened}. 

In this dependent rounding procedure, in contrast to \cite{BFRS15}, we will also be able to \emph{pull demand} to the black components where we decided to open an extra facility. Cost of pulling can be bounded by the LP cost for sending the demand out of a black component. This new strategy is crucial to bring down the capacity violation from $2+\epsilon$ to $1+\epsilon$.

\subsection{Rounding among children groups}
\label{rounding-children}

Consider any tree $\tau \in \Upsilon$ and its root $r$. For simplicity of description, we add a fake single node parent group and attach the root $r$ to this fake group node with a grey edge of length exactly $d(J_p, R\setminus J_p)$, where $p$ corresponds to the only the black component in the root group. Notice that from now on even the original root group is a child-group of some other group.

In the first phase of dependent rounding, we select the deepest (w.r.t. the number of edges) leaf-group and let its parent group be $\mathcal{G}$. Let $\bar{y}_p=\dot{y}_p - \floor{\dot{y}_p}$ for each $p \in \Upsilon$. For performing this dependent rounding procedure within children groups of $\mathcal{G}$, we use the root $r$ of $\mathcal{G}$ as a \emph{accumulator}, which will temporarily store all the not assigned demand from children groups of $\mathcal{G}$. Let $n_{\mathcal{G}} = |BCG(\mathcal{G})|$.

To perform the dependent rounding procedure, we would order the components in $BCG(\mathcal{G}) = \{p_1, p_2, \dots p_{n_{\mathcal{G}}}\}$ by non-decreasing distance from the root $r$ of $\mathcal{G}$, so $d(p_i, r) \leq d(p_{i+1}, r)$ for $i < n_{\mathcal{G}}$. We define the vectors  $\dot{y}_{\mathcal{G}} = (\dot{y}_{p_1}, \dot{y}_{p_2}, \dots \dot{y}_{p_{n_{\mathcal{G}}}})$ and $\bar{y}_{\mathcal{G}} = (\bar{y}_{p_1}, \bar{y}_{p_2}, \dots \bar{y}_{p_{n_{\mathcal{G}}}})$. Now we apply dependent rounding between the two fractional components in the $i$th prefix of vector $\bar{y}_{\mathcal{G}}$, for each $i$ starting from $i=2$ until $i=n_{\mathcal{G}}$. Note that, after applying dependent rounding on the $i$th prefix of $\bar{y}_{\mathcal{G}}$, at most one component will remain fractional in the prefix and one will become integral. If the black component $p$ which become integral is not virtual, we apply Lemma~\ref{lem:gen-pre-assign}, with $r \in \mathcal{G}$ as a root and, with $Z_p = 1$ if component is open and $Z_p = 0$ if it is closed. Let the output vector be $Z_{\mathcal{G}}=(Z_{p_1}, Z_{p_1}, \dots Z_{p_{\mathcal{G}}})$. If $\sum_{i} \bar{y}_{\mathcal{G}}(i)$ is not integral then the output vector will have one fractional variable, otherwise it will be a vector of all integral values. Notice that by the property (\ref{marginal}) of dependent rounding $E[Z_p] = \bar{y}_p$ and by the property (\ref{sum_prev}) the sum of the facility opening $\dot{y}_p$ is preserved.

From now on, we will ignore the presence of all the children of the group $\mathcal{G}$ in our procedure. We repeat this process until our tree $\tau$ has only the added fake group left. Note that the root group will contain at most one black component which is fractional. After we finish the first phase, for each group $\mathcal{G}$, at most one component of $Z_{\mathcal{G}}$ will be fractional.

For any vector $v$, let $v[i_1,i_2] = \sum_{k=i_1}^{i_2} v(k)$. Due to the ordering which we follow in the above dependent rounding procedure and the fact that we didn't move any opening out of (or into) set $BCG(\mathcal{G})$ for each group $\mathcal{G}$, the following observation holds.

\begin{observation}
\label{obs:prefix_opening}
After the first phase of the rounding procedure, $Z_{\mathcal{G}}[1,i] \in \left[ \floor{\bar{y}_{\mathcal{G}}[1,i]}, \ceil{\bar{y}_{\mathcal{G}}[1,i]} \right]$ holds for each $i$ and, for each non-leaf group $\mathcal{G}$. Moreover, $Z_{\mathcal{G}}[1,n_{\mathcal{G}}] = \bar{y}_{\mathcal{G}}[1,n_{\mathcal{G}}]$.
\end{observation}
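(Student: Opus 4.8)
The plan is to fix a non-leaf group $\mathcal{G}$ and trace how the prefix sums of the working opening-vector evolve under the atomic dependent-rounding moves performed inside $BCG(\mathcal{G})$. Order $BCG(\mathcal{G}) = \{p_1,\dots,p_{n_{\mathcal{G}}}\}$ as the algorithm does, write $v^{(1)} := \bar{y}_{\mathcal{G}}$ for the starting vector, and for $i=2,\dots,n_{\mathcal{G}}$ let $v^{(i)}$ be the vector after step $i$, i.e.\ after dependent rounding is run on the $i$-th prefix. First I would record a few structural facts. Since step $i-1$ leaves at most one fractional coordinate among positions $1,\dots,i-1$ and position $i$ still holds its untouched value $\bar{y}_{p_i}$, the $i$-th prefix contains at most two fractional coordinates, so step $i$ is at most one atomic DR move on those coordinates; such a move on a pair $(a,b)\in[0,1]^2$ preserves $a+b$ exactly, makes at least one of $a,b$ integral, and changes each of them by strictly less than $1$. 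Consequently a coordinate that has become integral is never moved again, and the unique fractional coordinate of any prefix $[1,m]$ only moves forward in the ordering, so once it becomes integral no later step touches positions $1,\dots,m$.

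The ``moreover'' part then follows at once: step $i$ touches only coordinates with index in $\{1,\dots,i\}\subseteq\{1,\dots,n_{\mathcal{G}}\}$ and preserves their sum, so $\sum_j v^{(i)}(j)$ is invariant across all steps, and hence $Z_{\mathcal{G}}[1,n_{\mathcal{G}}] = \bar{y}_{\mathcal{G}}[1,n_{\mathcal{G}}]$ --- no opening ever enters or leaves $BCG(\mathcal{G})$. The same reasoning applied to a fixed prefix length $m$ shows that after step $m$ we have $v^{(m)}[1,m] = \bar{y}_{\mathcal{G}}[1,m]$, and at that moment positions $1,\dots,m$ contain at most one fractional coordinate $p_a$ (with $a\le m$), so the fractional part of $v^{(m)}[1,m]$ equals that of $v^{(m)}(a)$.

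For the range part I would fix $m$, put $\bar{y} := \bar{y}_{\mathcal{G}}[1,m]$, and track $s_t := v^{(t)}[1,m]$ for $t\ge m$, maintaining the invariant: while positions $1,\dots,m$ still contain a fractional coordinate $p_a$, one has $\floor{s_t}=\floor{\bar{y}}$ and the fractional part of $s_t$ equals $v^{(t)}(a)$. Only a step $t>m$ that moves $p_a$, necessarily together with $p_t$ (which lies outside $[1,m]$), can change $s_t$, and there are two cases. If the move integralizes $p_a$, rounding its value to $\floor{v^{(t-1)}(a)}$ or $\ceil{v^{(t-1)}(a)}$, then since the fractional part of $s_{t-1}$ is $v^{(t-1)}(a)$ we get $s_t\in\{\floor{s_{t-1}},\ceil{s_{t-1}}\}=\{\floor{\bar{y}},\ceil{\bar{y}}\}$, and $s$ is frozen forever after. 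If instead $p_a$ stays fractional, then $p_t$ was rounded to $0$ or $1$, the change to $s_t$ equals $v^{(t-1)}(t)$ or $v^{(t-1)}(t)-1$ respectively, and the requirement that the new value of $p_a$ --- namely $v^{(t-1)}(a)+v^{(t-1)}(t)$, respectively that quantity minus $1$ --- lie strictly in $(0,1)$ is exactly what keeps $\floor{s_t}=\floor{\bar{y}}$ while updating the fractional part to $v^{(t)}(a)$. In all cases $s_t\in[\floor{\bar{y}},\ceil{\bar{y}}]$, so $Z_{\mathcal{G}}[1,m]=s_{n_{\mathcal{G}}}$ lies in the claimed interval.

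The only genuinely delicate point is this last case analysis: checking that an atomic DR move which keeps the sole fractional coordinate of the prefix fractional cannot carry the running prefix sum past an integer, and that the move which does integralize it lands the prefix sum exactly on $\floor{\bar{y}}$ or $\ceil{\bar{y}}$. Both facts come down to the elementary arithmetic of a single atomic move, combined with the observation from the second paragraph that the fractional part of a prefix sum is always carried by its single fractional coordinate. Everything else is bookkeeping about which coordinates a given step is allowed to move, which is exactly what is controlled by processing $BCG(\mathcal{G})$ prefix-by-prefix in the single fixed order used by the algorithm.
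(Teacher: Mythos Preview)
Your argument is correct and follows exactly the route the paper has in mind: the paper states this as an observation with the one-line justification that the dependent rounding processes $BCG(\mathcal{G})$ prefix-by-prefix in the fixed order and never moves opening in or out of $BCG(\mathcal{G})$, and your proof is a careful unpacking of precisely these two facts via the invariant on the floor and fractional part of each prefix sum. The only difference is level of detail---the paper treats the prefix sum-preservation property of sequential pipeage as evident, while you verify it by the atomic-move case analysis.
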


Once we complete phase one of rounding for each tree $\tau \in \Upsilon$, the second phase of the rounding procedure starts. In the second phase of the rounding procedure we just apply dependent rounding among all the remaining fractional variables, in an arbitrary order, until everything is integral. We apply Lemma \ref{lem:gen-pre-assign} to all the non-virtual components with $Z_p = 1$, if it was open, and $Z_p = 0$ otherwise. Notice that for the black component from the root group we will use the root of a fake group as an accumulator. We open $\ceil{\dot{y}_{v_{\mathcal{G}}}}$ facilities in each virtual component $v_{\mathcal{G}}$ if it was rounded up, and $\floor{\dot{y}_{v_{\mathcal{G}}}}$ otherwise.

Since the last fractional component in $BCC(\mathcal{G})$ could be either opened or closed, the total $Z_{\mathcal{G}}[1,n_{\mathcal{G}}]$ is either $\floor{\bar{y}_{\mathcal{G}}[1,n_{\mathcal{G}}]}$ or $\ceil{\bar{y}_{\mathcal{G}}[1,n_{\mathcal{G}}]}$ respectively. And since each of the components of vector $Z$ is integral, the following observation is true.

\begin{observation}
\label{obs:prefix_opening_2}
After the second phase of the rounding procedure, $Z_{\mathcal{G}}[1,i] \in$ $	\left\{ \floor{\bar{y}_{\mathcal{G}}[1,i]}, \ceil{\bar{y}_{\mathcal{G}}[1,i]}\right\}$ holds for each $i$ and, for each non-leaf group $\mathcal{G}$.\end{observation}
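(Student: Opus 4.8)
The plan is to treat the dependent rounding of \cite{srin_dep_round} as a black box and to show that Observation~\ref{obs:prefix_opening}, together with the fact that phase one leaves at most one fractional coordinate per group, already forces the conclusion after a one-line case analysis on the prefix length.

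First I would record the situation at the end of phase one. Fix a non-leaf group $\mathcal{G}$; if $\mathcal{G}$ is a leaf then $BCG(\mathcal{G})=\emptyset$ and the statement is vacuous. As noted before Observation~\ref{obs:prefix_opening}, at most one coordinate of $Z_{\mathcal{G}}$ is still fractional after phase one; denote its index by $q$ and its current value by $f\in(0,1)$ (if no such coordinate exists, only the first case below occurs). Observation~\ref{obs:prefix_opening} moreover gives $Z_{\mathcal{G}}[1,i]\in[\floor{\bar{y}_{\mathcal{G}}[1,i]},\ceil{\bar{y}_{\mathcal{G}}[1,i]}]$ for every $i$. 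The key structural fact about phase two is that it performs dependent rounding only on variables that are still fractional, i.e.\ on the collection of leftover coordinates, which contains at most one coordinate of $Z_{\mathcal{G}}$, namely $Z_q$; hence for our fixed $\mathcal{G}$ the only entry of the vector $Z_{\mathcal{G}}$ whose value changes during phase two is $Z_q$, and by property~(\ref{sum_prev}) of dependent rounding it is rounded to $0$ or to $1$ by the end.

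Then I would split on $i$. If $i<q$, then $Z_{\mathcal{G}}[1,i]$ is a sum of coordinates that were already integral after phase one and are untouched by phase two, so it is an integer lying in $[\floor{\bar{y}_{\mathcal{G}}[1,i]},\ceil{\bar{y}_{\mathcal{G}}[1,i]}]$, hence equal to $\floor{\bar{y}_{\mathcal{G}}[1,i]}$ or to $\ceil{\bar{y}_{\mathcal{G}}[1,i]}$. If $i\ge q$, then after phase one the only non-integral contribution to $Z_{\mathcal{G}}[1,i]$ is the value $f\in(0,1)$ of $Z_q$, so $Z_{\mathcal{G}}[1,i]$ lies strictly between two consecutive integers; together with the bound of Observation~\ref{obs:prefix_opening} this forces $Z_{\mathcal{G}}[1,i]=\floor{\bar{y}_{\mathcal{G}}[1,i]}+f$ (in particular $\bar{y}_{\mathcal{G}}[1,i]$ is non-integral, so $\ceil{\bar{y}_{\mathcal{G}}[1,i]}=\floor{\bar{y}_{\mathcal{G}}[1,i]}+1$), and when phase two replaces $f$ by $0$ or $1$ the prefix sum becomes $\floor{\bar{y}_{\mathcal{G}}[1,i]}$ or $\ceil{\bar{y}_{\mathcal{G}}[1,i]}$, as claimed.

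I do not anticipate a genuine obstacle here; the statement is essentially bookkeeping on top of Observation~\ref{obs:prefix_opening} and the shape of the rounding process. The one point I would be careful to state explicitly is that phase two changes no coordinate of $Z_{\mathcal{G}}$ other than $Z_q$ --- this is precisely why phase one was designed to leave a single fractional coordinate per group --- since it is exactly this that freezes the prefixes with $i<q$ and keeps the change to every other prefix strictly below $1$, which is what turns the interval of Observation~\ref{obs:prefix_opening} into its two endpoints.
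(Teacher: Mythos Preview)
Your proposal is correct and follows essentially the same approach as the paper. The paper's justification is very terse---it simply notes that the last fractional component is rounded to $0$ or $1$ and that all coordinates of $Z_{\mathcal{G}}$ are now integral, implicitly relying on Observation~\ref{obs:prefix_opening}---whereas you spell out the case split on $i<q$ versus $i\ge q$ and the reason the interval bound survives phase two; this is exactly the bookkeeping the paper leaves implicit.
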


\begin{lemma}
\label{lem:flows_cost}
The cost of moving the demand from all black components to their respective accumulators can be bounded by $\sum_{p \in \Upsilon} O(\ell^2) d(J_p, R \setminus J_p) \pi(J_p)$. 
\end{lemma}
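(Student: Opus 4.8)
The plan is to derive the bound by summing, over all black components $p \in \Upsilon$, the per-component estimate already proved in Lemma~\ref{lem:gen-pre-assign}, property~(\ref{gen-out_demand}). Indeed, the total cost in question decomposes over the black components into the cost of a single ``hop'' for each $p$: the hop that carries the demand $x_{p, C\setminus C'}$ --- the demand routed through $p$ that is neither pre-assigned to the chosen set $S \subseteq U_p$ nor kept at a representative of $J_p$ --- from $p$ to the accumulator designated for $p$. (Pre-assigned demand never leaves $U_p$, and the cost of the pre-assignment and of the local moves inside $p$ is accounted for separately by Lemma~\ref{lem:gen-pre-assign}, property~(\ref{gen-cost_of_pre}).) So it suffices to show that each $p$ makes this hop exactly once, towards an accumulator that is the root of the parent group of the group containing $p$; then by Lemma~\ref{lem:gen-pre-assign}, property~(\ref{gen-out_demand}), this hop costs at most $O(\ell^2)\, d(J_p, R \setminus J_p)\, \pi(J_p)$ in expectation, and linearity of expectation finishes the argument.

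To make this precise I would walk through the rounding procedure. Every non-virtual black component $p$ lies in exactly one group $\mathcal{G}_p$, and $p$ is turned integral exactly once: either in the first phase, when dependent rounding is applied to $BCG(\mathcal{G})$ with $\mathcal{G}$ the parent group of $\mathcal{G}_p$, or --- if $p$ is the unique component left fractional after that round --- in the second phase. In both cases Lemma~\ref{lem:gen-pre-assign} is invoked for $p$ with the root $r$ of $\mathcal{G}$ as its root, and this $r$ is exactly the accumulator assigned to $p$. For a component $p$ lying in the root group of its tree we instead use the root of the fake single-node parent group, which is joined to the root of $\mathcal{G}_p$ by a grey edge of length $d(J_p, R\setminus J_p)$; thus $r$ still plays the role of ``root of the parent group'' demanded by Lemma~\ref{lem:gen-pre-assign}, and the underlying path-length estimate of Lemma~\ref{lem:group_edges}, property~(\ref{dist_to_root2}), still applies. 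The remaining hypotheses of Lemma~\ref{lem:gen-pre-assign} hold too: $y_p \le 2\ell$ by Observation~\ref{obs:group_vol}, and $E[Z_p] = \bar{y}_p = \dot{y}_p - \floor{\dot{y}_p}$ by the marginal property of dependent rounding. Virtual black components $v_{\mathcal{G}}$ carry no client demand --- their blue opening is treated in Lemma~\ref{lem:gen-blue} --- and hence contribute nothing to this cost.

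Since the bound $O(\ell^2)\, d(J_p, R\setminus J_p)\, \pi(J_p)$ of Lemma~\ref{lem:gen-pre-assign}, property~(\ref{gen-out_demand}), is independent of the outcome of the dependent rounding (in particular of the value of $Z_p$), taking expectations over all the randomness and summing over $p \in \Upsilon$ yields the claimed bound $\sum_{p \in \Upsilon} O(\ell^2)\, d(J_p, R\setminus J_p)\, \pi(J_p)$. I expect the main --- though fairly minor --- obstacle to be the case analysis confirming that in every branch of the procedure (first phase, second phase, and root groups handled via the fake parent group) the accumulator actually used for $p$ coincides with ``the root of the parent group'', so that Lemma~\ref{lem:gen-pre-assign} applies without modification; the rest is routine bookkeeping plus linearity of expectation.
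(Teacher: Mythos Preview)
Your proposal is correct and follows essentially the same approach as the paper: invoke Lemma~\ref{lem:gen-pre-assign}, property~(\ref{gen-out_demand}), for each black component $p$ (noting that the root-group case is covered by the fake parent group at distance $d(J_p, R\setminus J_p)$), and sum over $p\in\Upsilon$. Your write-up is in fact more careful than the paper's own proof in verifying the hypotheses of Lemma~\ref{lem:gen-pre-assign} and in tracking which accumulator is used in each phase, but the underlying argument is identical.
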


\begin{proof}
Notice that in Lemma \ref{lem:gen-pre-assign} we move all not pre-assigned demand from the black components to their respective accumulators. For any black component $p$ the cost of this operation is bounded by $O(\ell^2) d(J_p, R \setminus J_p) \pi(J_p)$. Note that, this bound is also valid for the black component $p$ corresponding to the root group, since the single node fake group is precisely at a distance $d(J_p, R \setminus J_p)$ from it. If we take the sum over all the black components in our instance we get the lemma.
\qed
\end{proof}

\subsection{Pulling back demand to the open facilities}
\label{pulling_back_sec}

Now we will define a single-commodity flow corresponding to distributing the demand from the accumulator co-located with the root of some non-leaf group to the open facilities in its children groups, for each tree $\tau \in \Upsilon$. To do this, we will pull back demand to the black components in a greedy way by pulling the demand first to the component belonging to $BCG(\mathcal{G})$ which is closest to the root $r$. We can bound the cost of pulling demand to the open facilities by charging it to the cost of pushing the demand to the root bounded in Lemma \ref{lem:flows_cost}. The intuition is, since we are pulling back the demand in a greedy fashion, we can argue that for every demand, the distance which it will travel in the pulling phase is at most the distance it traveled to reach the accumulator $r$ in the pushing phase. Since the cost for pushing is bounded by $O(\ell^2) \sum_{p \in BCG(\mathcal{G})} d(J_p, R \setminus J_p)\pi(J_p)$ (see Lemma~\ref{lem:flows_cost}), hence by the above claim the cost of pulling back the demand is bounded as well.

In this procedure, we first fix a tree $\tau \in \Upsilon$. Consider a non-leaf group $\mathcal{G}$ of $\tau$ and the set $BCG(\mathcal{G})$. In the pre-assignment step (Lemma \ref{lem:gen-pre-assign}), let $q_p$ be the amount of demand we assigned to the open facilities in each black component $p \in BCG(\mathcal{G})$. Notice that for any virtual black component $p \in BCG(\mathcal{G})$ we didn't assign any demand in a pre-assignment, so $q_p = 0$. Now we define vector $q_{\mathcal{G}} = (q_{p_1}, q_{p_1} \dots, q_{p_{n_{\mathcal{G}}}})$, to be the vector of the pre-assigned demand, which respects the same order of the components as in vector $\dot{y}_{\mathcal{G}}$.

Now we describe the pulling back procedure which we call \emph{the greedy pulling process}. First, we freeze $(1+O(1/\ell))u$ units of demand at the accumulator $r$ of group $\mathcal{G}$. Next we start pulling the rest of the demand to the black components $BCG(\mathcal{G})$. We do the pulling process in the same greedy order in which we did the dependent rounding among the black components $BCG(\mathcal{G})$, i.e. starting from the component closest to $r$. By definition, the vectors $\dot{y}_{\mathcal{G}}$, $Z_{\mathcal{G}}$ and $q_{\mathcal{G}}$ respect this ordering. We start pulling the demand equal to $(1+O(1/\ell))(Z_{\mathcal{G}}(i) + \floor{\dot{y}_{\mathcal{G}}(i)})u - q_{\mathcal{G}}(i)$ from the accumulator $r$ to the $i$th component starting from $i=1$, until we have no more demand to pull. We do this process for each non-leaf group $\mathcal{G}$ in all the trees in our forest $\Upsilon$.

\begin{observation}
\label{obs:cap_violation}
After the greedy pulling process, each black component $p \in BCG(\mathcal{G})$ has a capacity violation by a factor of at most $(1+O(1/\ell))$.
\end{observation}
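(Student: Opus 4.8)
The plan is to read the bound off directly from the way the \emph{greedy pulling process} was defined, using only Lemma~\ref{lem:gen-pre-assign} and Lemma~\ref{lem:gen-blue}. Fix a non-leaf group $\mathcal{G}$, its accumulator $r$, and the ordered list $BCG(\mathcal{G})=\{p_1,\dots,p_{n_{\mathcal{G}}}\}$ used in the pull. After the two dependent-rounding phases, component $p_i$ has a fixed integral opening $y'''_{p_i}=\floor{\dot{y}_{p_i}}+Z_{\mathcal{G}}(i)$ (for a virtual component $y'''_{p_i}\in\{\floor{\dot{y}_{p_i}},\ceil{\dot{y}_{p_i}}\}$), so exactly $y'''_{p_i}$ facilities are opened inside $p_i$; since all facilities share capacity $u$, the capacity available at $p_i$ equals $u\,y'''_{p_i}$. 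Hence it suffices to show that the total demand residing at $p_i$ when the procedure terminates is at most $(1+O(1/\ell))\,u\,y'''_{p_i}$. If $y'''_{p_i}=0$ the component opens nothing, receives no demand, and the claim is vacuous, so we may assume $y'''_{p_i}\ge 1$.

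The only demand ever placed at $p_i$ is (i) the $q_{p_i}$ units pre-assigned to its open facilities in Lemma~\ref{lem:gen-pre-assign}, and (ii) the demand delivered from $r$ during the greedy pull. By the definition of the pulling process, (ii) equals $(1+O(1/\ell))\big(Z_{\mathcal{G}}(i)+\floor{\dot{y}_{\mathcal{G}}(i)}\big)u-q_{\mathcal{G}}(i)=(1+O(1/\ell))\,u\,y'''_{p_i}-q_{p_i}$, possibly truncated by the amount of demand still available at $r$; in either case (ii)$\,\le(1+O(1/\ell))\,u\,y'''_{p_i}-q_{p_i}$. Adding (i) and (ii) gives a total of at most $(1+O(1/\ell))\,u\,y'''_{p_i}$, i.e. a capacity violation factor of $1+O(1/\ell)$ at $p_i$. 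That the quantity in (ii) is well-defined (non-negative, so that the truncation is meaningful) follows from Lemma~\ref{lem:gen-pre-assign}(\ref{gen-at_most_u}): $q_{p_i}$ is the number of clients pre-assigned to the $|S|=y'''_{p_i}$ facilities opened in $p_i$, each of which is pre-assigned at most $u$ clients, so $q_{p_i}\le u\,y'''_{p_i}\le(1+O(1/\ell))\,u\,y'''_{p_i}$; for a virtual component $q_{p_i}=0$ and this is trivial.

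Thus the statement is essentially immediate from the construction, and I expect the only things that need care are bookkeeping ones. First, for a virtual component $v_{\mathcal{G}'}$ one has $q_{v_{\mathcal{G}'}}=0$, and the demand its co-located facilities must serve is the bottom-set (\emph{blue}) demand of the child group $\mathcal{G}'$, which has been routed to the accumulator $r$ together with the rest of the demand of $BCG(\mathcal{G})$ and is therefore delivered back through the same greedy pull exactly once; by Lemma~\ref{lem:gen-blue} and inequality~(\ref{ineq:gen-blue_opening_2}) this demand is consistent with the $u\,y'''_{v_{\mathcal{G}'}}$ capacity up to a $1+O(1/\ell)$ factor, so the estimate of the previous paragraph applies verbatim. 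Second, the greedy ordering only ever truncates a delivery (when $r$ runs dry), and the $(1+O(1/\ell))u$ units frozen at $r$ only reduce what is pulled, so neither can push any $p_i$ above $(1+O(1/\ell))\,u\,y'''_{p_i}$. The one point I expect to be genuinely delicate is the treatment of a component $p_i$ that additionally serves as the accumulator of one of its own child groups: such a component also retains the $(1+O(1/\ell))u$ frozen units from that earlier stage, and this must be absorbed by reserving one of its $y'''_{p_i}\ge 1$ open facilities for the accumulator role (so that the pull onto $p_i$, and the pre-assignment at $p_i$, effectively operate with $y'''_{p_i}-1$ facilities) — verifying this reconciliation, rather than any nontrivial calculation, is the main obstacle. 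The companion fact that $r$ can actually absorb whatever demand remains after the pull is where the freeze together with the prefix bounds of Observations~\ref{obs:prefix_opening} and~\ref{obs:prefix_opening_2} enter, but it is not needed for the present statement.
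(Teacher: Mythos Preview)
Your first two paragraphs are correct and are exactly the (implicit) argument the paper has in mind: the observation is stated without proof precisely because the pull to $p_i$ is, by definition, at most $(1+O(1/\ell))\,u\,y'''_{p_i}-q_{p_i}$, so together with the pre-assigned $q_{p_i}$ the total is at most $(1+O(1/\ell))\,u\,y'''_{p_i}$. That is the whole content of the observation.

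Your third paragraph, however, introduces complications that are not needed and reflects a misreading of the architecture. For virtual components the same one-line bound applies verbatim (since $q_{v_{\mathcal{G}'}}=0$ and the pull delivers at most $(1+O(1/\ell))\,u\,y'''_{v_{\mathcal{G}'}}$); Lemma~\ref{lem:gen-blue} and inequality~(\ref{ineq:gen-blue_opening_2}) are used to show the accumulator can be emptied (Lemma~\ref{lem:acc_leftover}), not to bound the violation at any individual $p_i$. More importantly, your anticipated ``delicate point'' about a component that is simultaneously an accumulator is a non-issue for this observation: the frozen $(1+O(1/\ell))u$ units at an accumulator are \emph{not} assigned to any facility during the greedy pulling process. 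They are handled separately in Section~\ref{frozen_dist}, where they are forwarded to the parent accumulator and then spread as an additional $O(1/\ell)u$ per open facility; the paper explicitly accounts for this as a second $O(1/\ell)$ increment in the proof of Theorem~\ref{thm:general}. So there is no need to ``reserve one of the $y'''_{p_i}$ facilities for the accumulator role''---that would in fact be the wrong bookkeeping.
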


\begin{lemma}
\label{lem:acc_leftover}
After the greedy pulling procedure, the left over demand at any accumulator $r$ of some non-leaf group $\mathcal{G}$ is exactly equal to $u(1+O(1/\ell))$; which is the demand frozen at the beginning.
\end{lemma}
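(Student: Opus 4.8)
The plan is to follow the demand through the accumulator $r$ of $\mathcal{G}$ during the push phase (Lemma~\ref{lem:gen-pre-assign}) and the greedy pull phase, and to show the two balance up to the frozen amount. First I would observe that the only demand reaching $r$ in the push phase is the not-pre-assigned part $x_{p,C\setminus C'}$ of the demand of each $p\in BCG(\mathcal{G})$ (for a virtual/blue component $p$ one has $q_p=0$ and all its demand — the $\bot$-demand collected at it — is routed to $r$ just the same). Writing $d_p$ for the total demand associated with $p$, the amount $A$ sitting at $r$ before pulling is therefore $A=\sum_{p\in BCG(\mathcal{G})}d_p-\sum_{p\in BCG(\mathcal{G})}q_p$.

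Next I would estimate the two sides. For $\sum_p d_p$: apply Lemma~\ref{lem:gen-blue} to each child group $\mathcal{G}'$ of $\mathcal{G}$ (its components plus its virtual component carry demand at most $(1+O(1/\ell))u\sum_{p\in\mathcal{G}'}\dot{y}_p$) and sum over the children, giving $\sum_{p\in BCG(\mathcal{G})}d_p\le(1+O(1/\ell))u\sum_{p\in BCG(\mathcal{G})}\dot{y}_p$, where the constant is exactly the one the pulling rule is calibrated against. For the total amount the greedy pull targets, namely $\sum_i\big[(1+O(1/\ell))(Z_{\mathcal{G}}(i)+\floor{\dot{y}_{\mathcal{G}}(i)})u-q_{\mathcal{G}}(i)\big]$, I would use the sum-preservation of dependent rounding over $BCG(\mathcal{G})$ (Observation~\ref{obs:prefix_opening} for components fixed in phase one, Observation~\ref{obs:prefix_opening_2} after phase two): $\sum_i(Z_{\mathcal{G}}(i)+\floor{\dot{y}_{\mathcal{G}}(i)})=\dot{y}_{\mathcal{G}}[1,n_{\mathcal{G}}]=\sum_{p\in BCG(\mathcal{G})}\dot{y}_p$, so the targeted pull-back equals $(1+O(1/\ell))u\sum_{p}\dot{y}_p-\sum_p q_p$, which is precisely $A$ minus the frozen $(1+O(1/\ell))u$.

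I would then conclude: after freezing $(1+O(1/\ell))u$ units at $r$, the remaining $A-(1+O(1/\ell))u$ units of demand are no more than what the per-component targets request, so the greedy process distributes them completely without ever overshooting a single component's target (this is what Observation~\ref{obs:cap_violation} records), while the frozen units are never redistributed; hence the left-over at $r$ equals exactly the frozen $(1+O(1/\ell))u$.

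The step I expect to be the real obstacle is turning this accounting into an exact equality rather than an inequality: one has to check that the $(1+O(1/\ell))$ factor in the freeze and in the pull rule is the very constant appearing in Lemma~\ref{lem:gen-blue}, that the worst case of that lemma is what the pull is tuned to, that $A\ge(1+O(1/\ell))u$ so the freeze is even possible, and that the unique component left fractional after phase one (settled in phase two) still fits the sum-preservation identity — which forces a careful split into virtual versus non-virtual components and leaf versus internal groups.
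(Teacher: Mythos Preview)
Your overall plan is the paper's plan: bound the demand at the accumulator via Lemma~\ref{lem:gen-blue}, bound the available pull capacity via the prefix observations, and compare. But the central identity you invoke is false, and with it the accounting for the frozen $u(1+O(1/\ell))$ collapses.

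You claim $\sum_i\bigl(Z_{\mathcal{G}}(i)+\floor{\dot{y}_{\mathcal{G}}(i)}\bigr)=\dot{y}_{\mathcal{G}}[1,n_{\mathcal{G}}]$. This equality holds only after phase one (Observation~\ref{obs:prefix_opening}, last clause). After phase two the single remaining fractional coordinate in $BCG(\mathcal{G})$ is rounded in a global dependent rounding across all trees, and Observation~\ref{obs:prefix_opening_2} gives only $Z_{\mathcal{G}}[1,n_{\mathcal{G}}]\in\{\floor{\bar{y}_{\mathcal{G}}[1,n_{\mathcal{G}}]},\ceil{\bar{y}_{\mathcal{G}}[1,n_{\mathcal{G}}]}\}$, hence merely $\sum_i\bigl(Z_{\mathcal{G}}(i)+\floor{\dot{y}_{\mathcal{G}}(i)}\bigr)\ge \dot{y}_{\mathcal{G}}[1,n_{\mathcal{G}}]-1$. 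That ``$-1$'' is not a nuisance to be checked away; it is the entire reason the frozen $u(1+O(1/\ell))$ exists. The paper's proof uses exactly this: demand $\le (1+O(1/\ell))u\,\dot{y}_{\mathcal{G}}[1,n_{\mathcal{G}}]-q_{\mathcal{G}}[1,n_{\mathcal{G}}]$ while pull capacity $\ge (1+O(1/\ell))u\,(\dot{y}_{\mathcal{G}}[1,n_{\mathcal{G}}]-1)-q_{\mathcal{G}}[1,n_{\mathcal{G}}]$, so the shortfall is at most $(1+O(1/\ell))u$, which the freeze covers.

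Your arithmetic then compounds the error: even under your (incorrect) equality, the targeted pull-back would equal $(1+O(1/\ell))u\sum_p\dot{y}_p-\sum_p q_p\ge A$, not ``$A$ minus the frozen $(1+O(1/\ell))u$''. There would be no shortfall at all and no need to freeze anything. So the sentence ``which is precisely $A$ minus the frozen $(1+O(1/\ell))u$'' does not follow from what precedes it. Fixing the argument means replacing the claimed equality by the lower bound $\dot{y}_{\mathcal{G}}[1,n_{\mathcal{G}}]-1$ and then observing that the gap between the demand upper bound and the capacity lower bound is at most $(1+O(1/\ell))u$.
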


\begin{proof}
By Lemma \ref{lem:gen-blue}, the total demand sent to the accumulator $r$ by groups in $BCG(\mathcal{G})$ is at most $(1+O(1/\ell)) u \left( \dot{y}_{\mathcal{G}}[1,n_{\mathcal{G}}]  \right) -  q_{\mathcal{G}}[1,n_{\mathcal{G}}]$ units. The total capacity of open facilities in $BCG(\mathcal{G})$ is 
$$u \left(Z_{\mathcal{G}}[1,n_{\mathcal{G}}] + \floor{\dot{y}_{\mathcal{G}}[1,n_{\mathcal{G}}]} \right) \stackrel{Obs.~\ref{obs:prefix_opening_2}}\geq u \left( \floor{\bar{y}_{\mathcal{G}}[1,n_{\mathcal{G}}]} + \floor{\dot{y}_{\mathcal{G}}[1,n_{\mathcal{G}}]} \right) $$
$$ \geq u \left( \bar{y}_{\mathcal{G}}[1,n_{\mathcal{G}}] - 1 + \floor{\dot{y}_{\mathcal{G}}[1,n_{\mathcal{G}}]} \right) = u \left( \dot{y}_{\mathcal{G}}[1,n_{\mathcal{G}}]  - 1 \right)$$

Now the total residual capacity of open facilities in $BCG(\mathcal{G})$, that can be used for pulling demand from the accumulator, after removing the pre-assigned demand $q_{\mathcal{G}}$ and taking into account the capacity violation of $(1+O(1/\ell))$ is at least $(1+O(1/\ell)) u \left( \dot{y}_{\mathcal{G}}[1,n_{\mathcal{G}}]  - 1\right) -  q_{\mathcal{G}}[1,n_{\mathcal{G}}]$. This quantity is precisely $u(1+O(1/\ell))$ units less than the upper bound on the demand located at the accumulator $r$, hence the lemma follows.
\qed
\end{proof}

\begin{lemma}
\label{lem:gen-pull_cost}
For any non-leaf group $\mathcal{G}$, the distance travelled by any demand in the greedy pulling phase is at most the distance travelled by it in the dependent rounding phase.
\end{lemma}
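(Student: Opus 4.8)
The plan is to set up a direct correspondence between the "push" flow (moving demand from the black components in $BCG(\mathcal{G})$ out to the accumulator $r$) and the "pull" flow (greedily redistributing demand from $r$ back to the open facilities in $BCG(\mathcal{G})$), and then argue that along this correspondence no unit of demand travels a larger distance on the way back than it did on the way in. Both flows are single-commodity flows on a path-like structure where the black components are linearly ordered as $p_1,\dots,p_{n_{\mathcal{G}}}$ by non-decreasing distance $d(p_i,r)$ from the accumulator; crucially, the greedy pulling process fills the components in exactly this same order, so I want to exploit the monotonicity of $d(p_i,r)$ in $i$.

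First I would record the quantitative bookkeeping from the earlier lemmas: by Lemma~\ref{lem:acc_leftover} the amount of demand that leaves a prefix $p_1,\dots,p_i$ on the push side is $(1+O(1/\ell))u\,\dot y_{\mathcal{G}}[1,i] - q_{\mathcal{G}}[1,i]$ up to the $\pm u(1+O(1/\ell))$ slack, while by the definition of the greedy pulling process the amount pulled back into that same prefix is $\sum_{k\le i}\bigl((1+O(1/\ell))(Z_{\mathcal{G}}(k)+\floor{\dot y_{\mathcal{G}}(k)})u - q_{\mathcal{G}}(k)\bigr)$; Observation~\ref{obs:prefix_opening_2} says $Z_{\mathcal{G}}[1,i]+\floor{\dot y_{\mathcal{G}}[1,i]}$ and $\dot y_{\mathcal{G}}[1,i]$ differ by at most a constant, so the pulled-back amount into any prefix is at least (a suitably normalized version of) the amount that was pushed out of that prefix. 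The key structural consequence I want is: for every prefix index $i$, the total demand the pull process delivers to $\{p_1,\dots,p_i\}$ is at least the total demand the push process collected from $\{p_1,\dots,p_i\}$. Phrase this as a statement that the "cumulative pull into the first $i$ components" dominates the "cumulative push out of the first $i$ components."

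With that prefix-domination in hand, I would apply a standard exchange/rearrangement argument: think of the demand sitting at the accumulator as a multiset of tokens, each tagged with the component it originated from (on the push side) and matched, by the greedy rule, to the component it is sent to (on the pull side). Because the pull process is greedy toward the closest component first, and because cumulative pull-in dominates cumulative push-out on every prefix, every token that originated from $p_j$ gets pulled to some component $p_i$ with $i \le j$ — otherwise the prefix $\{p_1,\dots,p_{j}\}$ would have received strictly less demand on the pull side than it supplied on the push side, contradicting prefix-domination. Since the order is by non-decreasing distance to $r$, $i\le j$ gives $d(p_i,r)\le d(p_j,r)$, so the pull distance $d(p_i,r)$ for that token is at most its push distance $d(p_j,r)$; summing over tokens yields the lemma. (The small frozen $(1+O(1/\ell))u$ reservoir at $r$, and the at-most-one leftover fractional component, only affect the bookkeeping by an additive constant absorbed into the $O(1/\ell)$ factors and do not break the prefix inequality, by Lemma~\ref{lem:acc_leftover}.)

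The main obstacle I expect is making the "every token from $p_j$ is pulled to some $p_i$ with $i\le j$" claim fully rigorous in the presence of the capacity-violation slack, the pre-assigned demand $q_{\mathcal{G}}$ subtracted componentwise, and the frozen reservoir at $r$: one has to be careful that the prefix-domination inequality survives all three perturbations simultaneously, i.e. that $(1+O(1/\ell))u(Z_{\mathcal{G}}+\floor{\dot y_{\mathcal{G}}})[1,i] - q_{\mathcal{G}}[1,i]$ genuinely dominates the push amount $(1+O(1/\ell))u\,\dot y_{\mathcal{G}}[1,i] - q_{\mathcal{G}}[1,i]$ for every $i$. This reduces exactly to Observation~\ref{obs:prefix_opening_2} controlling $Z_{\mathcal{G}}[1,i]$ against $\bar y_{\mathcal{G}}[1,i]$, so the argument should go through, but the order of quantifiers (the inequality must hold for all prefixes $i$, not just $i = n_{\mathcal{G}}$) is where the proof must be written carefully.
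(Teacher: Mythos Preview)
Your overall strategy---prefix domination plus a greedy token-matching argument showing each unit of demand is pulled to a component no farther than the one it was pushed from---is exactly the paper's approach, and the pull-side inequality does indeed reduce to Observation~\ref{obs:prefix_opening_2} (the paper actually cites Observation~\ref{obs:prefix_opening}, but either works for the lower bound on $Z_{\mathcal{G}}[1,i]$).

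There is, however, a genuine gap on the push side. You assert that the demand pushed from the prefix $p_1,\dots,p_i$ is at most $(1+O(1/\ell))u\,\dot y_{\mathcal{G}}[1,i]-q_{\mathcal{G}}[1,i]$ and attribute this to Lemma~\ref{lem:acc_leftover}. That lemma only bounds the \emph{global} leftover at the accumulator; it says nothing about prefixes. The per-prefix bound is not automatic: a non-virtual component $p$ carries both its $\mathcal{S}$-demand (bounded by $(1+O(1/\ell))u\dot y_p$) and its $\bot$-demand, and the $\bot$-demand of a single $p$ can exceed $u\dot y_p$. What saves the inequality is that the virtual component $v_{\mathcal{G}_1}$ of each child group $\mathcal{G}_1$ sits at the root of $\mathcal{G}_1$, hence is the closest component of $\mathcal{G}_1$ to $r$ and therefore appears in any prefix that contains any component of $\mathcal{G}_1$; since $\dot y_{v_{\mathcal{G}_1}}$ absorbs the blue opening (and hence upper-bounds the $\bot$-demand capacity) of \emph{all} of $\mathcal{G}_1$, the prefix containing $v_{\mathcal{G}_1}$ has enough $\dot y$-mass to cover the $\bot$-demand of whatever subset of $\mathcal{G}_1$ it contains. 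The paper makes this explicit by reattributing, for the distance analysis, $(1+O(1/\ell))u\dot y_{v_{\mathcal{G}_1}}$ units of demand to the virtual component of each child group (which can only decrease push distances), after which the per-component bound $(1+O(1/\ell))u\dot y_p - q_p$ holds and the prefix inequality follows via the argument in Lemma~\ref{lem:gen-blue}. Your write-up needs this step; without it, the claim that ``prefix-domination reduces exactly to Observation~\ref{obs:prefix_opening_2}'' is only half the story.
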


\begin{proof}
Let us order the demand located at the accumulator $r$ in non-decreasing order by the distance it travelled from the black component $p$, from which this demand originated, to the accumulator $r$. In the greedy pulling process, we would first reserve the closest $u(1+O(1/\ell))$ units of demand with the accumulator $r$. Then in the pulling process, we would pull the remaining demand in the order defined above, hence, the closest black component to $r$ would pull the demand coming first in this order. Now we would prove the lemma by induction. Note that the base case is true, since the initial $u(1+O(1/\ell))$ units had to travel zero distance in the greedy pulling phase and the demand that the $1$st component start pulling could only come from the same or some farther component. Now let us assume that the lemma holds until we start pulling for the $i$th component. The order of the demand imply that if the claim is true for the first demand which some $i$th component pulls in this procedure, then the claim would be true for the rest of the demand pulled by $i$. Hence, we just need to argue that the claim is true for the first demand which the $i$th component pulls.  Let $\epsilon>0$ be the smallest amount of demand, which is to be pulled next in the sequence, such that all this demand was located with the same black component $p'$ before it was pushed to the accumulator $r$. Since the claim holds till this point, it means that the demand last pulled by $(i-1)$th component was pushed by some black components coming after the $(i-2)$th component, in the sequence defined by the vector $\dot{y}_{\mathcal{G}}$. By the ordering of the demands, it follows that $p'$ also lies after $(i-2)$th component in the sequence. If $p'$ is different from the $(i-1)$th component, then we are done. Otherwise, $p'$ could be precisely $(i-1)$th component. Now we would show that after we finished pulling demand for the $(i-1)$th component, we would exhaust all the demand pushed by the components until the $(i-1)$th component in the sequence. This would imply that the component $p'$, from which the next demand $\epsilon>0$ was pushed, could not be the $(i-1)$th component in the sequence and this would complete the proof by induction. 

Let us consider a child-group $\mathcal{G}_1$ of $\mathcal{G}$. Using Lemma \ref{lem:gen-blue}, the total demand sent to the accumulator $r$ by the components in $\mathcal{G}_1$ is at most $(1+O(1/\ell)) u \sum_{p \in \mathcal{G}_1}\dot{y}_{p} -  \sum_{p \in \mathcal{G}_1} q_{p}$ units. For the distance analysis, we can assume that among the demand coming from a child-group $\mathcal{G}_1$, exactly $(1+O(1/\ell)) u \dot{y}_{v_{\mathcal{G}_1}}$ units of demand is coming from the virtual component corresponding to the children-group $\mathcal{G}_1$. Since the virtual component is the closed component to the accumulator $r$ among all the black components in $\mathcal{G}_1$, assuming this can only lower the distance over which the pushed demand would have travelled. If we can show that the claim holds for these distances, it will also hold for the original push distances. Using this modification and an argument similar to the proof for Lemma~\ref{lem:gen-blue}, we have that the demand pushed by components starting from $1$ until $(i-1)$th is at most $(1+O(1/\ell)) \left( u \dot{y}_{\mathcal{G}}[1, i-1] \right) - q_{\mathcal{G}}[1, i-1]$, where $q_{v_{\mathcal{G}_1}}=0$, for any virtual component corresponding to some child-group $\mathcal{G}_1$. The demand pulled by the same set of components is

$$ 
(1+O(1/\ell)) u \left(Z_{\mathcal{G}}[1,i-1] + \floor{\dot{y}_{\mathcal{G}}[1,i-1]} \right) - q_{\mathcal{G}}[1, i-1]  \stackrel{Obs.~\ref{obs:prefix_opening}}\geq $$

$$(1+O(1/\ell)) u \left( \floor{\bar{y}_{\mathcal{G}}[1, i-1]} + \floor{\dot{y}_{\mathcal{G}}[1,i-1]} \right) - q_{\mathcal{G}}[1, i-1]
$$

$$ \geq (1+O(1/\ell)) u \left( \bar{y}_{\mathcal{G}}[1,i-1] - 1 + \floor{\dot{y}_{\mathcal{G}}[1, i-1]} \right) - q_{\mathcal{G}}[1, i-1] \geq$$

$$ (1+O(1/\ell)) u \left( \dot{y}_{\mathcal{G}}[1,i-1]  - 1\right) - q_{\mathcal{G}}[1, i-1] $$

Hence the total demand pulled by components from $1$ until $i-1$ plus the reserved $u(1+O(1/\ell))$ units of demand is at least the demand pushed by components from $1$ until $i-1$. Hence, all this demand will be exhausted after we finished pulling demand for the $(i-1)$th component and the lemma follows.
\qed
\end{proof}

Now we would distribute the demand received by any black component $p$ to the actual open facilities (which are located at the representatives $J_p$), such that each facility has a capacity violation of at most $1+O(1/\ell)$. The following lemma bounds the cost of this step. 
\begin{lemma}
\label{lem:back_to_facilities}
Any demand that a black component $p$ received in the greedy pulling back process can be distributed to the open facilities within $p$. The distance travelled by the demand received by $p$ in this procedure is at most  $O(\ell)d(J_p, R \setminus J_p)$.
\end{lemma}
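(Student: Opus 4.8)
The plan is to treat the pulled‑back demand that reaches $p$ as a single lump of supply entering $p$ at one point, route it to the facilities opened inside $p$, and bound separately the capacity it consumes and the distance it travels inside $p$.

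First I would fix the geometry. After the massage and pre‑assignment of Lemma~\ref{lem:gen-pre-assign}, the open facilities of $p$ form a set $S$ with $|S| = Z_p + \floor{\dot y_p}$, and every facility of $S$ — including those the massage created at the root of $p$ — is co‑located with a representative in $J_p$. The demand pulled to $p$ arrives at the node $v_p \in J_p$ where the grey edge leaving $p$ towards the accumulator attaches (or at the group root itself, still a node of $J_p$, when $p$ is the root component of its group). For a virtual component there is nothing to prove, since its facilities are co‑located with the arriving demand, so from now on assume $p$ is non‑virtual.

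Next I would check there is enough room. Writing $q_p$ for the demand pre‑assigned to $S$ in Lemma~\ref{lem:gen-pre-assign}, each facility of $S$ carries at most $u$ of it, so $q_p \le |S|u$. By the definition of the greedy pulling process, $p$ receives exactly $(1+O(1/\ell))(Z_p + \floor{\dot y_p})u - q_p = (1+O(1/\ell))|S|u - q_p \ge 0$ units, which together with $q_p$ sums to $(1+O(1/\ell))|S|u$. Hence I can fill each facility of $S$ up to $(1+O(1/\ell))u$ and absorb exactly all the demand $p$ received, so every open facility of $p$ ends with capacity violation at most $1+O(1/\ell)$, consistent with Observation~\ref{obs:cap_violation}. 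I expect this bookkeeping — that precisely these many units arrive, which relies on Lemma~\ref{lem:acc_leftover} and on the pulling order — to be the only delicate point; the rest is a diameter estimate.

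Finally, for the distance bound: both $v_p$ and every open‑facility location of $p$ lie in $J_p$, and since $p$ is a connected component of the black‑edge subgraph, $J_p$ is spanned by at most $|J_p|-1$ black edges, each of length at most $d(J_p, R\setminus J_p)$ by Claim~\ref{claim:black_edges}; so any two nodes of $J_p$ are at metric distance at most $(|J_p|-1)\,d(J_p, R\setminus J_p)$. Since $y_p \le 2\ell$ while $y(U_v) \ge 1-1/\ell$ for each $v \in J_p$ by Claim~\ref{claim:clustering}, we get $|J_p| \le \tfrac{2\ell}{1-1/\ell} = O(\ell)$, so every unit of demand received by $p$ travels at most $O(\ell)\,d(J_p, R\setminus J_p)$, which proves the lemma.
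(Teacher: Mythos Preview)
Your argument is correct and follows essentially the same route as the paper: bound the intra-$J_p$ diameter by the number of black edges (which is $O(\ell)$ since $y_p\le 2\ell$ and each representative contributes at least $1-1/\ell$) times the maximum black-edge length $d(J_p,R\setminus J_p)$ from Claim~\ref{claim:black_edges}. Your proof is more explicit than the paper's in that you spell out the capacity check (that the pulled amount $(1+O(1/\ell))|S|u-q_p$ fits into the open facilities of $S$ at violation $1+O(1/\ell)$), which the paper leaves implicit in the definition of the greedy pulling process; otherwise the two proofs are the same.
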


\begin{proof}
Note that, the grey edge going out of any black component $p \in \Upsilon$ (including the root component) is longer than any black edge within $p$ and the number of edges within any black component is $O(\ell)$. By Lemma~\ref{lem:tau_properties}, the length of this grey edge is $d(J_p, R \setminus J_p)$. Also note that, while traveling to the accumulator in the dependent rounding phase, all the demand definitely use this grey edge. Hence we can distribute all the demand received by $p$ in the greedy pulling phase to the exact location of the open facilities within $p$, by loosing an additive factor of $O(\ell) d(J_p, R \setminus J_p)$ in the distance travelled by the demand.
\qed
\end{proof}

\subsection{Distributing frozen demand to the open facilities}
\label{frozen_dist}

Now, we distribute the frozen $(1+O(1/\ell))u$ units of demand located at the accumulators over some open facilities, such that each open facility gets at most $uO(1/\ell)$ more demand. Let us fix a tree $\tau \in \Upsilon$. To do this distribution, we first send $(1+O(1/\ell))u$ units of demand from each of the non-fake accumulator to the accumulator of his parent group. Note that, using Lemma~\ref{lem:group_edges}, we can bound the cost for this movement by paying an additive factor of $O(\ell) d(J_p, R\setminus J_p)$ in the distance moved by this demand in Section \ref{rounding-children}. Let $r$ be the accumulator belonging to the group $\mathcal{G}$, which received $|C_{\mathcal{G}}| (1+O(1/\ell))u = O(u|C_{\mathcal{G}}|)$ units of demand from the accumulators of the non-leaf children groups $C_{\mathcal{G}}$ of the  group $\mathcal{G}$ in the tree $\tau$. Note that, $|C_{\mathcal{G}}| \leq n_{\mathcal{G}}$, since $\mathcal{G}$ may have children which are leaf-groups. We start sending $O(1/\ell) u (\floor{\dot{y}_\mathcal{G}(i)} + Z_\mathcal{G}(i))$ units of demand to the $i$th black component (in the same greedy order defined by the vector $\dot{y}_{\mathcal{G}}$) in the $BCG(\mathcal{G})$, starting from $i=1$, until we have no more demand left with $r$. 

\begin{lemma}
\label{lem:acc_distribution}
After the distribution procedure for some accumulator $r$ belonging to the group $\mathcal{G}$, all the demand which $r$ received from the accumulators of his non-leaf children groups will be distributed fully.
\end{lemma}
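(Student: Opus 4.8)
The plan is to compare, for the group $\mathcal{G}$, the amount of demand that arrived at $r$ with the total residual capacity that the distribution step can pour into the components of $BCG(\mathcal{G})$. Since the greedy distribution procedure only stops when $r$ runs out of demand, it suffices to show that this residual capacity is at least the $|C_{\mathcal{G}}|(1+O(1/\ell))u$ units that arrived at $r$ from the accumulators of its non-leaf children groups (each of which carried exactly $u(1+O(1/\ell))$ units, by Lemma~\ref{lem:acc_leftover}).

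First I would count the offered capacity. For the component occupying the $i$-th position of the greedy order the procedure may pour $O(1/\ell)\,u\,(\floor{\dot{y}_{\mathcal{G}}(i)}+Z_{\mathcal{G}}(i))$ units, so the total over $BCG(\mathcal{G})$ is $O(1/\ell)\,u\,\bigl(\sum_{i=1}^{n_{\mathcal{G}}}\floor{\dot{y}_{\mathcal{G}}(i)}+Z_{\mathcal{G}}[1,n_{\mathcal{G}}]\bigr)$. By Observation~\ref{obs:prefix_opening_2}, $Z_{\mathcal{G}}[1,n_{\mathcal{G}}]\ge\floor{\bar{y}_{\mathcal{G}}[1,n_{\mathcal{G}}]}\ge\bar{y}_{\mathcal{G}}[1,n_{\mathcal{G}}]-1$, and since $\floor{\dot{y}_p}+\bar{y}_p=\dot{y}_p$ for every $p$, the bracketed sum is at least $\dot{y}_{\mathcal{G}}[1,n_{\mathcal{G}}]-1$. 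The next step is to lower bound $\dot{y}_{\mathcal{G}}[1,n_{\mathcal{G}}]=\sum_{p\in BCG(\mathcal{G})}\dot{y}_p$ by $\ell\,|C_{\mathcal{G}}|$. The key point is that the blue-opening transfer only reassigns opening between components of the same group (out of each $p$ into the virtual component co-located with the root of its group), so that for any group $\mathcal{G}_1$ the identity $\sum_{p\in\mathcal{G}_1}\dot{y}_p=y(U_{\mathcal{G}_1})$ holds, the left-hand sum including the virtual component $v_{\mathcal{G}_1}$; this is exactly the decomposition $\dot{y}_p+\bigl(\sum_{i\in U_p}z_{\bot,i}^p-\dot{y}_pz_{\bot}^p\bigr)=y_p$ summed over $\mathcal{G}_1$. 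Since every $\mathcal{G}_1\in C_{\mathcal{G}}$ is a non-leaf group, Observation~\ref{obs:group_vol} gives $y(U_{\mathcal{G}_1})\ge\ell$; summing over the disjoint non-leaf children of $\mathcal{G}$ and discarding the leaf children yields $\dot{y}_{\mathcal{G}}[1,n_{\mathcal{G}}]\ge\sum_{\mathcal{G}_1\in C_{\mathcal{G}}}y(U_{\mathcal{G}_1})\ge\ell\,|C_{\mathcal{G}}|$.

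Plugging this back, the offered capacity is at least $O(1/\ell)\,u\,(\ell|C_{\mathcal{G}}|-1)=\Omega(1)\,u\,|C_{\mathcal{G}}|-O(1/\ell)\,u$, which for $\ell$ a large enough constant and an appropriate choice of the hidden constant in the poured amount $O(1/\ell)\,u\,(\floor{\dot{y}_{\mathcal{G}}(i)}+Z_{\mathcal{G}}(i))$ exceeds $|C_{\mathcal{G}}|(1+O(1/\ell))\,u$; hence the greedy procedure distributes all the demand located at $r$. Finally, each component $p$ receives $O(1/\ell)\,u\,(\floor{\dot{y}_p}+Z_p)$ spread over its $\floor{\dot{y}_p}+Z_p$ open facilities, i.e.\ at most $O(1/\ell)\,u$ extra per facility, so combined with Observation~\ref{obs:cap_violation} no open facility ends up with load above $(1+O(1/\ell))\,u$. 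The main obstacle is the second step: showing that the leftover opening $\dot{y}_{\mathcal{G}}[1,n_{\mathcal{G}}]$ is genuinely proportional to $\ell$ per non-leaf child — this hinges on the fact that, once the blue opening has been reassigned within each group, $\dot{y}$ accounts for essentially the whole group volume, together with the group-volume lower bound for non-leaf groups; one must also keep the various hidden $O(1/\ell)$ constants consistent so that the extra demand poured here does not push the total capacity violation beyond $1+O(1/\ell)$.
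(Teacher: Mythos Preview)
Your proposal is correct and follows essentially the same approach as the paper: both compare the incoming demand $|C_{\mathcal{G}}|(1+O(1/\ell))u$ against the total offered capacity $O(1/\ell)\,u\,\bigl(\sum_i\floor{\dot{y}_{\mathcal{G}}(i)}+Z_{\mathcal{G}}[1,n_{\mathcal{G}}]\bigr)$, lower-bound the latter via Observation~\ref{obs:prefix_opening_2} by $O(1/\ell)\,u\,(\dot{y}_{\mathcal{G}}[1,n_{\mathcal{G}}]-1)$, and then use that each non-leaf child group contributes at least $\ell$ to $\dot{y}_{\mathcal{G}}[1,n_{\mathcal{G}}]$. You are in fact slightly more explicit than the paper in justifying the last step (spelling out that the blue-opening transfer keeps $\sum_{p\in\mathcal{G}_1}\dot{y}_p=y(U_{\mathcal{G}_1})$ and then invoking Observation~\ref{obs:group_vol}), which the paper compresses into the phrase ``any of the non-leaf group in $C_{\mathcal{G}}$ had $O(\ell)$ opening.''
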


\begin{proof}
Note that any of the non-leaf group in $C_{\mathcal{G}}$ had $O(\ell)$ opening before the dependent rounding procedure. Now, by Observation~\ref{obs:prefix_opening_2}, we know that $Z_{\mathcal{G}}[1,n_{\mathcal{G}}]$ is at least $\floor{\bar{y}_{\mathcal{G}}[1,n_{\mathcal{G}}]}$. Hence, the total demand which we distribute over the black components in $BCG(\mathcal{G})$ in the above process is at least
$$
O(1/\ell) u (\floor{\dot{y}_\mathcal{G}[1,n_{\mathcal{G}}]} + Z_\mathcal{G}[1,n_{\mathcal{G}}]) \geq O(1/\ell) u (\floor{\dot{y}_\mathcal{G}[1,n_{\mathcal{G}}]} + \floor{\bar{y}_{\mathcal{G}}[1,n_{\mathcal{G}}]}) 
$$
$$ \geq O(1/\ell) u (\dot{y}_\mathcal{G}[1,n_{\mathcal{G}}] - 1) \geq O(1/\ell) u (|C_{\mathcal{G}}| \ell - 1) = O(u|C_{\mathcal{G}}|)
$$

Hence the lemma follows.
\qed
\end{proof}

By an argument similar to Lemma~\ref{lem:back_to_facilities}, we can send this demand to any open facility within $p$, by loosing an additive factor of $O(\ell) d(J_p, R \setminus J_p)$ in the distance traveled by the demand. Hence, this shows that we can distribute all the demand received by $r$ from his children-accumulators, corresponding to groups $C_{\mathcal{G}}$, among the open facilities within black components in $BCG(\mathcal{G})$, such that each facility receives at most an extra $O(1/\ell)u$ units of demand. We keep on doing this process bottoms-up, until we reach the very root fake accumulator. Now, for the demand located in the fake accumulator, we just distribute that demand over the open facilities in the very root group of the tree, which was using this accumulator. Note that since the very root group comprises of only one black component with $O(\ell)$ opening, there will be at least $O(\ell)$ open facilities in this component and again we will send $O(1/\ell) u$ units of extra demand to all the open facilities in the very root black component. By Lemma~\ref{lem:tau_properties}, each edge in the black component $p$ has length at most $d(J_p, R \setminus J_p)$ and by Lemma~\ref{lem:group_edges} the number of edges in the group is $O(\ell)$. Hence, by loosing an additive factor of $O(\ell) d(J_p, R \setminus J_p)$ in the distance traveled by the demand, we can distribute this demand over open facilities in this black component. 

In the following lemma, we bound the cost of distributing the frozen demand by the upper bound which we use to bound the cost of moving this demand from black components in $BCG(\mathcal{G})$ to the accumulator.

\begin{lemma}
\label{lem:acc_cost}
The distance travelled by any demand from each non-fake accumulator group $\mathcal{G}$ in the above re-distribution process is bounded above by $O(\ell) d(J_p, R \setminus J_p)$, which is also a bound on the distance it travelled to reach the accumulator in the dependent rounding phase.
\end{lemma}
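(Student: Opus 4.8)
The plan is to follow a single unit $\delta$ of frozen demand and bound, leg by leg, the total distance it covers during the re-distribution, reusing the structural facts about $\Upsilon$ that already governed the pushing and pulling phases. Fix a non-fake group $\mathcal{G}$, its accumulator $r$, and a unit $\delta$ frozen at $r$; by construction $\delta$ was pushed to $r$ during the dependent-rounding phase (Section~\ref{rounding-children}) from a black component $p$ lying in a child group of $\mathcal{G}$, i.e. $p\in BCG(\mathcal{G})$. By Lemma~\ref{lem:group_edges}(\ref{dist_to_root2}) the distance $\delta$ travelled to reach $r$ is at most $O(\ell)\,d(J_p,R\setminus J_p)$, which is the ``also a bound on'' clause; and that same path contains the grey edge leaving $p$, of length exactly $d(J_p,R\setminus J_p)$ by property (4) of Lemma~\ref{lem:tau_properties}, so $\delta$ in fact travelled \emph{at least} $d(J_p,R\setminus J_p)$ to reach $r$. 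The lower bound is what makes the accounting close: once we show the re-distribution distance of $\delta$ is also $O(\ell)\,d(J_p,R\setminus J_p)$, it is within an $O(\ell)$ factor of the distance $\delta$ travelled in the pushing phase, so summing over all units charges the entire re-distribution cost against the bound of Lemma~\ref{lem:flows_cost}.

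Next I would split the re-distribution of $\delta$ into three legs: (i) from $r$ up to the accumulator $r'$ of the parent group $\mathcal{G}'$ of $\mathcal{G}$; (ii) from $r'$, in the greedy order on $BCG(\mathcal{G}')$, down to the black component $q\in BCG(\mathcal{G}')$ that receives it; (iii) from $q$ to the open facility inside $q$ that finally absorbs it. Leg~(i) is the path from $\mathrm{root}(\mathcal{G})$ to $\mathrm{root}(\mathcal{G}')$, of length $O(\ell)\,d(J_{\mathrm{root}(\mathcal{G})},R\setminus J_{\mathrm{root}(\mathcal{G})})$ by Lemma~\ref{lem:group_edges}(\ref{dist_to_root2}); since $\mathcal{G}$ is an ancestor group of the group containing $p$, the grey edge leaving $\mathcal{G}$ sits above the grey edge leaving $p$ on one leaf-to-root path, so by properties (3)--(4) of Lemma~\ref{lem:tau_properties} its length is $\le d(J_p,R\setminus J_p)$ and leg~(i) is $O(\ell)\,d(J_p,R\setminus J_p)$. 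Leg~(iii) is bounded by $O(\ell)\,d(J_q,R\setminus J_q)$ exactly as in Lemma~\ref{lem:back_to_facilities} (the grey edge out of $q$ dominates all $O(\ell)$ internal black edges of $q$), and leg~(ii) is bounded by $O(\ell)\,d(J_q,R\setminus J_q)$ by Lemma~\ref{lem:group_edges}(\ref{dist_to_root2}) applied to $q$. Hence the whole statement reduces to proving $d(J_q,R\setminus J_q)=O\!\left(d(J_p,R\setminus J_p)\right)$.

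For that I would argue, in the spirit of the proof of Lemma~\ref{lem:gen-pull_cost}, that the greedy re-distribution out of $r'$ cannot route $\delta$ to a black component with a much longer outgoing grey edge than $p$ has. Order $BCG(\mathcal{G}')$ by non-decreasing distance to $r'$; for the distance analysis, re-attribute the demand arriving at $r'$ from each child group of $\mathcal{G}'$ to the black component of that child group closest to $r'$ (for $\mathcal{G}$ this is $\mathrm{root}(\mathcal{G})$, every other black component of $\mathcal{G}$ lying strictly below it). Using Observations~\ref{obs:prefix_opening} and~\ref{obs:prefix_opening_2} together with the capacity count of Lemma~\ref{lem:acc_distribution}, every prefix of this order carries residual $O(1/\ell)u$-slack at least equal to the re-attributed demand meant to land in it, up to the unavoidable $O(u)$ rounding deficit; consequently $\delta$ is placed in a black component $q$ whose outgoing grey edge is no longer than the grey edge entering $\mathcal{G}$ on the leaf-to-root path through $p$ (invoking Lemma~\ref{lem:group_edges}(\ref{group_edges2}) inside $q$'s own group when $q$ is not its group's root), and that edge has length $\le d(J_p,R\setminus J_p)$ by the monotonicity in Lemma~\ref{lem:tau_properties}. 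Adding legs (i)--(iii) yields the claimed $O(\ell)\,d(J_p,R\setminus J_p)$, and the charging against Lemma~\ref{lem:flows_cost} described above finishes the proof. I expect this last paragraph -- pinning down $q$ through the prefix capacity inequalities -- to be the only genuinely technical step: the greedy re-distribution at $r'$ sorts by distance to $r'$ and is oblivious to the source group of each unit, so without re-deriving those prefix inequalities (as was done for the pulling phase in Lemma~\ref{lem:gen-pull_cost}) a unit of $\mathcal{G}$'s frozen demand could in principle spill across several sibling groups; everything else is a direct invocation of Lemmas~\ref{lem:group_edges}, \ref{lem:tau_properties}, \ref{lem:back_to_facilities} and~\ref{lem:gen-pre-assign}.
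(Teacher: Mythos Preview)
Your three-leg decomposition and the use of Lemmas~\ref{lem:group_edges}, \ref{lem:tau_properties}, \ref{lem:back_to_facilities} for legs~(i) and~(iii) match the paper's approach, and your reduction of everything to a prefix--capacity argument at the parent accumulator $r'$ is exactly the right idea. The gap is in how you anchor the demand in that prefix argument.

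You re-attribute the $(1+O(1/\ell))u$ units coming from a non-leaf child group $\mathcal{G}$ to the \emph{closest} component of $\mathcal{G}$ to $r'$, namely $\mathrm{root}(\mathcal{G})$, and then assert that every prefix of the $BCG(\mathcal{G}')$ order carries enough $O(1/\ell)u$-slack for the re-attributed demand landing in it. This fails: the prefix ending right at $\mathrm{root}(\mathcal{G})$ contains only that single component of $\mathcal{G}$ (all other components of $\mathcal{G}$ are farther from $r'$), so the slack contributed by $\mathcal{G}$ to this prefix is only $O(1/\ell)u\cdot \dot y_{\mathrm{root}(\mathcal{G})}$, which can be far smaller than the $(1+O(1/\ell))u$ you just placed there. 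If several child-group roots cluster near $r'$ while their remaining components lie much farther out, the cumulative deficit is $\Theta(u)$ per child group, not a single $O(u)$ rounding term, and your conclusion about where $\delta$ lands no longer follows.

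The paper fixes precisely this point by anchoring each child group at its \emph{farthest} component $p_{\mathcal{G}'}$ from $r'$ and processing child groups in order of $d(r',p_{\mathcal{G}'})$. The prefix up to $p_{\mathcal{G}'}$ then contains \emph{all} of $\mathcal{G}'$, hence at least $\dot y_{\mathcal{G}'}[1,n_{\mathcal{G}'}]-1=\Theta(\ell)$ open facilities and $\Theta(u)$ slack, which absorbs the $(1+O(1/\ell))u$ block from $\mathcal{G}'$'s accumulator; one then iterates, peeling off one child group at a time. The distance bound $d(r',q)\le d(r',p_{\mathcal{G}})\le O(\ell)\,|e|$ (with $e$ the shortest grey edge entering $\mathcal{G}$) drops out directly, because every edge on the path from $r'$ to $p_{\mathcal{G}}$ lies in $\mathcal{G}'\cup\mathcal{G}$ and is dominated by $|e|$ via Lemma~\ref{lem:group_edges}(\ref{group_edges2}). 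Swapping your ``closest'' for ``farthest'' and running this peeling argument is all that is needed to close the gap; the rest of your write-up goes through.
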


\begin{proof}
Let us pick the farthest black component in each non-leaf group $\mathcal{G}' \in C_{\mathcal{G}}$, say $p_{\mathcal{G}'}$, to the accumulator root $r$ of group $\mathcal{G}$. Let set $A_{\mathcal{G}}=\{p_{\mathcal{G}'} |\mathcal{G}' \in C_{\mathcal{G}}\}$. Now pick the black component $p_1$, such that it is closest to the accumulator root $r$ among the components in set $A_{\mathcal{G}}$. Now we would argue that there are at least $O(\ell)$ facilities open in the components until $p_1$ in the sequence defined by the vector $\dot{y}_{\mathcal{G}}$. Let $\mathcal{G}_1 \ni p_1$. By definition of $\dot{y}_{\mathcal{G}}$, all the other black components of $\mathcal{G}_1$ comes before $p_1$ in the sequence. Let, $i_{p_1}$ be the position of $p_1$ in the sequence defined by vector $\dot{y}_{\mathcal{G}}$. By Observation~\ref{obs:prefix_opening_2} and the fact that $\sum_{p \in \mathcal{G}}\dot{y}_p = \sum_{p \in \mathcal{G}}{y}_p = O(\ell)$, the number of facilities open until $p_1$ is at least
$$
\floor{\dot{y}_\mathcal{G}[1,i_{p_1}]} + Z_\mathcal{G}[1,i_{p_1}] \geq \floor{\dot{y}_\mathcal{G}[1,i_{p_1}]} + \floor{\bar{y}_{\mathcal{G}}[1,i_{p_1}]} \geq \dot{y}_\mathcal{G}[1,i_{p_1}] - 1 \geq O(\ell)
$$
Now, look at the $u (1+O(\ell))$ units of demand coming from the accumulator belonging to group $\mathcal{G}_1$. We can think of sending this demand first from the accumulator $r$ to the open facilities in $BCG(\mathcal{G})$. By the above claim, all this demand would be sent to components at a distance at most $d(r, J_{p_1})$ in the sequence. Now, since all this demand had travelled from the children groups of $\mathcal{G}_1$ to its accumulator root, they must have travelled a distance at least the length of the shortest grey edge, lets say $e$, entering group $\mathcal{G}_1$. By Lemma~\ref{lem:tau_properties} and Lemma~\ref{lem:group_edges}, we know that the total number of edges in group $\mathcal{G}_1$ and $\mathcal{G}$ is at most $O(\ell)$ and each edge in group $\mathcal{G}_1$ and consequently in group $\mathcal{G}$ has length at most the length of $e$. This means, that the distance travelled by the demand reserved at the accumulator of group $\mathcal{G}_1$ travels at most $O(\ell) d(J_p, R \setminus J_p)$, which proves the lemma for this demand. 

Now we can ignore this $u (1+O(\ell))$ units of demand and the facilities to which this demand was distributed in the above process. Now, the same argument can be applied iteratively for the black component $p_2$ that is the next closest component to the accumulator root $r$ among the components in set $A_{\mathcal{G}}$ and the demand coming from the accumulator corresponding to the group $\mathcal{G}_2$ to which $p_2$ belongs. This proves the lemma.
\qed
\end{proof}

\begin{proof} [of the Theorem \ref{thm:general}]

We modify the initial solution by "moving" all facilities to their respective representatives (see Lemma~\ref{lem:big_movement}). The obtained solution has cost $O(\ell)\mathbf{LP}$. In the Lemma~\ref{lem:gen-pre-assign}, we pre-assign some demand and all the other demand we send to the respective accumulators. The cost of this operation is bounded by $\sum_{p \in \Upsilon} \sum_{j \in C, i \in U_p} O(\ell) d(i,j){x'}_{i,j} + O(\ell^2) \sum_{p \in \Upsilon} d(J_p, R \setminus J_p) \pi(J_p) \leq O(\ell^2)\mathbf{LP}$. The last inequality follows from \cite{Li16non_uniform}. By  the Lemmas~\ref{lem:gen-pull_cost}, \ref{lem:back_to_facilities} and \ref{lem:acc_cost}, we can bound the distance travelled by any demand in Sections~\ref{pulling_back_sec}, \ref{frozen_dist} by the distance it travelled in Section~\ref{rounding-children}. This implies that the cost of moving the demand in Sections \ref{pulling_back_sec} and \ref{frozen_dist} is bounded by $O(\ell^2)\mathbf{LP}$. Hence, overall the connection cost of our algorithm is $O(\ell^2)\mathbf{LP}$.

From the Observation \ref{obs:cap_violation} we know that the capacity violation of each facility is at most $1 + O(1/\ell)$. Moreover in Section~\ref{frozen_dist}, we increase the capacity violation of each facility by at most $O(1/\ell)$. So the final capacity violation is $1 + O(1/\ell)$, which ends the proof of the theorem.
\qed
\end{proof}

\section{Concluding remarks}

We showed that Configuration LP helps obtaining an algorithm with $1+\epsilon$ capacity violation for uniform capacities. It remains open if a similar result is possible for general capacities. It seems that the difficulty of generalizing our algorithm to general case lies in the dependent rounding. It is hard to control the number of open facilities and the capacities at the same time.

\bibliographystyle{abbrv}
\bibliography{kMedianbib}

\newpage

\appendix

\section{Rounding Algorithm (the $|C|=ku$ case)}
\label{sat_case}

The idea of the algorithm is roughly the same as in the general case. We just need to replace the Lemmas \ref{lem:gen-pre-assign} and \ref{lem:gen-blue} by their versions, which works only in fully saturation case, the Lemmas \ref{lem:massage} - \ref {lem:blue}.

Here we start with the solution $\{x', y', z'\}$. In the following lemmas we will first massage each black component. Next we will open the set of cardinality either $\floor{\dot{y}_p}$ or $\ceil{\dot{y}_p}$ and assign some demand to the open facilities from this set. All the not assigned demand we will send to the root of the parent group.

\begin{definition}
A black component $p \in \Upsilon$, with set of facilities $U_p$, such that for each $S \in \mathcal{S}$ if ${z}_S^{p}>0$ then $|S| \in \{\floor{\dot{y}_p}, \ceil{\dot{y}_p}\}$ is called a massaged black component w.r.t. the solution $\{x, y, z\}$.
\end{definition}

\begin{lemma} \label{lem:massage}
Given an LP solution $\{\mathbf{x'}, \mathbf{y'}, \mathbf{z'}\}$, we can get another solution $\{\mathbf{x''}, \mathbf{y''}, \mathbf{z''}\}$, such that:

\begin{enumerate}

\item All the black components, except the singleton root components, are massaged w.r.t the new solution $\{\mathbf{x''}, \mathbf{y''}, \mathbf{z''}\}$.
\item The cost of modifying the solution $\{\mathbf{x'}, \mathbf{y'}, \mathbf{z'}\}$ to the new solution $\{\mathbf{x''}, \mathbf{y''}, \mathbf{z''}\}$ is at most $O(\ell)\mathbf{LP}$.
\end{enumerate}
\end{lemma}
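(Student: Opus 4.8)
The plan is to massage each non-singleton-root black component $p$ separately; this is legitimate because the sets $U_p$ are pairwise disjoint, and the singleton root components are later classified as virtual components and so never need to be massaged. Fix such a $p$. I would first record two preliminary facts. By Observation~\ref{obs:group_vol} (and since a black component is contained in its group) we have $y_p \le 2\ell$, so Constraint~(\ref{clp:13}) gives $\ell_1 z_\bot^p \le \sum_{i\in U_p} z_{\bot,i}^p \le y_p \le 2\ell$, hence $z_\bot^p \le 2/\ell$ and $\dot y_p = \frac{\sum_{S\in\mathcal S}|S|z_S^p}{1-z_\bot^p} = O(\ell) < \ell_1$; consequently every configuration produced below has size at most $\ell_1$, stays inside $\mathcal S$, and $z_\bot^p$ is never altered. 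Second, the \emph{saturation property}: since $|C| = ku$ and the capacities are uniform, summing Constraint~(\ref{blp:capacity}) over all $i$ and using Constraints~(\ref{blp:k}) and~(\ref{blp:connect}) forces $\sum_{j} x_{i,j} = u\,y_i$ for every $i$; summing Constraint~(\ref{clp:12}) over the configurations containing a fixed $i\in U_p$ then forces $\sum_{j\in C} z^p_{S,i,j} = u\,z^p_{S,i}$ for every $S\ni i$. So, after splitting and copying to make the solution complete (Lemma~1 of~\cite{spliting_svir}), in each configuration every open facility serves exactly $u$ clients.

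The rebalancing then proceeds greedily. While some configuration $S_1$ with $z^p_{S_1}>0$ satisfies $|S_1| > \ceil{\dot y_p}$, the invariant that the conditional expectation of $|S|$ equals $\dot y_p$ guarantees a configuration $S_2$ with $z^p_{S_2}>0$ and $|S_2| < \floor{\dot y_p}$. I would split both down to the common opening $\delta = \min(z^p_{S_1},z^p_{S_2})$, pick $w \in S_1\setminus S_2$ (nonempty since $|S_1|>|S_2|$), and on this $\delta$-mass replace $S_1,S_2$ by $S_1\setminus\{w\},\,S_2\cup\{w\}$, re-routing — using the saturation property together with $\mathcal F=\mathcal C$ and soft capacities — the $u$ clients that $w$ serves in the $S_1$-copy onto $w$ in the $S_2\cup\{w\}$-copy. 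One then checks (with some care about not double-serving a client, which may force routing $w$'s demand through several configurations, all within $U_p$) that all facility openings $y_i$, all connection totals $x_{p,j}$, the cardinality $\sum_i y_i = k$, the quantity $\sum_S |S|z^p_S$ (hence $\dot y_p$), and Constraints~(\ref{clp:6})--(\ref{clp:13}) for $U_p$ are preserved; a rescaling of the affected openings restores the mean exactly if a particular implementation of the move does not. Termination follows because the potential $\sum_S z^p_S\cdot\operatorname{dist}\!\bigl(|S|,[\floor{\dot y_p},\ceil{\dot y_p}]\bigr)$ strictly decreases at each step — equivalently, the whole rebalancing can be organized as a single transportation between over-sized and under-sized configurations, moving a total facility-mass of at most $\sum_S |S|z^p_S \le \dot y_p$. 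Performing this for every non-singleton-root component in parallel yields $\{x'',y'',z''\}$ with the first claimed property.

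For the cost, every re-routing above keeps the moved demand inside $U_p$, so each of the at most $\dot y_p = O(\ell)$ units of facility-mass moved for $p$ carries its $u$ clients over a distance at most $\operatorname{diam}(U_p)$, which by Claim~\ref{claim:black_edges} and Lemma~\ref{lem:group_edges} is $O(\ell)\,d(J_p,R\setminus J_p)$. Summing this charge over all black components and applying the clustering properties (Claim~\ref{claim:clustering}) exactly as in the proof of Lemma~\ref{lem:big_movement} (Lemma~4.1 of~\cite{Li15uniform}) bounds the total modification cost by $O(\ell)\mathbf{LP}$.

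The delicate part will be the bookkeeping inside the rebalancing step: verifying that the $u$ clients of a moved facility can really be re-routed while keeping all eight configuration-LP constraints satisfied and $\dot y_p$ unchanged, and that only polynomially many moves are ever needed. This is exactly where the saturation hypothesis is indispensable — it pins every open facility's load to precisely $u$, so that ``moving a facility'' and ``moving its demand'' become a single, capacity-neutral operation; in the general case this clean move is unavailable, which is why Lemma~\ref{lem:gen-pre-assign} and its ``brutal massage'' that adds facilities take its place there.
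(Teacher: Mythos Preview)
Your rebalancing procedure is essentially the paper's: iteratively move a facility from a too-large configuration to a too-small one. The paper handles the ``double-serving'' conflict you allude to by first swapping client assignments \emph{within} the large set $S_1$ (making the chosen facility ``clean'') before transferring it; your parenthetical about ``routing through several configurations'' is vaguer but survivable.

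The genuine gap is in the cost bound. You bound the total facility-mass rebalanced in component $p$ by $\dot y_p = O(\ell)$, giving a per-component cost of $O(\ell)\cdot u \cdot O(\ell)\,d(J_p,R\setminus J_p)$, and then assert that summing over components ``applying the clustering properties exactly as in the proof of Lemma~\ref{lem:big_movement}'' yields $O(\ell)\mathbf{LP}$. This last step does not work: Lemma~\ref{lem:big_movement} (Lemma~4.1 of~\cite{Li15uniform}) bounds the cost of moving each facility to its own representative, exploiting Claim~\ref{claim:clustering}(4) to charge $d(i,v)$ against the LP connections into $i$. It says nothing about $\sum_p u\cdot d(J_p,R\setminus J_p)$, and indeed that sum is \emph{not} bounded by any multiple of $\mathbf{LP}$ --- think of widely separated black components whose clients are all served locally, so $\mathbf{LP}$ is tiny but the grey-edge lengths are huge.

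What makes the paper's argument go through is a much tighter bound on the facility-mass moved: Lemma~\ref{lem:massage_ineq} shows that $\sum_{S:|S|\le\dot y_p} z'^p_S\,u(\dot y_p-|S|)\le 2\pi(J_p)$, so the per-component massage cost becomes $O(\ell)\,d(J_p,R\setminus J_p)\,\pi(J_p)$ rather than $O(\ell^2)\,u\,d(J_p,R\setminus J_p)$. This product \emph{is} summable to $O(\ell)\mathbf{LP}$, but via Lemma~4.5 of~\cite{Li16non_uniform} (which bounds $d(J_p,R\setminus J_p)\,\pi(J_p)$ by $D(U_{J_p})+D'(U_{J_p})$), not via the clustering claim. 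The key idea you are missing is that the amount of rebalancing a component needs is controlled by how ``undecided'' its clients are --- precisely $\pi(J_p)$ --- and only the product $d(J_p,R\setminus J_p)\cdot\pi(J_p)$ admits a global bound against $\mathbf{LP}$.
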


\begin{proof}
Now we will describe the process for modifying the LP solution facility by facility. We start with $\{\mathbf{x''}, \mathbf{y''}, \mathbf{z''}\} = \{\mathbf{x'}, \mathbf{y'}, \mathbf{z'}\}$. Pick $S_1, S_2 \in \{S \in \mathcal{S}~|~ {z'}_{S}^p > 0\}$ which are the biggest and the smallest sets respectively. Notice that $|S_1| \leq {\ell}_1$. If $|S_1| \leq |S_2| + 1$, then we are done. Else, let $i_1 \in S_1 \setminus S_2$. We distinguish the following two cases.

\paragraph{Case 1:} If for all clients $j\in C$, for which ${z''}_{S_1, i_1, j}^{p} = {z'}_{S_1, i_1}^{p}$, we also have ${z''}_{S_2, i_1, j}^{p} = 0$ then we call $i_1$ "clean". Now we can directly move $i_1$ to $S_2$ without incurring any cost. More formally, we add a copy of sets $S'_1:=S_1 \setminus \{i_1\}$ and  $S'_2:=S_2 \cup \{i_1\}$ to $\mathcal{S}$ and remove a copy of sets $S_1$ and $S_2$ from it. Due to the splitting and making copies step above, $\mathcal{S}$ is a multi-set, hence a set may have several copies in $\mathcal{S}$. Here we assume that $S_1$, $S_2$, $S'_1$ and $S'_2$ refers to some copy of these sets. We modify the solution $z''$ as follows:
\begin{enumerate}
\item Facility movement:  
\begin{description}
\item[-] ${z''}_{S'_1}^{p} = {z''}^{p}_{S_1}$, ${z'}_{S'_2}^{p} = {z'}^{p}_{S_2}$, 
\item[-] ${z''}^{p}_{S_1}= {z''}^{p}_{S_2}= 0$,
\item[-] ${z''}_{S'_1 ,i}^{p} = {z''}_{S'_1}^{p} \, \forall i \in S'_1$, 
\item[-] ${z''}_{S'_2, i}^{p} = {z''}_{S'_2}^{p} \, \forall i \in S'_2$,
\item[-] ${z''}_{S_1, i}^{p} = {z''}_{S_2, i}^{p} = 0 \, \forall i $.
\end{description}
\item Clients reassignment:  
\begin{description}
\item[-] ${z''}_{S'_2, i, j}^{p} =  {z''}_{S_2, i, j}^{p} \, \forall j \in C, i \in S_2$,
\item[-] ${z''}_{S'_2, i_1, j}^{p} =  {z''}_{S_1, i_1, j}^{p} \, \forall j \in C$,
\item[-] ${z''}_{S'_1, i, j}^{p} =  {z''}_{S_1, i, j}^{p} \, \forall j \in C, i \in S'_1$,
\item[-] ${z''}_{S_1, i, j}^{p} = {z''}_{S_2, i, j}^{p} = 0 \, \forall i,j $
\end{description}

\end{enumerate}
\paragraph{Case 2:} If there exists client $j \in C$ for which ${z''}_{S_1, i_1, j}^{p} = {z'}_{S_1, i_1}^{p}$ and there exists facility $i_2 \in S_2$ such that ${z''}_{S_2, i_2, j}^{p} = {z'}_{S_2, i_2}^{p}$, then we call $i_1$ "dirty". Observe that because we are in the saturated case, we know that the set $S_1$ serves at least $2u{z'}_{S_1}^{p}$ more demand than the set $S_2$. 
We also know that each client is sending either $0$ or ${z'}_{S_1}^{p}$ demand to set $S_1$. This implies there are $2u$ (fractional) clients served by $S_1$ which are not served by $S_2$. Let those be $C'$. Now we will make the facility $i_1$ clean by rerouting demand within facilities of $S_1$. Once $i_1$ becomes clean, we will move it to set $S_2$, as we do in Case 1. Following is the description of the cleaning process.
\begin{enumerate}
\item Let $j' \in C'$ be a client such that it is served by some $i_2 \in S_1$ and $i_2 \neq i_1$ and it is not served by any facility in $S_2$. Due to the above argument we can always find such a client.
\item We simply swap the assignments of $j$ and $j'$ between $i_1$ and $i_2$. Now, $j$ will send his demand to $i_2$ and $j'$ will send his demand to $i_1$.
\item Repeat until $i_1$ is clean.
\end{enumerate}

Repeat, the above described algorithm for set $\mathcal{S}$, until the cardinality of all the sets with non zero $z''$ values differ by at most $1$. Note that after any iteration of the above algorithm, $y''_p = y'_p = y_p$.

Now we need to upper bound the cost incurred in the cleaning process. Consider set $S \in \mathcal{S}$ and facility $i \in S$. Process of making a facility $i$ clean incurs a movement of $2u{z'}_S^p$ units of demand over at most $O(\ell)$ edges within black component $J_p$. Each such edge has length at most $d(J_p, R \setminus J_p)$. Finally, we bound the cost of rerouting $2u{z'}_S^p$ units of demand within one black component $J_p$ by $O(\ell)d(J_p, R \setminus J_p)u{z'}_S^p$.

Consider two stages of the cleaning process. The first stage ends at the moment when each set from $\mathcal{S}$ has cardinality at least $\floor{\dot{y}_p}$. Observe that each set $|S| \leq \dot{y}_p$, in the first stage, incur a cost $O(\ell) d(J_p, R \setminus J_p) {z'}_S^p u (\floor{\dot{y}_p} - |S|)$. If all sets from $\mathcal{S}$ have size at most $\ceil{\dot{y}_p}$ then we are done, otherwise stage two starts. Notice that at most $\dot{y}_p - \floor{\dot{y}_p}$ fraction of sets with cardinality $\floor{\dot{y}_p}$ will be swapped, each at most once. Using the fact that all ${z'}_S^p$ values are equal, we can bound the cost of stage two by $O(\ell)\sum_{S \in \mathcal{S} : |S| \leq \dot{y}_p} d(J_p, R \setminus J_p) {z'}_S^p u(\dot{y}_p - \floor{\dot{y}_p})$. This implies that the total cost of the procedure is bounded by $O(\ell) d(J_p, R \setminus J_p) \sum_{S \in \mathcal{S} : |S| \leq \dot{y}_p} {z'}_S^p u (\dot{y}_p - |S|) \leq O(\ell) d(J_p, R \setminus J_p) \pi(J_p) \leq O(\ell)(D(U_{J_p}) + D'(U_{J_p}))$, where the first inequality follows from Lemma~\ref{lem:massage_ineq} and the second inequality follows from Lemma~4.5 in \cite{Li16non_uniform}. This implies that the cost of the cleaning process for all black components is bounded by $O(\ell) \mathbf{LP}$.
\qed
\end{proof}

\begin{definition}
\label{complete_solution}
A solution $\{x', y', z'\}$ is called a complete solution w.r.t. a black component $p \in \Upsilon$, with set of facilities $U_p$, if the following conditions hold:

\begin{enumerate}

\item ${z'}_{S_1}^{p} = {z'}_{S_2}^{p} \, \forall S_1, S_2 \in \{S \in \mathcal{S}~|~ z_{S}^p > 0\}$.
\item ${z'}_{S}^{p} = {z'}_{S, i}^{p} \, \forall S \in \mathcal{S}$, $i \in S$.
\item ${z'}_{S, i, j}^{p} \in \{0, {z'}_{S, i}^{p}\} \, \forall S \in \mathcal{S}$, $i \in S, j \in C$.
\end{enumerate} 
\end{definition}
By splitting the sets and making copies (see Lemma 1 in \cite{spliting_svir}), we can assume that the solution $\{x', y', z'\}$ is complete w.r.t. any black component $p \in \Upsilon$. Hence the solution $\{x'', y'', z'' \}$ obtained by applying Lemma~\ref{lem:massage} will also be complete w.r.t. any black component $p \in \Upsilon$. 

\begin{lemma}
\label{lem:pre-assign}
Let $p \in \Upsilon$ be a massaged component s.t. $y_p \leq 2\ell$ and let $Z_p \in \{0 ,1\}$ be a random variable, such that $E[Z_p] = \dot{y}_p(z'') - \floor{\dot{y}_p(z'')}$. Moreover, constraints (\ref{clp:6}) to (\ref{clp:13}) are satisfied for the solution $\{x'', y'', z''\}$ and $U_p$. Then, we can pre-open a set $S \subseteq U_p$ of expected cardinality $\dot{y}_p$, where $|S| = \floor{\dot{y}_p(z'')} + Z_p$, and pre-assign a set $C' \subseteq C$ of clients to $S$ such that
	\begin{enumerate}
    	\item \label{at_most_u} each facility $i \in S$ is pre-assigned at most $u$ clients
        \item \label{out_demand} expected cost of sending not assigned demand $x_{p, C \setminus C'}$ to the root of the parent group is at most $O(\ell) d(J_p, R \setminus J_p) \pi(J_p)$
        \item \label{exp_y'_p} $\Pr[|S| = \floor{\dot{y}_p(z'')}]=\dot{y}_p(z'') - \floor{\dot{y}_p(z'')}$ and $\Pr[|S| = \ceil{\dot{y}_p(z'')}]=1 - (\dot{y}_p(z'') - \floor{\dot{y}_p(z'')})$
        \item \label{cost_of_pre} expected cost of pre-assignment and local moving of $x_{p, C \setminus C'}$ is at most $O(1)\sum_{j \in C, i \in U_p} d (i, j){x''}_{i, j}$
	\end{enumerate}
\end{lemma}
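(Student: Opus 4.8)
The plan is to re-run the skeleton of the proof of Lemma~\ref{lem:gen-pre-assign}: the massage of Lemma~\ref{lem:massage} has already driven the variance of the set sizes to its minimum, so the auxiliary subfamily $\mathcal{S}^1$ and the ``decided clients'' trick disappear and the accounting becomes genuinely easier. First I would record the structure we get for free: since $\{x'',y'',z''\}$ is complete (Definition~\ref{complete_solution}) and $p$ is massaged w.r.t.\ it, every $S\in\mathcal{S}$ with ${z''}_S^p>0$ has $|S|\in\{\floor{\dot{y}_p},\ceil{\dot{y}_p}\}$, all such ${z''}_S^p$ are equal, ${z''}_{S,i}^p={z''}_S^p$, and ${z''}_{S,i,j}^p\in\{0,{z''}_S^p\}$. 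Writing $\mathcal{S}_1,\mathcal{S}_2$ for the families of such sets of size $\floor{\dot{y}_p}$ and $\ceil{\dot{y}_p}$, and using $\dot{y}_p=\frac{\sum_{S\in\mathcal{S}}|S|{z''}_S^p}{1-{z''}_\bot^p}$ together with $\sum_{S\in\mathcal{S}}{z''}_S^p=1-{z''}_\bot^p$, a $2\times2$ linear system gives $\sum_{S\in\mathcal{S}_2}{z''}_S^p=(1-{z''}_\bot^p)(\dot{y}_p-\floor{\dot{y}_p})$ and $\sum_{S\in\mathcal{S}_1}{z''}_S^p=(1-{z''}_\bot^p)(1-(\dot{y}_p-\floor{\dot{y}_p}))$. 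I would also note the quantitative bound ${z''}_\bot^p\le 2\ell/\ell_1=2/\ell$, which follows from Constraint~(\ref{clp:13}) and $\sum_{i\in U_p}{z''}_{\bot,i}^p\le y_p\le 2\ell$; consequently the renormalisation $\tfrac{1}{1-{z''}_\bot^p}=1+O(1/\ell)$ is harmless, which is exactly why items~(\ref{out_demand}) and~(\ref{cost_of_pre}) can afford constants $O(\ell)$ and $O(1)$ in place of $O(\ell^2)$ and $O(\ell)$.

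The rounding itself goes as follows. If $Z_p=0$ I pick $S\in\mathcal{S}_1$ with probability $\frac{{z''}_S^p}{(1-{z''}_\bot^p)(1-(\dot{y}_p-\floor{\dot{y}_p}))}$, and if $Z_p=1$ I pick $S\in\mathcal{S}_2$ with probability $\frac{{z''}_S^p}{(1-{z''}_\bot^p)(\dot{y}_p-\floor{\dot{y}_p})}$; by the identities above, $E[Z_p]=\dot{y}_p-\floor{\dot{y}_p}$ makes each $S$ selected with unconditional probability $\frac{{z''}_S^p}{1-{z''}_\bot^p}$, while $|S|=\floor{\dot{y}_p}+Z_p$ by construction, which is item~(\ref{exp_y'_p}). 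Once $S$ is chosen I pre-assign $j$ to $i\in S$ iff ${z''}_{S,i,j}^p={z''}_S^p$; by completeness $w_{i,j}:={z''}_{S,i,j}^p/{z''}_S^p\in\{0,1\}$, and Constraints~(\ref{clp:11}) and~(\ref{clp:12}) make $w$ an integral matching that saturates each $i\in S$ at most $u$ times, which is item~(\ref{at_most_u}). Let $C'$ be the matched clients; the demand of each $j\notin C'$ served by $p$ is routed to the root of the parent group.

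For the cost bounds, $\Pr[j\to i]=\sum_{S\in\mathcal{S}:i\in S}\frac{{z''}_S^p}{1-{z''}_\bot^p}\cdot\frac{{z''}_{S,i,j}^p}{{z''}_S^p}=\frac{1}{1-{z''}_\bot^p}\sum_{S\in\mathcal{S}:i\in S}{z''}_{S,i,j}^p\le\frac{{x''}_{i,j}}{1-{z''}_\bot^p}$, so the expected pre-assignment cost is at most $(1+O(1/\ell))\sum_{j\in C,i\in U_p}d(i,j){x''}_{i,j}$ and the local re-routing inside $p$ (diameter $O(\ell)d(J_p,R\setminus J_p)$ over $O(\ell)$ edges) is absorbed into the same bound, giving item~(\ref{cost_of_pre}). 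Summing over $i$ and using ${x''}_{p,j}=x_{p,j}$ (the massage only re-routes demand inside $p$) gives $\Pr[j\to]=\frac{x_{p,j}-\sum_i{z''}_{\bot,i,j}^p}{1-{z''}_\bot^p}$, and since $\sum_i{z''}_{\bot,i,j}^p\le{z''}_\bot^p$ this yields $1-\Pr[j\to]\le\frac{1-x_{p,j}}{1-{z''}_\bot^p}=O(1)(1-x_{p,j})$; hence $E[x_{p,C\setminus C'}]=\sum_j x_{p,j}(1-\Pr[j\to])\le O(1)\pi(J_p)$, and multiplying by the $O(\ell)d(J_p,R\setminus J_p)$ bound on the distance from $p$ to the root of the parent group (Lemma~\ref{lem:group_edges}(\ref{dist_to_root2})) gives item~(\ref{out_demand}).

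The main obstacle, once this bookkeeping is set up, is precisely the estimate $1-\Pr[j\to]\le O(1)(1-x_{p,j})$: one must check that the $\bot$-served part of a client's demand does not spoil it, so that the demand that ends up unassigned --- and has to travel the full $O(\ell)d(J_p,R\setminus J_p)$ to the parent root --- can be charged against $\pi(J_p)$ rather than against $\sum_j x_{p,j}$. This is where completeness of $\{x'',y'',z''\}$ is essential (so that $w$ is genuinely integral and $\Pr[j\to i]=\tfrac{1}{1-{z''}_\bot^p}\sum_{S\in\mathcal{S}}{z''}^p_{S,i,j}$ holds with equality), together with the structural bound ${z''}_\bot^p=O(1/\ell)$. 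The remaining verifications --- that the two conditional distributions are valid, the $2\times2$ system, and the edge-count and diameter estimates for $p$ --- are routine given Lemma~\ref{lem:group_edges} and the massaged and complete structure, so I would not grind through them.
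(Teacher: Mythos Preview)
Your proposal is correct and tracks the paper's proof closely: renormalise by $1/(1-{z''}_\bot^p)$, split $\mathcal{S}$ into $\mathcal{S}_1,\mathcal{S}_2$ by set size, sample conditionally on $Z_p$, and read off an integral matching from $w_{i,j}={z''}^p_{S,i,j}/{z''}^p_S$ using completeness and Constraints~(\ref{clp:11})--(\ref{clp:12}).

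The one place you diverge is the accounting for item~(\ref{out_demand}). The paper still keeps a decided/undecided split with threshold $t=\tfrac12$: only clients with ${x''}_{p,j}>\tfrac12$ are ever pre-assigned, and the two cases are bounded separately (decided via $\Pr[j\to]\ge\frac{{x''}_{p,j}-{z''}^p_\bot}{1-{z''}^p_\bot}$, undecided via $1\le 2(1-{x''}_{p,j})$). You instead attempt to match every client and use the uniform estimate $1-\Pr[j\to]\le\frac{1-x_{p,j}}{1-{z''}^p_\bot}=O(1)(1-x_{p,j})$, which is valid for all $j$ once ${z''}^p_\bot\le 2/\ell$ is in hand. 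This is a genuine (small) simplification: the threshold is a residue of the general-case argument and does no real work in the massaged setting. One cosmetic point: for item~(\ref{cost_of_pre}) the paper does not invoke the diameter of $p$ at all---since facilities are already co-located with their representatives, the ``local moving'' of unassigned demand to its representative costs at most $\sum_{i,j}d(i,j){x''}_{i,j}$ directly, which together with the pre-assignment bound $\frac{1}{1-{z''}^p_\bot}\sum_{i,j}d(i,j){x''}_{i,j}$ gives the $O(1)$ constant.
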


\begin{proof}
Define $\hat{z}_{S}^{p} := \frac{{z''}_{S}^{p}}{1-{z''}_{\bot}^{p}}$ and $\hat{z}_{S, i, j}^{p} := \frac{{z''}_{S, i, j}^{p}}{1-{z''}_{\bot}^{p}}$ for all sets $S \in \mathcal{S}, i \in U_p, j \in C$ and $\hat{z}_{\bot}^{p} = 0$. Observe that $\sum_{S \in \mathcal{S}} \hat{z}_{S}^{p} = 1$ by constraint (\ref{clp:6}). Recall that the sets with positive $\hat{z}_S^p$ value have cardinality either $\floor{\dot{y}_p}$ or $\ceil{\dot{y}_p}$. Define $\mathcal{S}_1 = \{S \in \mathcal{S}~|~\hat{z}_S^p > 0 \wedge |S| = \floor{\dot{y}_p}\}$ and $\mathcal{S}_2 = \{S \in \mathcal{S}~|~ \hat{z}_S^p > 0 \wedge |S| = \ceil{\dot{y}_p}\}$. If variable $Z_p$ has value 0, then we select a set from $\mathcal{S}_1$, otherwise if $Z_p = 1$ then we should select it from $\mathcal{S}_2$. To do this we proceed in the following way. If $Z_p = 0$ then we set $\bar{z}_S^p = \frac{\hat{z}_S^p}{1 - (\dot{y}_p - \floor{\dot{y}_p})}$ and $\bar{z}_{S, i, j}^p = \frac{\hat{z}_{S, i, j}^p}{1 - (\dot{y}_p - \floor{\dot{y}_p})}$ for all $S \in \mathcal{S}_1, i\in U_p, j \in C$ and $\bar{z}_S^p = 0$ and $\bar{z}_{S, i, j}^p = 0$ for $S \in \mathcal{S}_2, i \in U_p, j \in C$. Otherwise, if $Z_p = 1$ then $\bar{z}_S^p = \frac{\hat{z}_S^p}{\dot{y}_p - \floor{\dot{y}_p}}$ and $\bar{z}_{S, i, j}^p = \frac{\hat{z}_{S, i, j}^p}{\dot{y}_p - \floor{\dot{y}_p}}$ for all $S \in \mathcal{S}_2$ and $\bar{z}_S^p = 0$ and $\bar{z}_{S, i, j}^p = 0$ for $S \in \mathcal{S}_1, i \in U_p, j \in C$. Note that the expected value of $\bar{z}_S^p (\bar{z}_{S, i ,j}^p)$ is equal to $\hat{z}_S^p (\hat{z}_{S, i, j}^p)$ for each $S \in \mathcal{S}$. Moreover, $\bar{z}^p$ is a probability distribution over sets from $\mathcal{S}_1 \cup \mathcal{S}_2$, because $\dot{y}_p - \floor{\dot{y}_p} = \sum_{S \in \mathcal{S}_2} \hat{z}_S^p$ and $1 - (\dot{y}_p - \floor{\dot{y}_p}) = \sum_{S \in \mathcal{S}_1} \hat{z}_S^p$. This implies that property (\ref{exp_y'_p}) holds.

Now assume that we have selected a set $S \in \mathcal{S}_1 \cup \mathcal{S}_2$ according to this probability distribution. For each client $j \in C$ and facility $i \in S$ let $w_{i, j} = \frac{\hat{z}^p_{S, i ,j}}{\hat{z}_{S, i}} = \frac{{z''}^p_{S, i ,j}}{{z''}_{S, i}} = \frac{{z''}^p_{S, i ,j}}{{z''}_S^p}$, by constraint (\ref{clp:10}). Notice that $w_{i, j} \in [0,1]$, by constraint (\ref{clp:9}), $\sum_{i \in S} w_{i, j} \leq 1$ for every client $j \in C$ by constraint (\ref{clp:11}), and $\sum_{j \in C} w_{i,j} \leq u$ for every $i \in S$ by constraint (\ref{clp:12}). Since, the solution $\{x'', y'', z''\}$ is complete, hence  $w_{i, j} \in \{0,1\}$, for each client $j \in C$ and facility $i \in S$. Thus, $w$ defines an integral matching between $C$ and $S$ in which each client is matched at most once, and each facility from $S$ is matched at most $u$ times.

We will assign only "decided" clients, whose fractional assignment, to the black component $p$, is more than some threshold $t$ and were selected in the above described procedure. For all the other clients, we will send their demand to the root of the parent-group $G \ni p$. From Lemma~\ref{lem:group_edges} we know that the distance to the root is at most $O(\ell)d(J_p, R \setminus J_p)$.

For each $(i, j) \in S \cross C$ by $j \rightarrow i$ we denote the event that client $j$ is pre-assigned to facility $i$.

$$\mathrm{Pr}[j \rightarrow i] = \mathrm{Pr}[Z_p = 0]  \sum_{S \in \mathcal{S}_1, i \in S} \bar{z}_S^p \frac{{z''}_{S, i, j}^p}{{z''}_S^p} + \mathrm{Pr}[Z_p = 1]  \sum_{S \in \mathcal{S}_2, i \in S} \bar{z}_S^p \frac{{z''}_{S, i, j}^p}{{z''}_S^p} 
$$
$$
= \sum_{S \in \mathcal{S}_1 \cup \mathcal{S}_2, i \in S} \frac{{z''}_{S, i, j}^p}{1 - {z''}_{\bot}^p} \leq \frac{{x''}_{i,j}}{1 - {z''}_{\bot}^p}
$$ 

Now, we are ready to bound the total pre-assignment cost. Let's set a threshold $t = \frac{1}{2}$. If ${x''}_{p, j} > t$ then we say that client $j$ is decided, and not decided otherwise. Let $C''$ be the set of decided clients.

$$\mathrm{E}[\text{total pre-assignment cost}] \leq \sum_{j \in C'', i \in U_p} d(i, j) 
\mathrm{Pr}[j \rightarrow i]  \leq  \sum_{j \in C'', i \in U_p} \frac{d(i, j) {x''}_{i, j}}{1 - {z''}_{\bot}^p}$$


All the rest of the clients which were not assigned we will send to the root of a group via their respective representatives (each representative $v$ is co-located with facilities from $U_v$). The cost of sending ${x''}_{i,j}$ units of not pre-assigned demand to their representatives we can bound by ${x''}_{i,j}d(i,j)$. Notice that client can be either pre-assigned or sent to the root, so in the worst case we can bound the total cost of pre-assignment and moving to representatives by

$$\sum_{j \in C, i \in U_p} \frac{d(i, j){x''}_{i, j}}{1 - {z''}_{\bot}^p}$$

By $j \rightarrow$ we denote the event that client $j$ was assigned and $\Pr[j \rightarrow]$ the probability of this event. We will apply the following bound only for decided clients because ${x''}_{p, j} \geq \frac{2}{\ell} \geq {z''}_{\bot}^p$.
$$\mathrm{P}[j\rightarrow] \geq \frac{{x''}_{p, j} - {z''}_{\bot}^p}{1 - {z''}_{\bot}^p}$$ It might happen that a decided client $j$ is not assigned to his black component $p$. Then we send demand ${x''}_{p, j}$ to the root of his group. We upper bound the cost of this step in the following way.

\begin{eqnarray}
\label{ineq_cost_out_1}
\sum_{j \in C''} (1 - \mathrm{P}[j \rightarrow]) {x''}_{p, j} O(\ell) d(J_p, R \setminus J_p) 
\leq O(\ell) d(J_p, R \setminus J_p) \sum_{j \in C''} {x''}_{p, j}(1 - {x''}_{p, j}).&
\end{eqnarray}
Notice that decided client can be assigned in at most one black component. As we mentioned before, we will send all demand of not decided clients to the root. Hence, the cost for sending this demand is at most

\begin{eqnarray}
\label{ineq_cost_out_2}
O(\ell) d(J_p, R \setminus J_p) \sum_{j \in C \setminus C''} {x''}_{p, j} & \leq & O(\ell) d(J_p, R \setminus J_p) \sum_{j \in C \setminus C''} {x''}_{p, j} (1 - {x''}_{p, j})
\end{eqnarray}
where the inequality follows from the fact that client $j$ is not decided, so $\frac{1}{2} \leq 1 - x_{p, j}$. By summing up inequalities (\ref{ineq_cost_out_1}) and (\ref{ineq_cost_out_2}) and by the fact that ${x''}_{p, j} = x_{p, j}$ we can bound the cost of sending not assigned demand from the representatives to the root by $O(\ell) d(J_p, R \setminus J_p) \pi(J_p)$.
\qed
\end{proof}

\begin{lemma}
\label{lem:blue}
For any group $\mathcal{G}$, the total demand is at most $\sum_{p \in \mathcal{G} \setminus \{v_{\mathcal{G}}\}} u \dot{y}_p + (1+O(1/\ell)) u \dot{y}_{v_{\mathcal{G}}}$. 
\end{lemma}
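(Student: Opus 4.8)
The plan is to mirror the proof of Lemma \ref{lem:gen-blue}, but now exploiting the massaging step: since every black component $p$ (other than a singleton root component) is \emph{massaged}, all sets $S$ with $z_S^p > 0$ have size $\floor{\dot y_p}$ or $\ceil{\dot y_p}$, and hence $\sum_{S\in\mathcal{S}} |S|\, z_S^p = \dot y_p\,(1-z_\bot^p)$ exactly, with no $(1-1/(6\ell))$-type slack coming from the pre-assignment threshold. Recall from Lemma \ref{lem:pre-assign} that the threshold is now $t=\tfrac12$, so a decided client contributes demand at most $1 \le 2\,x''_{p,j}$; in the saturated case, however, I expect to argue more tightly using $|C|=ku$, so that the total demand handled by the non-$\bot$ part of $p$ is exactly $\sum_{S\in\mathcal{S}}\sum_{i\in S, j\in C} z_{S,i,j}^p$, and by Constraints (\ref{clp:12}) and (\ref{clp:10}) this is at most $\sum_{S\in\mathcal{S}} u|S| z_S^p = u\dot y_p(1-z_\bot^p) \le u\dot y_p$. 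This gives the non-blue part of the bound, i.e.\ the $\sum_{p\in\mathcal{G}\setminus\{v_\mathcal{G}\}} u\dot y_p$ term, with \emph{no} $(1+O(1/\ell))$ loss.

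Next I would handle the demand routed through the $\bot$ set, which is exactly the demand that gets sent to the blue (virtual) component $v_\mathcal{G}$. As in Lemma \ref{lem:gen-blue}, the blue opening shipped from $p$ to $v_\mathcal{G}$ is $\sum_{i\in U_p} z_{\bot,i}^p - \dot y_p z_\bot^p$, and the demand served by $\bot$ is $\sum_{i\in U_p,j\in C} z_{\bot,i,j}^p \le u\sum_{i\in U_p} z_{\bot,i}^p$ by Constraint (\ref{clp:12}). Using $\dot y_p \le 2\ell$ and Constraint (\ref{clp:13}) ($\sum_{i\in U_p} z_{\bot,i}^p \ge \ell_1 z_\bot^p = \ell^2 z_\bot^p$), the ratio $\dot y_p z_\bot^p / \sum_{i\in U_p} z_{\bot,i}^p \le 2/\ell$, so the demand served by $\bot$ across all of $\mathcal{G}\setminus\{v_\mathcal{G}\}$ is at most $u \sum_p \sum_i z_{\bot,i}^p \le u\,\dot y_{v_\mathcal{G}}/(1-2/\ell) = (1+O(1/\ell))\, u\,\dot y_{v_\mathcal{G}}$, which is exactly the blue term. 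Summing the non-blue bound over $p\in\mathcal{G}\setminus\{v_\mathcal{G}\}$ and adding the blue bound yields the statement.

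The one place needing care — and the main obstacle — is whether, in the saturated case, the pre-assignment really introduces \emph{no} multiplicative loss on the non-blue part. In Lemma \ref{lem:gen-blue} the loss came from decided clients whose demand $1$ was bounded by $x_{p,j}/(1-1/(6\ell))$; with the saturation assumption the relevant accounting is instead in terms of $z_{S,i,j}^p$ directly (a pre-assigned client corresponds to a matched edge of the integral matching $w$, each facility matched at most $u$ times), so I would phrase the demand bound per component as $\sum_{S\in\mathcal S,i\in S,j\in C} z_{S,i,j}^p + \sum_{i\in U_p,j\in C} z_{\bot,i,j}^p$ and split exactly along the $\mathcal S$ / $\bot$ line, applying Constraint (\ref{clp:12}) to each part separately. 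This avoids the $(1+O(1/\ell))$ factor on the $\mathcal S$-part and confines it to the $\bot$-part, giving the sharper form stated in the lemma. A minor subtlety to check is the singleton-root-component case, which the massage lemma excludes; but such a component is classified as virtual, so it is absorbed into the $v_\mathcal{G}$ term and the argument above still applies.
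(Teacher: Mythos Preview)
Your proposal is correct and follows essentially the same route as the paper: split the demand associated with each $p\in\mathcal{G}\setminus\{v_\mathcal{G}\}$ into the $\mathcal{S}$-part and the $\bot$-part, bound the $\mathcal{S}$-part by $u\dot y_p$ via Constraints~(\ref{clp:12}) and~(\ref{clp:10}), and bound the $\bot$-part by $(1+O(1/\ell))\,u\,\dot y_{v_\mathcal{G}}$ using Constraint~(\ref{clp:13}) together with $\dot y_p\le 2\ell$. One small remark: you attribute the exact identity $\sum_{S\in\mathcal{S}}|S|z_S^p=\dot y_p(1-z_\bot^p)$ to the massaging step, but this is just the definition of $\dot y_p$ and holds already for the original $z$; the paper's chain of equalities in fact walks back from $z''$ through $z'$ to $z$ precisely to exhibit this, so massaging is not the reason the $\mathcal{S}$-part carries no $(1+O(1/\ell))$ loss here.
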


\begin{proof}
First let us compare the opening value $\dot{y}_p z_{\bot}^{p}$ to the original opening $\sum_{i \in U_p} z_{\bot, i}^{p}$.
\begin{eqnarray}
\label{ineq:blue_opening}
\frac{\dot{y}_p z_{\bot}^{p}}{\sum_{i \in U_p} z_{\bot, i}^{p}} \leq \frac{2 \ell z_{\bot}^{p}}{\ell^2 z_{\bot}^{p}} = \frac{2}{l}
\end{eqnarray} 
The inequality follows from the fact that $\dot{y}_p \leq 2\ell$ and from Constraint~(\ref{clp:13}).
By adding Constraint~(\ref{clp:12}) for set $\bot$ over all $i \in U_p$, we get $\sum_{i \in U_p}z_{\bot, i}^{p}u \geq \sum_{i \in U_p, j \in C}z_{\bot, i, j}^{p}$. Consider the demand served by the $\bot$ set in the initial LP solution and the blue opening sent from $p$ to $v_{\mathcal{G}}$ which is equal to $\sum_{i\in U_p} z_{\bot, i}^{p} - \dot{y}_p z_{\bot}^{p}$. Now we show that the ratio between this demand and the capacity corresponding to the blue opening is at most $(1+O(1/\ell))$. Note that the demand served by $\bot$ remains the same in all the modified solutions.
$$\frac{\sum_{i \in U_p, j \in C}z_{\bot, i, j}^{p}}{u \left( \sum_{i \in U_p} z_{\bot, i}^{p} - \dot{y}_{p} z_{\bot}^p \right)} \stackrel{Const. (\ref{clp:12})} \leq \frac{u \sum_{i \in U_p} z_{\bot, i}}{u \left(\sum_{i \in U_p} z_{\bot, i}^{p} - \dot{y}_{p}z_{\bot}^p\right)} = \frac{1}{1-\frac{ \dot{y}_{p}z_{\bot}^p}{\sum_{i \in U_J} z_{\bot, i}^{p}}} 
\stackrel{Ineq. (\ref{ineq:blue_opening})} \leq \frac{1}{1-2/\ell} = 1+O(1/\ell)$$

The following calculations show that the capacity violation on $v_{\mathcal{G}}$ w.r.t. to the total demand served by $\bot$ set of each $U_p : p \in \mathcal{G} \setminus \{v_{\mathcal{G}}\}$ and the blue opening sent from all the black components of that group is at most $(1+O(1/\ell))$.
\begin{eqnarray}
\label{ineq:blue_opening_2}
\frac{\sum_{p \in \mathcal{G} \setminus \{v_{\mathcal{G}}\}} \sum_{i \in U_p, j \in C}z_{\bot, i, j}^{p}}{\sum_{p \in \mathcal{G} \setminus \{v_{\mathcal{G}}\}} u \left( \sum_{i \in U_p} z_{\bot, i}^{p} - \dot{y}_{p}z_{\bot}^p \right)} \leq
 \max_{p \in \mathcal{G} \setminus \{v_{\mathcal{G}}\}} \frac{\sum_{i \in U_p, j \in C}z_{\bot, i, j}^{p}}{\sum_{i \in U_p} z_{\bot, i}^{p} - \dot{y}_{p}z_{\bot}^p} \leq 1+O(1/\ell)
\end{eqnarray}
 
Now we compare the rest of the demand $\sum_{S \in \mathcal{S}, i \in S, j \in C} \hat{z}_{S, i, j}^p$ 
to the rest of the opening $\dot{y}_p$ for each black component $p \in \mathcal{G} \setminus \{v_{\mathcal{G}}\}$ and show that this demand is at most $u \dot{y}_p$. 
$$
\sum_{S \in \mathcal{S}, i \in S, j \in C} \hat{z}_{S, i, j}^p = \frac{1}{1-{z''}_{\bot}^p}\sum_{S \in \mathcal{S}, i \in S, j \in C} {z''}_{S, i, j}^p \stackrel{Cons.  (\ref{clp:12})} \leq \frac{1}{1-{z''}_{\bot}^p}\sum_{S \in \mathcal{S}, i \in S} u {z''}_{S, i}^p = $$ 
$$\frac{1}{1-{z'}_{\bot}^p}\sum_{S \in \mathcal{S}, i \in S} u {z'}_{S, i}^p = \frac{1}{1-z_{\bot}^p}\sum_{S \in \mathcal{S}, i \in S} u {z}_{S, i}^p \stackrel{Cons. (\ref{clp:10})}= \frac{1}{1-z_{\bot}^p}\sum_{S \in \mathcal{S}} u |S| {z}_{S}^p = u \dot{y}_p
$$

Adding the above inequalities for each $p \in \mathcal{G} \setminus \{v_\mathcal{G}\}$ and the inequality (\ref{ineq:blue_opening_2}) proves the lemma.
\qed
\end{proof}

\begin{lemma}
	\label{lem:massage_ineq}
	Consider a black component $p \in \Upsilon$ and its set of representatives $J_p$. We have $$\sum_{S \in \mathcal{S} : |S|\leq \dot{y}_p}{z'}_S^p u (\dot{y}_p - |S|) \leq 2\pi(J_p).$$ 
\end{lemma}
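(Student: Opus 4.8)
The plan is to read the left-hand side as (twice) a mean absolute deviation: under full saturation every configuration $S\in\mathcal{S}$ of $U_p$ serves exactly $u|S|$ units of demand, so the left-hand side measures how far \emph{below} the average served demand the ``small'' configurations sit, and this in turn is controlled client by client by $\pi(J_p)$. (Throughout we assume $z^p_{\bot}<1$, otherwise both sides vanish.)

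First I would record the saturation identities. Since $|C|=ku$, summing Constraint~(\ref{blp:capacity}) over all facilities and comparing with Constraints~(\ref{blp:k}) and~(\ref{blp:connect}) forces $\sum_{j\in C}x_{i,j}=u\,y_i$ for every facility $i$; applying the same counting to the configuration-LP variables of $U_p$ (via Constraints~(\ref{clp:7}), (\ref{clp:8}), (\ref{clp:12})) forces $\sum_{j\in C}z^p_{S,i,j}=u\,z^p_{S,i}$ for every $S\in\tilde{\mathcal{S}}$, $i\in S$, and together with Constraint~(\ref{clp:10}) this gives $\sum_{i\in S,\,j\in C}z^p_{S,i,j}=u|S|\,z^p_S$ for every $S\in\mathcal{S}$ with $z^p_S>0$. (I write $z$ for $z'$: the facility moving of Lemma~\ref{lem:big_movement} and the completeness splitting only relabel or split facilities and leave every $x_{i,j},y_i$ unchanged, so these identities survive.) Put $\beta=z^p_{\bot}$ and, for $S\in\mathcal{S}$, $\hat{z}_S=z^p_S/(1-\beta)$; then $\{\hat{z}_S\}_{S\in\mathcal{S}}$ is a probability distribution with $\sum_S \hat{z}_S|S|=\dot{y}_p$. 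For $j\in C$ put $a_{S,j}=\frac{1}{z^p_S}\sum_{i\in S}z^p_{S,i,j}\in[0,1]$ (for $z^p_S>0$) and $a_{\bot,j}=\frac{1}{\beta}\sum_{i\in U_p}z^p_{\bot,i,j}\in[0,1]$; with $\hat{x}_{p,j}=\sum_{S\in\mathcal{S}}\hat{z}_S\,a_{S,j}$, Constraint~(\ref{clp:8}) gives the convex decomposition $x_{p,j}=(1-\beta)\hat{x}_{p,j}+\beta a_{\bot,j}$, while $\sum_{j\in C}a_{S,j}=u|S|$ gives $\sum_{j\in C}\hat{x}_{p,j}=u\sum_S\hat{z}_S|S|=u\,\dot{y}_p$.

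Next comes the algebraic core. Call the left-hand side $L$. Since $\sum_{S\in\mathcal{S}}z^p_S=1-\beta$ and $\sum_{S\in\mathcal{S}}z^p_S|S|=(1-\beta)\dot{y}_p$, we get $\sum_{S\in\mathcal{S}}z^p_S\,u(\dot{y}_p-|S|)=0$, hence $L=\sum_{S\in\mathcal{S}:\,|S|>\dot{y}_p}z^p_S\,u(|S|-\dot{y}_p)$; so, letting $\mathrm{E}$ denote expectation over $S$ drawn from $\hat{z}$,
\[
2L=\sum_{S\in\mathcal{S}}z^p_S\,u\bigl|\,|S|-\dot{y}_p\,\bigr|=(1-\beta)\,\mathrm{E}\Bigl[\,\bigl|\textstyle\sum_{j}a_{S,j}-\sum_{j}\hat{x}_{p,j}\bigr|\,\Bigr]\ \le\ (1-\beta)\sum_{j\in C}\mathrm{E}\bigl[\,|a_{S,j}-\hat{x}_{p,j}|\,\bigr],
\]
using $u|S|=\sum_j a_{S,j}$, $u\dot{y}_p=\sum_j\hat{x}_{p,j}$ and the triangle inequality over clients. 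It is then enough to bound each per-client term: for any $[0,1]$-valued random variable $Y$ with mean $\mu$ one has $\mu-Y\le\mu(1-Y)$, hence $\mathrm{E}[|Y-\mu|]=2\,\mathrm{E}[(\mu-Y)^{+}]\le 2\mu\,\mathrm{E}[1-Y]=2\mu(1-\mu)$; with $Y=a_{S,j}$ this yields $2L\le 2(1-\beta)\sum_{j\in C}\hat{x}_{p,j}(1-\hat{x}_{p,j})$. (If preferred, the completeness of $\{x,y,z\}$ makes $a_{S,j}\in\{0,1\}$ and this step becomes an equality.)

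Finally I would discount the $\bot$-event: concavity of $t\mapsto t(1-t)$ applied to the convex combination $x_{p,j}=(1-\beta)\hat{x}_{p,j}+\beta a_{\bot,j}$ gives $x_{p,j}(1-x_{p,j})\ge(1-\beta)\hat{x}_{p,j}(1-\hat{x}_{p,j})+\beta a_{\bot,j}(1-a_{\bot,j})\ge(1-\beta)\hat{x}_{p,j}(1-\hat{x}_{p,j})$; summing over $j$ gives $(1-\beta)\sum_j\hat{x}_{p,j}(1-\hat{x}_{p,j})\le\pi(J_p)$, so $2L\le 2\pi(J_p)$, i.e.\ $L\le\pi(J_p)\le 2\pi(J_p)$ --- the claimed bound, with a factor $2$ to spare. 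The step I expect to be the main obstacle is precisely this last passage from the conditional distribution $\hat{z}$ (over which the mean-absolute-deviation computation is clean) back to the unconditional $x_{p,j}$ that define $\pi(J_p)$; the concavity step is what makes it lossless. A secondary point needing care is that the saturation tightness $\sum_j z^p_{S,i,j}=u\,z^p_{S,i}$ must not be destroyed by the facility-moving and completeness preprocessing, which is immediate since neither step alters any $x_{i,j}$ or $y_i$.
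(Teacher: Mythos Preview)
Your proof is correct and shares the same backbone as the paper's---use saturation to rewrite $u|S|$ and the target average as client sums, apply the triangle inequality over $j\in C$, and bound each client's contribution by $x_{p,j}(1-x_{p,j})$. The difference lies in how the $\bot$-event is handled. The paper first replaces $\dot{y}_p$ by the larger $y_p$ (so that $u\,y_p\,{z'}_S^p={z'}_S^p\sum_j x_{p,j}$ holds directly with the unconditional $x_{p,j}$), then invokes completeness to split each client's term according to whether $x'_{S,j}\in\{0,{z'}_S^p\}$ and bounds the two pieces separately by $x_{p,j}(1-x_{p,j})$. You instead keep $\dot{y}_p$, use the mean-zero identity $\sum_{S\in\mathcal{S}}{z'}_S^p(\dot{y}_p-|S|)=0$ to rewrite $2L$ as a full mean absolute deviation under the conditional law $\hat{z}$, bound it per client by the general inequality $\mathrm{E}|Y-\mu|\le 2\mu(1-\mu)$ for $[0,1]$-valued $Y$, and only at the end pass back to the unconditional $x_{p,j}$ via the concavity of $t\mapsto t(1-t)$. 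Your route is a little cleaner (it needs neither $y_p\ge\dot{y}_p$ nor the completeness splitting) and actually yields the sharper bound $L\le\pi(J_p)$; the paper's argument is more direct but lands exactly on $2\pi(J_p)$.
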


\begin{proof}
$$\sum_{S \in \mathcal{S} : |S|\leq \dot{y}_p}{z'}_S^p u (\dot{y}_p - |S|) \leq \sum_{S \in \mathcal{S}}{z'}_S^p u |y_p - |S|| = $$
    
$$\sum_{S \in \mathcal{S}} | {z'}_S^p \sum_{j \in C}x_{p,j} - \sum_{j \in C}{x'}_{S,j}| \leq
\sum_{j \in C} \sum_{S \in \mathcal{S}} |{z'}_S^p(x_{p,j} - \frac{{x'}_{S,j}}{{z'}_S^p})| =$$

$$\sum_{j \in C} \big[ \sum_{S \in \mathcal{S} : {x'}_{S, j} = 0} {z'}_S^p x_{p, j} + \sum_{S \in \mathcal{S} : {x'}_{S, j} = {z'}_S^p} {z'}_S^p(1 - x_{p, j}) \big] \leq 2 \sum_{j \in C} x_{p,j} (1 - x_{p,j}) = 2 \pi(J_p)$$
\qed
\end{proof}

If $|C| = ku$ then we can improve the connection cost of our algorithm. We omit the proof of the following theorem, because it is similar to the proof of the Theorem \ref{thm:general}.

\begin{theorem}
\label{thm:case}
There is a bi-factor randomized rounding algorithm for hard uniform capacitated $k$-median problem, for $|C| = ku$, with $O(1/\epsilon)$-approximation under $1+\epsilon$ capacity violation.
\end{theorem}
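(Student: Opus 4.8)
\textbf{Proof proposal for Theorem~\ref{thm:case}.} The plan is to mirror the proof of Theorem~\ref{thm:general} step by step, substituting the saturated-case lemmas for their general-case counterparts and tracking that every bound now loses only one factor of $\ell$ rather than two. First I would run exactly the same preprocessing as in the general case: solve the basic LP, build the clustering of Claim~\ref{claim:clustering}, construct the MST, colour its edges white/grey/black, contract the black components into the forest $\Upsilon$, and group the components of each tree into root-, internal- and leaf-groups as in Observation~\ref{obs:group_vol}. Then apply Lemma~\ref{lem:big_movement} to collocate every facility of $U_v$ with its representative $v$, at cost $O(\ell)\mathbf{LP}$, and split sets and make copies (Lemma~1 of \cite{spliting_svir}) so that the solution is complete with respect to every black component in the sense of Definition~\ref{complete_solution}.

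Next comes the only genuinely new ingredient: apply the massage procedure of Lemma~\ref{lem:massage} to turn every non-singleton-root black component $p$ into a massaged component, i.e.\ one in which every set $S$ with $z_S^p>0$ has $|S|\in\{\floor{\dot y_p},\ceil{\dot y_p}\}$, at cost $O(\ell)\mathbf{LP}$; the singleton root components are treated as virtual components and simply opened to the decided cardinality. With massaged components in hand, replace Lemma~\ref{lem:gen-pre-assign} by Lemma~\ref{lem:pre-assign}: for each $p$, given a Bernoulli $Z_p$ with mean $\dot y_p-\floor{\dot y_p}$, pre-open a set of size $\floor{\dot y_p}+Z_p$, pre-assign the decided clients (now those with $x''_{p,j}>1/2$), and route the remaining demand to the accumulator at the root of the parent group. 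The gain is that Lemma~\ref{lem:pre-assign} bounds the out-demand cost by $O(\ell)\,d(J_p,R\setminus J_p)\pi(J_p)$ and the pre-assignment/local-moving cost by $O(1)\sum_{i,j}d(i,j)x''_{i,j}$, each an $\ell$-factor better than Lemma~\ref{lem:gen-pre-assign}. I would also replace the blue-opening bound by Lemma~\ref{lem:blue}, which charges the $(1+O(1/\ell))$ capacity slack only to the blue volume $\dot y_{v_\mathcal{G}}$ rather than to the whole group, because in the saturated case the non-$\bot$ demand served by $p$ equals exactly $u\dot y_p$.

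After that the rest of the argument is literally the general-case pipeline. Run the two-phase dependent rounding of Sections~\ref{dp_sec}--\ref{rounding-children} (rounding among children-groups ordered by distance to the parent root, then a final arbitrary round), which preserves $\sum_p\dot y_p$ and hence opens exactly $k$ facilities while keeping Observations~\ref{obs:prefix_opening} and~\ref{obs:prefix_opening_2}; bound the push-to-accumulator cost by the saturated analogue of Lemma~\ref{lem:flows_cost}, namely $\sum_p O(\ell)\,d(J_p,R\setminus J_p)\pi(J_p)\le O(\ell)\mathbf{LP}$ using Lemma~4.5 of \cite{Li16non_uniform}; then run the greedy pulling-back process of Section~\ref{pulling_back_sec} and the frozen-demand redistribution of Section~\ref{frozen_dist}, invoking Lemmas~\ref{lem:acc_leftover}, \ref{lem:gen-pull_cost}, \ref{lem:back_to_facilities}, \ref{lem:acc_distribution} and~\ref{lem:acc_cost} essentially verbatim to show that each demand unit travels in these phases at most as far as it travelled in the rounding phase, so the total movement cost is again $O(\ell)\mathbf{LP}$. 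Summing, the connection cost is $O(\ell)\mathbf{LP}=O(1/\epsilon)\mathbf{LP}$, while Observation~\ref{obs:cap_violation} together with the extra $O(1/\ell)u$ load per facility from Section~\ref{frozen_dist} gives capacity violation $1+O(1/\ell)$, which becomes $1+\epsilon$ after choosing $\ell=\Theta(1/\epsilon)$.

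The main obstacle I expect is the massage step and its interaction with everything downstream. Verifying Lemma~\ref{lem:massage} uses the saturation hypothesis $|C|=ku$ in an essential way: it is what guarantees that a larger set $S_1$ serves $2u\,z'^p_{S_1}$ more (fractional) demand than a smaller set $S_2$, hence that there are always enough clients in $C'$ to swap so as to make the facility $i_1$ clean before moving it, and the cost bound then hinges on Lemma~\ref{lem:massage_ineq} combined with Li's estimate $\sum_p d(J_p,R\setminus J_p)\pi(J_p)=O(\mathbf{LP})$. One then has to re-check that feeding a \emph{massaged} solution, and using the coarser threshold $t=1/2$ of Lemma~\ref{lem:pre-assign}, into Lemma~\ref{lem:blue} and into the greedy-pulling lemmas still yields the $1+O(1/\ell)$ violation; this is the most delicate bookkeeping, since a decided client can now send as little as a half of its demand, so one must confirm that the per-facility load from pre-assignment never exceeds $u$ and that each accumulator always holds enough residual demand for Lemma~\ref{lem:acc_leftover}. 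All the distance estimates, by contrast, reduce to Lemma~\ref{lem:group_edges} and Lemma~\ref{lem:tau_properties} exactly as in the general case.
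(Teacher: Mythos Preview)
Your proposal is correct and follows essentially the same approach as the paper. The paper in fact omits the proof of Theorem~\ref{thm:case} entirely, stating only that it is similar to the proof of Theorem~\ref{thm:general} with Lemmas~\ref{lem:gen-pre-assign} and~\ref{lem:gen-blue} replaced by Lemmas~\ref{lem:massage}--\ref{lem:blue}; your write-up is a faithful and more detailed elaboration of exactly this substitution, correctly tracking that the $O(\ell^2)$ bounds drop to $O(\ell)$ via Lemma~\ref{lem:pre-assign} and hence the final approximation ratio becomes $O(1/\epsilon)$.
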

\end{document}